\documentclass[twoside,11pt]{article}
%

%
%
%

\usepackage{jmlr2e}
\usepackage{amsmath}
\usepackage{float}

\usepackage{algorithm}
\usepackage{algorithmicx}
\allowdisplaybreaks[4]
\usepackage{algpseudocode}
\renewcommand{\algorithmicrequire}{\textbf{Input:}}  
\renewcommand{\algorithmicensure}{\textbf{Output:}} 

%


\ShortHeadings{Decentralized Policy Gradient  Nash Equilibria Learning of Stochastic Games}{Chen and Li}
\firstpageno{1}

\begin{document}

\title{Decentralized Policy Gradient for Nash Equilibria Learning  of General-sum Stochastic Games}

\author{\name  Yan Chen \email YanChen@stu.ecnu.edu.cn \\
       \addr School of Mathematical Sciences\\
       East China Normal University\\
       No. 5005, South Lianhua Road,  Shanghai 200241, China
       \AND
       \name  Tao Li \email tli@math.ecnu.edu.cn\\
       \addr School of Mathematical Sciences\\
       East China Normal University\\
       No. 5005, South Lianhua Road,  Shanghai 200241, China}
\maketitle

\begin{abstract}
We study Nash equilibria learning of a general-sum stochastic game with  an unknown transition probability density
function. Agents take actions at  the current environment state  and their joint action influences the transition of the environment state and their
immediate rewards. Each agent only observes the environment state
and its own immediate reward and is  unknown about the actions or immediate rewards of others. We introduce the concepts of weighted asymptotic Nash equilibrium with probability $1$ and in probability. For the case with exact pseudo gradients, we design a two-loop algorithm  by the equivalence of Nash equilibrium  and  variational inequality problems.
 In the outer loop, we sequentially update a constructed strongly monotone variational inequality by updating a proximal parameter while employing a single-call extra-gradient algorithm in the inner loop for solving the constructed variational inequality. We show that if the associated Minty variational inequality has a solution, then the designed algorithm  converges to the $k^{\frac{1}{2}}$-weighted asymptotic  Nash equilibrium.  Further, for the case with  unknown pseudo gradients, we propose a decentralized algorithm, where the G(PO)MDP gradient estimator of the pseudo gradient is provided by  Monte-Carlo  simulations. The convergence to the  $k^{\frac{1}{4}}$-weighted asymptotic Nash equilibrium in probability is achieved.
\end{abstract}

\begin{keywords}
   Stochastic Game, Policy Gradient, Nash Equilibrium, Multi-agent Reinforcement Learning, Variational Inequality
\end{keywords}

\section{Introduction}

In a Markov decision process,  an agent aims at finding a policy to maximize its own expectation of cumulative discounted immediate rewards. At any given environment state, an agent chooses a policy and takes an action. Then, the agent gains an immediate reward  and the action causes the environment a transition to the next state. In a multi-agent scenario, the decision-making of an agent is affected  not only by the environment state, but also by the behaviors of other agents. Game theory studies how multiple agents make decisions when they interact directly and dynamically, and how these decisions reach an equilibrium. Motivated by which, Shapley came up with the framework of stochastic games ~\citep{Shapley:53}, also known as Markov games ~\citep{Littman:94}, which allows people to extend the Markov decision process to the case with multiple agents. In a stochastic game, each agent take its action independently and both state transitions of the environment and immediate rewards  depend on agents' actions. Quantitative classical algorithms, such as iterated algorithms ~\citep{Shapley:53,Hoffman:66} and mathematical programming ~\citep{FilalJA91covering,KVrieze}, have been proposed for finding Nash equilibria of general-sum stochastic games when agents know the transition probability density function of the environment state. For the case with the unknown state transition probability density function, \cite{Littman:94} introduced multi-agent reinforcement learning as a framework and applied Q-Learning to a simple two-player zero-sum  stochastic game. Since then,  multi-agent reinforcement learning algorithms, like Q-Learning ~\citep{Christopher} and Actor-Critic ~\citep{Konda}, have been widely used. These algorithms allow agents to act while learning  if  the transition probability density of the environment state  is unknown.  \cite{Hu:98} proposed a Q-Learning algorithm of two-player general-sum stochastic games, where they constructed a bimatrix game based on the current estimation of action-value function and computed the Nash equilibrium for the bimatrix game in each iteration. It is shown that the Q-Learning algorithm is convergent  if every bimatrix game arising from the learning process has a global optimum point or a saddle point and agents
update the estimations of action-value functions according to values at this point.
 \cite{Hu2} extended the result to a multi-agent context and designed an improved algorithm: Nash Q-Learning (NashQ).
 \cite{Littman2} presented Friend or Foe Q-Learning (FFQ), where agents assume its opponent as either a friend or foe, which can be considered as an extension of NashQ. They showed that FFQ learns the action-value function under a Nash equilibrium if there exists an adversarial equilibrium or coordination equilibrium of the stochastic game. \cite{Greenwald03covering} generalized NashQ and FFQ
and introduced Correlated Q-Learning. They demonstrated the convergence by simulations but didn't give the theoretical proof. \cite{Prasad15covering} designed an actor-critic algorithm for Nash equilibria of general-sum stochastic games from the view of dynamic programming. The critic updates the state-value functions and the actor performs gradient descent on policies. They established that the algorithm converges to a  Nash equilibrium asymptotically.
 \cite{Perolat} built a stochastic approximation for a
fictitious play process using an Actor-Critic algorithm. They proved the convergence of the method towards a Nash
equilibrium for both cases with  two-player zero-sum
 and cooperative (that is, when players receive the same immediate rewards) stochastic games.

All algorithms mentioned above generated by Q-Learning and Actor-Critic are based on the estimations of action-value functions. They learn the estimations of action-value functions and then choose their actions according to the estimations. It's impossible to design  policies without the estimations of action-value functions  for these algorithms ~\citep{Richard}. In particular, these algorithms  generated by Q-Learning encounter the computation difficulty of Nash equilibria of stage games which makes the algorithms implementation more difficult. At the same time, these algorithms need huge tables to store the estimations of action-value functions. So they can't deal with large action spaces or a continuum of action spaces. To this end,  \cite{Richard} introduced policy gradient methods. Firstly, policies of each agent are parameterized and then policy parameters are updated according to the gradient of  state-value functions with respect to  parameters. Policy gradient methods learn parameterized policies directly and action selections no longer depend directly on action-value functions. At this point,  action-value functions can still be used to learn the parameters of policies but are unnecessary for action selections. By virtue of this advantage, policy gradient methods provide a practical way to handle the stochastic games where the action space is massive or even a continuum.
At the same time, policy gradient methods search directly in the parameter space. As a result, they enjoy better theoretical convergence guarantees
~\citep{Yang,Zhang,Agarwal}. Commonly used classes of  parameterized policies include direct parameterization, $\alpha$-greedy direct parameterization, Gaussian parameterization, softmax parameterization,  log-linear parameterization and so on. In recent years, with the great success of the theoretical research on policy gradient methods for a Markov decision process, policy gradient methods with different classes of  parameterized policies are also applied to zero-sum and general-sum stochastic games.
For a two-agent zero-sum stochastic game,  \cite{Daskalakis} focused on  $\alpha$-greedy  direct parameterization and used the REINFORCE gradient estimator of the policy gradient. They showed that if
 agents descend and ascend along the gradients of total reward functions with respect to their parameters respectively, their policies converge to a min-max equilibrium of the game, as long as their learning rates follow a two-timescale rule.
\cite{Zhao} considered softmax parameterization. They found a minimax equilibrium for
the matrix game constructed by the estimations of state-value functions and then
performed natural policy gradient descent to update the estimations  in each iteration.
They proved the algorithm can find a near-optimal policy. \cite{Wei} used direct parameterization and updated the policies of two agents by running an optimistic gradient descent or ascent algorithm  with a critic that slowly learns the state-value function of each state. They showed that the algorithm converges to the set of Nash equilibria if the induced discounted Markov chain under any stationary policies is irreducible. For general-sum stochastic games,  \cite{ Stefanos} defined the notion of Markov potential games by transplanting potential games into the setting of Markov games. They took $\alpha$-greedy direct parameterization and replaced
 the actual gradients of total reward  functions with respect to parameters with  REINFORCE gradient estimators. The convergence to the $\epsilon$-Nash equilibrium of the gradient ascent algorithm of parameters is given.   \cite{Runyu} considered direct parameterization and showed that Nash equilibria and first-order  Nash equilibria of general-sum stochastic games are equivalent. They gave the  rate of converging  to strict Nash equilibria if agents perform gradient descent for the case with  known gradients of state-value functions with respect to policy parameters. Also, They  gave
global convergence rates for both exact gradients and gradients estimated by samples of Markov potential games.  \cite{Mao} proposed a decentralized algorithm in which each agent independently uses an
optimistic V-learning and performs a  mirror descent for policy updating with direct  parameterization. Their algorithm converges to  a coarse correlated
equilibrium, a solution concept that generalizes Nash equilibrium by allowing possible correlations among the agents' policies.

It is worth noting that the results of ~\cite{Daskalakis}, ~\cite{Stefanos} and ~\cite{Runyu} depend on the fact that one can find  Nash equilibria by the first-order necessary optimality conditions of  total reward functions with respect to the parameter of each player if  the total reward functions satisfy the gradient domination theorem. Inspired by the above research, ~\cite{Daskalakis}, ~\cite{Stefanos} and ~\cite{Runyu} find  Nash equilibria of zero-sum stochastic  games and  Markov potential games with unknown transition probability density functions. While for general-sum stochastic games, ~\cite{Runyu} is restricted to the case where the state and action spaces are finite, the gradients of total reward   functions with respect to parameters are known to each agent, and policies are with direct parameterization. In this paper, we study learning Nash equilibria of a general-sum stochastic game with an unknown transition probability density function. The joint actions of agents influence the state transition of the environment and their
immediate rewards. Each agent only observes  states
and their immediate rewards and is unaware of the actions or rewards of  other agents. Compared with ~\cite{Daskalakis}, ~\cite{Stefanos} and ~\cite{Runyu}, we study the general-sum stochastic game in which the state and action spaces are compact and convex and the transition probability density function is unknown
to all agents. Focusing on the equivalence between  Nash equilibrium and  variational inequality problems, we propose algorithms for learning Nash equilibria. We introduce the concepts of weighted asymptotic Nash equilibrium with probability $1$ and in probability and illuminate the connection between these concepts. It is shown that the algorithms  converge to the  weighted asymptotic Nash equilibrium for the case of exact gradients and the  weighted asymptotic Nash equilibrium in probability  for the case with unknown gradients respectively. ~\cite{Stefanos} and ~\cite{Runyu}  updated policy parameters by gradient ascent and they showed the convergence of the algorithm using non-convex optimization by the existing of potential functions in Markov potential games. While for the general-sum stochastic games, there are no longer potential functions, thus, one  can not ensure the convergence  if  updating policy parameters by gradient ascent. Different from ~\cite{Stefanos} and ~\cite{Runyu}, we consider  the variational inequality problem which is equivalent to the Nash equilibrium problem and design a two-loop algorithm. For the case with exact pseudo gradients, we design a two-loop algorithm in which  we sequentially update a constructed strongly monotone variational inequality in the outer loop by updating a proximal parameter and  employ a single-call extra-gradient algorithm  in the inner loop for solving the constructed variational inequality. As a consequence, it is possible to employ the  variational inequality to establish the convergence to the
 $k^{\frac{1}{2}}$-weighted asymptotic  Nash equilibrium of our algorithm if  the related Minty variational inequality has  a solution.
 For the case with unknown pseudo gradients, \cite{Daskalakis} assumes that agents negotiate learning rates at the beginning of a  zero-sum  stochastic game, which leads to an incompletely decentralized algorithm. While it's unnecessary to negotiate learning rates in advance, which cuts down the communication cost and therefore our algorithm is completely decentralized. Agents estimate  pseudo gradients by interacting with the environment in the scenario of unknown pseudo gradients. The G(PO)MDP gradient estimator with a single trajectory has a high variance. Therefore, different from ~\cite{Daskalakis},
 we make use of the average of multiple trajectories. When agents interact with the environment, although the unbiased G(PO)MDP estimator of the pseudo gradient samples from an infinite time horizon, the Monte Carlo simulations are not feasible to sample in an infinite time horizon, so we adopt the G(PO)MDP estimator of a finite time horizon ~\citep{Tianyi,SLu}. The errors between the estimated pseudo gradient and the real pseudo gradient are analysed and we establish the convergence to the
 $k^{\frac{1}{4}}$-weighted asymptotic  Nash equilibrium in probability.

 The remainder of this paper is organized as follows. In Section II, the stochastic game problem is formulated. In Section III, we present the equivalence between  Nash equilibrium  and variational inequality problems
 and the existence of Nash equilibrium. In Section IV, we propose the algorithm for learning Nash equilibria and analyse the convergence of the algorithm for the case with  exact pseudo gradients. In Section V, for the case with  unknown pseudo gradients,
 the algorithm for learning Nash equilibria is given by introducing the G(PO)MDP gradient estimator and the convergence of the algorithm is showed. In Section VI, numerical examples are given to illustrate our algorithms. In Section VII, conclusion and future research topics are given.

The following notations will be used throughout this paper: For a given vector $x=(x_{i})_{i}$, $\|x\|=\sqrt{\sum_{i} x_{i}^{2}}$ denotes its Euclidean norm and $|x|=\sum_{i} |x_{i}|$ denotes its $L^{1}$-norm. $\mathbb{E}\left[X\right]$ denotes the expectation of stochastic variable $X$. $\left\langle \cdot, \cdot \right\rangle$ denotes the inner product of vectors in Euclidian space. $\mathbf{0}$ denotes vector $0$. $\delta(\cdot)$ denotes Dirac delta function.

\section{Problem Formulation}
A stochastic game is an extension of a  Markov decision process of a single agent. A stochastic game is denoted by a tuple $\Gamma=(\mathcal{N},\mathcal{S},${\Large$\times$}$ _{i=1}^{N}\mathcal{A}_{i},\rho,(r_{i})_{i=1}^{N},\gamma)$, where
\begin{itemize}
\item[(i)]$\mathcal{N}=\{1,2, \ldots, N\}$ is the set of agents and its cardinality is $N$;
\item[(ii)]$\mathcal{S}$ is the environment state space which is observed by all agents and  is compact in $ \mathbb{R}^{q}$, where $q$ is a positive integer;
\item[(iii)]{\Large$\times$}$ _{i=1}^{N}\mathcal{A}_{i}$ is the joint action space, where $\mathcal{A}_{i}$ is the action space of agent $i$, which is compact in $\mathbb{R}^{m_{i}}$, where $m_{i}$ is a positive integer;
\item[(iv)]$\rho_{t,t+1}(s^{\prime}|s,a)$ is the conditional transition probability density function of the environment state from $s(t)=s$ to $s(t+1)=s^{\prime}$ if agents take a joint action $a(t)=a=(a_{i})_{i=1}^{N}$ at time $t$, that is,
\begin{equation}\nonumber
\begin{aligned}
&\rho_{t,t+1}\left(s^{\prime} \mid s, a\right) \geqslant 0,\forall\  s, s^{\prime} \in \mathcal{S}, \forall\  a \in \mathcal{A},\int_{\mathcal{S}} \rho_{t,t+1}\left(s^{\prime} \mid s, a\right) \mathrm{d} s^{\prime}=1,
\end{aligned}
\end{equation}
and, especially, the conditional transition probability density function degenerates to a Dirac delta function if the state space is discrete, and so without loss of generality, we only focus on a continuum of state space;
\item[(v)]$r_{i}(s,a)$ is the immediate reward function of agent $i$ if the environment state is $s$ and the joint action $a$ is chosen by the agents;
\item[(vi)]$\gamma \in (0,1)$ is the
discount factor that describes the influence of the rewards obtained
in the future on the agents' policy \citep{Prasad15covering}.
\end{itemize}

For any given state $s$, a policy of agent $i$ is defined as a conditional probability density function over $\mathcal{A}_{i}$. The admissible policy set of agent $i$ is defined as
$$
\Pi_{i} \triangleq \left\{\pi_{i}(\cdot \mid s), \forall\ s \in \mathcal{S} \biggm| \int_{\mathcal{A}_{i}} \pi_{i}(a_{i}^{\prime} \mid s)\
 \mathrm{d} a_{i}^{\prime}=1,\ \pi_{i}(a_{i} \mid s) \geq 0, \forall  \ a_{i} \in \mathcal{A}_{i}\right\}.
$$
We assume that the policies of all agents are stationary, that is, the policies are independent of time. The set of joint policies of all agents is denoted by $\Pi=${\Large$\times$}$ _{i=1}^{N}\Pi_{i}$. A joint policy is denoted by $\pi\left(a \mid s\right)=\prod_{i=1}^{N} \pi_{i}\left( a_{i}\mid s\right)$, $\ a=(a_{i})_{i=1}^{N} \in \mathcal{A}$, $s \in \mathcal{S}$.

Let $\rho^{\pi}$ denote the Markov kernel of the Markov chain induced by the policy $\pi$, that is, for any state $s(t)=s$ at  time $t$ and any state $s(t+1)=s^{\prime}$ at time $t+1$, it follows that
$\rho^{\pi}_{t,t+1}\left(s^{\prime} \mid s\right)=\int_{\mathcal{A}} \rho_{t,t+1}\left(s^{\prime} \mid s, a\right) \pi(a|s) \mathrm{d}a$.
We denote the probability density function of the initial state $s(t)$ by $\rho_{t}$.

We consider more generalized  policy parameterization with stochastic parameters. Let stochastic parameter $\theta_{i}$ be a stochastic variable on some probability space $(\Omega,{\cal F},P)$ taking values in $\Theta_{i} \subseteq \mathbb{R}^{d_{i}}$, where $d_{i}$ is a positive integer. Let $\theta=\left(\theta_{i}\right)_{i=1}^{N} \in \Theta=${\Large$\times$}$ _{i=1}^{N}\Theta_{i}$$\subseteq \mathbb{R}^{\sum_{i=1}^{N}
d_{i}}$. Let $\pi_{\theta}(a \mid s)=\prod_{i=1}^{N}
\pi_{\theta_{i}}(a_{i} \mid s)$ denote the joint policy.

The state-value function $V_{i}^{\pi_{\theta}}(s)$ of agent $i$ is defined as the conditional expectation of the discounted sum of immediate rewards starting from the initial state $s(t)=s$ at time $t$ by choosing actions according to the policy $\pi_{\theta}$, that is,
\begin{align}\label{valuefunction}
V_{i}^{\pi_{\theta}}(s,t)=\mathbb{E}\left[\sum_{l=t}^{\infty} \gamma^{l-t} r_{i}\left(s\left(l\right),a\left(l\right) \right)\biggm|s(t)=s, \theta\right] .
\end{align}

The total reward function $J_{i}^{\pi_{\theta}}(t)$ of agent $i$ is defined as the conditional expectation of the  discounted sum of immediate rewards  starting from the initial state $s(t)$  at time $t$ with the probability density function $\rho_{t}$ by choosing actions according to the policy $\pi_{\theta}$, that is,
\begin{align}\label{totalreward}
J_{i}^{\pi_{\theta}}(t)=\mathbb{E}\left[
\sum_{l=t}^{\infty} \gamma^{l-t} r_{i}\left(s\left(l\right),a\left(l\right) \right)\biggm|\theta\right].
\end{align}

It is known that $J_{i}^{\pi_{\theta}}(t)$ is a Borel measurable function of $\theta$. For notational convenience, we denote $V_{i}^{\pi_{\theta}}(s,t)$ by $V_{i}^{\theta}(s,t)$, $J_{i}^{\pi_{\theta}}(t)$ by $J_{i}^{\theta}(t)$ and $\rho^{\pi_{\theta}}$ by $\rho^{\theta}$.

At each time $l=t, t+1,\ldots$, agents take an action $a(l) \in \mathcal{A}$ according to the policy $\pi_{\theta}$ given the current state $s(l) \in S$ observed by all agents. Then, agent $i$ gains an immediate reward $r_{i}(s(l),a(l))$, $i=1,2,\ldots, N,$ and the state transitions to the next state
$s(l+1) \in S$. Each agent aims at maximizing its total reward function.

In our model, we assume that agent $i$ observes its own immediate reward and is completely unknown about the rewards and actions of other agents. The transition probability density function is unknown to all agents.

We study  learning Nash equilibria of the general-sum stochastic game with parameterized policies. The Nash equilibrium of the game is defined as follows.

\begin{definition}
\citep{NashJr} (Nash equilibrium)\label{Nashdf}
For the game $\Gamma$, if there exists $\theta^{*}=(\theta_{i}^{*})_{i=1}^{N}\\ \in \Theta$ such that
$$
\sup _{i \in \mathcal{N}}\Big(\sup _{\theta_{i} \in \Theta_{i}}J_{i}^{(\theta_{i}, \theta_{-i}^{*})}(t)-J_{i}^{(\theta_{i}^{*}, \theta_{-i}^{*})}(t) \Big) \leqslant 0,
$$
then $(\pi_{\theta_{i}^{*}})_{i=1}^{N}$ is called a Nash equilibrium.

\end{definition}

\begin{definition}
\citep{CDaskalakis} ($\epsilon$-Nash equilibrium)
For the game $\Gamma$ and for any given $\epsilon >0$, if there exists $\theta^{*}=(\theta_{i}^{*})_{i=1}^{N} \in \Theta$ such that
$$
\sup _{i \in \mathcal{N}}\Big(\sup _{\theta_{i} \in \Theta_{i}}J_{i}^{(\theta_{i}, \theta_{-i}^{*})}(t)-J_{i}^{(\theta_{i}^{*}, \theta_{-i}^{*})}(t) \Big) \leqslant \epsilon,
$$
then  $(\pi_{\theta_{i}^{*}})_{i=1}^{N}$ is called an $\epsilon$-Nash equilibrium.
\end{definition}

The iterative output of the algorithm to learn a Nash equilibrium is often a sequence of random variables. Then we introduce the following concepts of weighted asymptotic Nash equilibrium with probability $1$ and  in probability .

\begin{definition}\label{nashweightapp}
For the game $\Gamma$ and for a given random parameter sequence $\{\theta_{k}=(\theta_{i,k})_{i=1}^{N}$, $k \geqslant 1\}$,
if there exists a positive sequence $\{\gamma_{k}, k \geqslant 1\}$ and a nonnegative random sequence $\{\epsilon_{k}, k \geqslant 1\}$, satisfying $\sum_{k=1}^{\infty}\gamma_k=\infty$ and
\begin{equation}\nonumber
\begin{aligned}
\sup_{i \in \mathcal{N}}\frac{\sum_{k=1}^{K}\gamma_{k}\Big(\sup \limits _{\theta_{i} \in \widetilde{\Theta_{i}}}J_{i}^{(\theta_{i}, \theta_{-i,k})}(t)-J_{i}^{(\theta_{i,k}, \theta_{-i,k})}(t)\Big)}{\sum_{k=1}^{K}\gamma_{k}}\leqslant
\epsilon_{K},  \ K=1,2,...,\ a.s., \\
 for \ any \ given \ \text{countable set}\ \widetilde{\Theta_{i}} \subseteq \Theta_{i},
\end{aligned}
\end{equation}
 then the sequence of policies $\big((\pi_{\theta_{i,k}})_{i=1}^{N}\big)
_{k=1}^{\infty}$ is called a $\gamma_{k}$-weighted $\epsilon_{k}$-Nash equilibrium  with probability $1$.
\end{definition}
\begin{definition}\label{nashalmost2}
For the game $\Gamma$ and for a given random parameter sequence $\{\theta_{k}=(\theta_{i,k})_{i=1}^{N}$, $k \geqslant 1\}$, if there exists a positive sequence $\{\gamma_{k}, k \geqslant 1\}$, satisfying $\sum_{k=1}^{\infty}\gamma_k=\infty$ and
\begin{equation}\nonumber
\begin{aligned}
\lim_{K\to\infty}\sup_{i \in \mathcal{N}}\frac{\sum_{k=1}^{K}\gamma_{k}\Big(\sup \limits _{\theta_{i} \in \widetilde{\Theta_{i}}}J_{i}^{(\theta_{i}, \theta_{-i,k})}(t)-J_{i}^{(\theta_{i,k}, \theta_{-i,k})}(t)\Big)}{\sum_{k=1}^{K}\gamma_{k}}=0 \ a.s., \\
  for \ any \ given \ \text{countable set} \ \widetilde{\Theta_{i}} \subseteq \Theta_{i},
\end{aligned}
\end{equation}
then the sequence of policies $\big((\pi_{\theta_{i,k}})_{i=1}^{N}\big)
_{k=1}^{\infty}$ is called a $\gamma_{k}$-weighted asymptotic Nash equilibrium  with probability $1$.
\end{definition}
\begin{definition}\label{nashprobability}
For the game $\Gamma$ and for a given random parameter sequence  $\{\theta_{k}=(\theta_{i,k})_{i=1}^{N}$, $k \geqslant 1\}$, if there exists a positive sequence $\{\gamma_{k}, k \geqslant 1\}$, satisfying $\sum_{k=1}^{\infty}\gamma_k=\infty$, and for any $\delta\in (0,1]$, $\epsilon>0$, there exists $K_0>0$ such that
\begin{equation}\nonumber
\begin{aligned}
P\left\{\sup_{i \in \mathcal{N}}\frac{\sum_{k=1}^{K}\gamma_{k}\Big(\sup \limits _{\theta_{i} \in \widetilde{\Theta_{i}}}J_{i}^{(\theta_{i}, \theta_{-i,k})}(t)-J_{i}^{(\theta_{i,k}, \theta_{-i,k})}(t)\Big)}{\sum_{k=1}^{K}\gamma_{k}} \leqslant \epsilon\right\}\geqslant 1-\delta,  \ K\geq K_0,\\
  for \ any \ given \ \text{countable set} \ \widetilde{\Theta_{i}} \subseteq \Theta_{i},
\end{aligned}
\end{equation}
then the sequence of  policies $\big((\pi_{\theta_{i,k}})_{i=1}^{N}\big)
_{k=1}^{\infty}$ is called a  $\gamma_{k}$-weighted asymptotic Nash equilibrium  in probability.
\end{definition}

\begin{remark}
Since $\Theta_{i}$ is an uncountable set, $\sup \limits_{\theta_{i} \in \Theta_{i}}J_{i}^{(\theta_{i}, \theta_{-i,k})}(t)$ may be not a random variable anymore. To avoid making complex separability assumptions on $J_{i}^{(\theta_{i}, \theta_{-i,k})}(t)$, we only focus on the supremum of $J_{i}^{(\theta_{i}, \theta_{-i,k})}(t)$
over any countable subset of $\Theta_{i}$ in Definitions $\ref{nashweightapp}$-$\ref{nashprobability}$.
Especially, if the random parameter sequence $\{\theta_{i,k}, k \geqslant 1\}$ is deterministic, then $\widetilde{\Theta_{i}}$ can be replaced by $\Theta_{i}$ in the above definitions.
\end{remark}

It's easy to prove the following theorem which implies the connection of the above three definitions.

\begin{theorem}\label{weightuninoweight}
For the game $\Gamma$, if there exists a random sequence $\{\epsilon_{k}, k \geqslant 1\}$
such that  $\big((\pi_{\theta_{i,k}})_{i=1}^{N}\big)
_{k=1}^{\infty}$ is a $\gamma_{k}$-weighted $\epsilon_{k}$-Nash equilibrium with probability $1$ and $\{\epsilon_{k}, k\geqslant 1\}$ tends to zero with probability  $1$ (in probability), then $\big((\pi_{\theta_{i,k}})_{i=1}^{N}\big)
_{k=1}^{\infty}$ is a  $\gamma_{k}$-weighted asymptotic Nash equilibrium with probability $1$ (in probability).
\end{theorem}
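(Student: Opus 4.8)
The plan is to treat the statement as a squeeze/comparison between the random Nash-gap ratio and the nonnegative controlling sequence $\{\epsilon_k\}$, handling the two modes of convergence separately. Throughout I would fix an arbitrary countable set $\widetilde{\Theta_i}\subseteq\Theta_i$ (the conclusions in Definitions~\ref{nashalmost2} and~\ref{nashprobability} are quantified over all such sets, so it suffices to argue for one fixed set), and abbreviate
$$
A_K \;=\; \sup_{i \in \mathcal{N}}\frac{\sum_{k=1}^{K}\gamma_{k}\big(\sup_{\theta_{i} \in \widetilde{\Theta_{i}}}J_{i}^{(\theta_{i}, \theta_{-i,k})}(t)-J_{i}^{(\theta_{i,k}, \theta_{-i,k})}(t)\big)}{\sum_{k=1}^{K}\gamma_{k}} .
$$
Because $\widetilde{\Theta_i}$ is countable and each $J_i^{(\theta_i,\theta_{-i,k})}(t)$ is a Borel measurable function of its parameters, $A_K$ is a genuine random variable, and the hypothesis that the policies form a $\gamma_k$-weighted $\epsilon_k$-Nash equilibrium with probability $1$ (Definition~\ref{nashweightapp}) supplies, on a common full-probability event, the upper bound $A_K\le\epsilon_K$ for every $K$.

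For the with-probability-$1$ statement I would work on the intersection of the event just described with the full-probability event $\{\epsilon_K\to 0\}$; this intersection again has probability $1$ since it is a countable intersection. On this event, passing to the limit in $A_K\le\epsilon_K$ gives $\limsup_{K\to\infty}A_K \le \lim_{K\to\infty}\epsilon_K = 0$. Pairing this with the matching lower bound discussed below then yields $\lim_{K\to\infty}A_K = 0$, which is exactly the $\gamma_k$-weighted asymptotic Nash equilibrium with probability $1$ of Definition~\ref{nashalmost2}.

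For the in-probability statement I would exploit that the target (Definition~\ref{nashprobability}) is one-sided, which shortens the argument considerably. Given $\epsilon>0$ and $\delta\in(0,1]$, convergence in probability of $\epsilon_K$ together with $\epsilon_K\ge 0$ furnishes a $K_0$ with $P\{\epsilon_K\le\epsilon\}\ge 1-\delta$ for all $K\ge K_0$. The a.s. inequality $A_K\le\epsilon_K$ gives the inclusion $\{\epsilon_K\le\epsilon\}\subseteq\{A_K\le\epsilon\}$ up to a null set, whence $P\{A_K\le\epsilon\}\ge P\{\epsilon_K\le\epsilon\}\ge 1-\delta$ for $K\ge K_0$, which is precisely the required conclusion.

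The bookkeeping (measurability of the countable supremum, and the countable intersection of full-probability events so that the bound holds simultaneously for all $K$) is routine, so the only point that deserves genuine care is the lower bound needed to upgrade $\limsup_K A_K\le 0$ into the two-sided $\lim_K A_K = 0$ in the with-probability-$1$ case. This is where one must argue that the weighted average of the per-agent best-response gaps, maximized over agents, is asymptotically nonnegative, and it is the step I expect to be the main obstacle; indeed it is exactly what distinguishes the two-sided almost-sure claim from the in-probability claim, whose definition asks only for the one-sided bound $A_K\le\epsilon$ and therefore needs nothing beyond the comparison $A_K\le\epsilon_K$ and the convergence of $\epsilon_K$.
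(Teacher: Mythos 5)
The paper offers no proof of this theorem at all (it is dismissed as ``easy to prove''), so your proposal has to be judged on its own terms rather than against a reference argument. Your treatment of the in-probability claim is complete and correct: the almost-sure one-sided comparison $A_K\leqslant\epsilon_K$ together with $P\{\epsilon_K\leqslant\epsilon\}\to 1$ yields $P\{A_K\leqslant\epsilon\}\geqslant 1-\delta$ for all large $K$, which is exactly Definition~\ref{nashprobability}. Your derivation of $\limsup_{K\to\infty}A_K\leqslant 0$ almost surely (on the countable intersection of the full-probability events) is likewise correct.

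The step you leave open --- the lower bound $\liminf_{K\to\infty}A_K\geqslant 0$ needed to meet Definition~\ref{nashalmost2}'s literal requirement that the limit \emph{equal} zero --- is a genuine issue, but it is not one a cleverer argument can repair: for an arbitrary countable $\widetilde{\Theta_i}$ the summands $\sup_{\theta_i\in\widetilde{\Theta_i}}J_i^{(\theta_i,\theta_{-i,k})}(t)-J_i^{(\theta_{i,k},\theta_{-i,k})}(t)$ need not be nonnegative (take $\widetilde{\Theta_i}$ to be a singleton far from any best response), so your $A_K$ can remain bounded away from $0$ from below and the two-sided limit simply fails. The lower bound is automatic only when the countable set is rich enough that $\sup_{\theta_i\in\widetilde{\Theta_i}}J_i^{(\theta_i,\theta_{-i,k})}(t)\geqslant J_i^{(\theta_{i,k},\theta_{-i,k})}(t)$, e.g.\ when $\theta_{i,k}\in\widetilde{\Theta_i}$, or when $\widetilde{\Theta_i}$ is dense and $J_i$ is continuous in $\theta_i$ so that the countable supremum coincides with the supremum over $\Theta_i$. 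Since the two sibling definitions (\ref{nashweightapp} and \ref{nashprobability}) are both one-sided, and the paper's own applications of this theorem (Theorem~\ref{exactgradient} via Lemma~\ref{exactlemma}, Theorem~\ref{pseudogradient} via Lemma~\ref{lemmapseudo}) only ever establish upper bounds, the ``$=0$'' in Definition~\ref{nashalmost2} is evidently intended as ``$\limsup\leqslant 0$''; under that reading your argument is already complete, and the obstacle you anticipate dissolves rather than needing to be overcome. You should state this reading explicitly (or add the hypothesis that $\widetilde{\Theta_i}$ contains, or is dense around, the iterates) rather than leave the lower bound as a pending step.
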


The following theorem illustrates that if a sequence of policies is a $\gamma_{k}$-weighted  $\epsilon_{k}$-Nash equilibrium of the game $\Gamma$ with probability $1$, then a random subsequence of the original sequence is an $\epsilon_{k}$-Nash equilibrium in expectation.

\begin{theorem}\label{connectionofne}
For the game $\Gamma$, if $\big((\pi_{\theta_{i,k}})_{i=1}^{N}\big)
_{k=1}^{\infty}$ is a $\gamma_{k}$-weighted $\epsilon_{k}$-Nash equilibrium with probability $1$ and for any $K=1,2,...$, there exist random variables $\tau_{K} \in \{1, \ldots, K\}$ such that $P(\tau_{K}=k)=\frac{\gamma_{k}}{\sum_{k=1}^{K}\gamma_{k}}$ and $\{\tau_{k}, k\geqslant 1\}$ is independent of $\{\theta_{k}=(\theta_{i,k})_{i=1}^{N}, k \geqslant 1\}$, then
$\sup\limits _{i \in \mathcal{N}}
\mathbb{E}\Big[\sup \limits_{\theta_{i}\in \widetilde{\Theta_{i}}}J_{i}^{(\theta_{i}, \theta_{-i,\tau_K})}(t)-J_{i}^{(\theta_{i,\tau_K}, \theta_{-i,\tau_K})}(t)\Big|\theta_{k},k=1,2,...,K\Big] \leqslant \epsilon_{K}$, where $\widetilde{\Theta_{i}}$ is any countable subset of $\Theta_{i}$.
\end{theorem}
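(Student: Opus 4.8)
The plan is to reduce the claim to a one-line application of Definition~\ref{nashweightapp}. The key observation is that once we condition on $\theta_1,\dots,\theta_K$, the only randomness remaining inside the expectation is the index $\tau_K$; since $\{\tau_k,k\geq 1\}$ is independent of $\{\theta_k,k\geq 1\}$ and $P(\tau_K=k)=\gamma_k/\sum_{j=1}^{K}\gamma_j$, the conditional expectation collapses to exactly the $\gamma_k$-weighted average of the per-step regrets that appears on the left-hand side of the inequality defining a $\gamma_k$-weighted $\epsilon_k$-Nash equilibrium. From there the hypothesis does all the work.

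Concretely, I would carry out the following steps. First, fix $i\in\mathcal{N}$ and a countable set $\widetilde{\Theta_i}\subseteq\Theta_i$, and abbreviate $g_{i,k}:=\sup_{\theta_i\in\widetilde{\Theta_i}}J_i^{(\theta_i,\theta_{-i,k})}(t)-J_i^{(\theta_{i,k},\theta_{-i,k})}(t)$; as a countable supremum of Borel measurable functions of $\theta$ minus a Borel measurable function, $g_{i,k}$ is a measurable function of $\theta_k$, and in particular $g_{i,\tau_K}$ is a well-defined random variable. Second, since $\tau_K$ is $\sigma(\{\tau_k,k\geq 1\})$-measurable and this $\sigma$-algebra is independent of $\sigma(\theta_1,\dots,\theta_K)$, the substitution rule (freezing lemma) for conditional expectations yields
\[
\mathbb{E}\big[g_{i,\tau_K}\,\big|\,\theta_k,\,k=1,\dots,K\big]=\sum_{k=1}^{K}P(\tau_K=k)\,g_{i,k}=\frac{\sum_{k=1}^{K}\gamma_k\,g_{i,k}}{\sum_{k=1}^{K}\gamma_k}\qquad\text{a.s.}
\]
Third, take the supremum over $i\in\mathcal{N}$: because $\mathcal{N}$ is finite, the supremum of finitely many almost-sure identities is again an almost-sure identity, and the right-hand side is $\leq\epsilon_K$ a.s.\ by the defining inequality of a $\gamma_k$-weighted $\epsilon_k$-Nash equilibrium with probability $1$ (Definition~\ref{nashweightapp}) applied to the sets $\widetilde{\Theta_i}$. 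Since $\widetilde{\Theta_i}$ was an arbitrary countable subset of $\Theta_i$, this is precisely the assertion of the theorem.

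I do not expect a genuine obstacle: the argument is essentially Fubini's theorem together with the definition. The only points needing a little care are (i) the measurability of the countable suprema $g_{i,k}$ — which is exactly why Definitions~\ref{nashweightapp}--\ref{nashprobability} are phrased with a countable $\widetilde{\Theta_i}$ rather than all of $\Theta_i$, as noted in the remark — and (ii) a clean justification of the freezing step, i.e.\ that conditioning on $\sigma(\theta_1,\dots,\theta_K)$ does not alter the law of $\tau_K$; this is immediate from the assumed independence of the sequences $\{\tau_k\}$ and $\{\theta_k\}$.
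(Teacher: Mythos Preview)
Your proposal is correct and follows essentially the same route as the paper: both arguments reduce the conditional expectation to the $\gamma_k$-weighted average via the independence of $\tau_K$ from $\{\theta_k\}$, then invoke Definition~\ref{nashweightapp}. The paper arrives at the identity by inserting the tower property and expanding $\mathbb{E}[\,\cdot\mid\tau_K,\theta_1,\dots,\theta_K]$ as a sum over $\{\tau_K=k\}$, whereas you invoke the freezing lemma in one step; the content is the same, and your explicit remark on the measurability of the countable supremum $g_{i,k}$ is a welcome touch that the paper leaves implicit.
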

\begin{proof}
See Appendix B.
\end{proof}

\section{ Existence of Nash Equilibrium}
In this section, we will show that the policy parameterization satisfies a gradient
domination theorem and will prove the equivalence between Nash equilibrium  and variational inequality problems. Then the existence of  Nash equilibrium is established.

We start with the following assumptions on policy parameters and immediate rewards.

\begin{assumption}\label{assumption0}
The conditional probability density function of the initial state for any given $\theta \in \Theta$ is independent of $\theta$, that is, $\rho_{t}(s\mid\theta)=\rho_{t}(s)$.
\end{assumption}

\begin{assumption}\label{assumption1}
There exists $U_{R} >0$ such that $\sup\limits _{i \in \mathcal{N},\ (s,a) \in \mathcal{S} \times \mathcal{A}}|r_{i}(s, a)| \leqslant U_{R}$.  $\Theta_{i} \subseteq \mathbb{R}^{d_{i}}$ is a nonempty compact convex set. By the compactness of $\Theta_{i}$, we may assume there exists $D_{i}>0$ such that $ \left\|\theta_{i}\right\| \leqslant \sqrt{2} D_{i}$. $\pi_{\theta_{i}} \left(a_{i} \mid s\right)$ is concave and continuously differentiable with respect to $\theta_{i} \in  \Theta_{i}$.
\end{assumption}

In particular, direct parameterization, $\alpha$-greedy direct parameterization and Gaussian parameterization under some conditions satisfy Assumption \ref{assumption1}.

\begin{assumption}\label{assumption2}
For any $i \in \mathcal{N}$, the policy  $\pi_{\theta_{i}}$ satisfies
the following conditions:\\
$\nabla_{\theta_{i}} \log \pi_{\theta_{i}}(a_{i} \mid s)$ exists, $\forall \ s \in \mathcal{S}$, $\forall \ a_{i} \in \mathcal{A}_{i}$, and there exist $L_{\Theta} > 0$ and $B_{\Theta}>0$ such that
\begin{align}
&\left\|\nabla_{\theta_{i}} \log \pi_{\theta^{1}_{i}}(a_{i} \mid s )-\nabla_{\theta_{i}} \log \pi_{\theta^{2}_{i}}(a_{i} \mid s)\right\| \leqslant L_{\Theta} \left\|\theta^{1}_{i}-\theta^{2}_{i}\right\|,\forall\ \theta^{1}_{i}, \ \theta^{2}_{i} \in \Theta_{i}, \forall \ s \in \mathcal{S}, \forall \ a_{i} \in \mathcal{A}_{i};\notag\\
& \sup _{\theta_{i} \in \Theta_{i}, \ s \in \mathcal{S},\ a_{i} \in \mathcal{A}_{i}}\left\|\nabla_{\theta_{i}} \log \pi_{\theta_{i}}(a_{i} \mid s)\right\| \leqslant B_{\Theta}\notag.
\end{align}
\end{assumption}

\begin{remark}
Some commonly used parameterized policies such as Gaussian policy under some conditions satisfy Assumptions \ref{assumption1}-\ref{assumption2}. For Gaussian policy, $$\pi_{\theta_{i}}(a_{i} \mid s)=\frac{1}{\sigma \sqrt{2\pi}} \exp\left(-\frac{(a_{i}-\phi^{\top}_{i}(s) \theta_{i})^{2}}{2 \sigma^{2}}\right),$$
where $\phi_{i}^{\top}(s) \theta_{i}$ is the mean of actions, $\sigma^{2}$ is  the variance, and $\phi_{i}(s) \in \mathbb{R}^{d_{i}}$ is the feature vector to approximate the mean action at the state $s$, if the following conditions are satisfied:
(i) $\mathcal{A}$ is bounded; (ii) $\sup \limits_{s \in \mathcal{S}}\left\|\phi_{i}(s)\right\| < \infty$; (iii) $\sup \limits _{i \in \mathcal{N},\ s \in \mathcal{S},\ a_{i} \in \mathcal{A}_{i},\ \theta_{i} \in \Theta_{i}}\left\|\phi^{\top}_{i}(s) \theta_{i}-a_{i}\right\| \leqslant \sigma$; (iv) $\Theta_{i}$ is compact and convex in $\mathbb{R}^{d_{i}}$,  then Assumptions \ref{assumption1}-\ref{assumption2} hold.
\end{remark}

The action-value function $Q_{i}^{\pi_{\theta}}(s,a,t)$ of agent $i$ is defined as the conditional expectation of the  discounted sum of immediate rewards  starting from the initial state $s(t)=s$ and the joint action $a(t)=a$ at time $t$  by choosing actions according to the policy $\pi_{\theta}$, that is,
\begin{equation}\label{qvaluefunction}
Q_{i}^{\pi_{\theta}}(s,a,t)=r_{i}(s,a,t)+\mathbb{E}\left[\sum_{l=t+1}^{\infty} \gamma^{l-t} r_{i}\left(s\left(l\right),a\left(l\right) \right) \bigg| s\left(t\right)=s,a\left(t\right)=a, \theta \right].
\end{equation}

We denote $Q_{i}^{\pi_{\theta}}(s,a,t)$ by $Q_{i}^{\theta}(s,a,t)$.

The relationship between the state-value function and the action-value function is given as
\begin{align}
&V_{i}^{\theta}(s,t)=\int_{\mathcal{A}} Q^{\theta}_{i}(s,a,t)\pi_{\theta}(a\mid s) \mathrm{d} a,\label{valuerelation}\\
&Q_{i}^{\theta}(s,a,t)=r_{i}(s,a,t)+\gamma \int_{\mathcal{S}}  V_{i}^{\theta}(s^{\prime},t+1)\rho_{t,t+1}(s^{\prime} \mid s,a) \mathrm{d}s^{\prime}.\label{valuerelation1}
\end{align}
(\ref{valuefunction}) and (\ref{qvaluefunction}) demonstrate that although $V_{i}^{\theta}(s,t)$ and $Q_{i}^{\theta}(s,a,t)$ are marked out the initial time $t$, $V_{i}^{\theta}(s,t)$ and
$Q_{i}^{\theta}(s,a,t)$ are dependent on the initial state rather than the initial time.

By Assumptions \ref{assumption0}-\ref{assumption2}, (\ref{valuefunction}), (\ref{totalreward}), (\ref{valuerelation}) and (\ref{valuerelation1}), we have the following proposition.

\begin{proposition}\label{policygradient}
If Assumptions \ref{assumption0}-\ref{assumption2} hold, then
\begin{align}\nonumber
\nabla_{\theta_{i}}  J_{i}^{\theta}(t)=\frac{1}{1-\gamma} \int_{\mathcal{S} \times \mathcal{A}} d_{\rho_{t}}^{\theta}(s^{\prime})  \nabla_{\theta_{i}} \pi_{\theta_{i}}(a_{i} \mid s^{\prime})  \pi_{\theta_{-i}}(a_{-i} \mid s^{\prime}) Q_{i}^{\theta}(s^{\prime}, a,t+1) \mathrm{d} s^{\prime} \mathrm{d} a,
\end{align}
where
\begin{equation}\nonumber
\rho_{t,l}^{\theta}\big(s^{\prime} \mid s\big)
= \begin{cases}\delta(s^{\prime}-s), & l=t, \\ \int_{\mathcal{S}} \rho_{l-1,l}^{\theta}\big(s^{\prime} \mid s^{\prime \prime}\big) \rho_{t,l-1}^{\theta}\big(s^{\prime \prime} \mid s\big) \mathrm{d} s^{\prime \prime}, & l \geqslant t+1,
 \end{cases}
\end{equation}
$\rho_{t,l}^{\theta}\big(s^{\prime} \mid s\big)=\rho_{t,l}^{\theta}\big(s(l)=s^{\prime} \mid s(t)=s\big)$
and
$
d_{\rho_{t}}^{\theta}(s^{\prime})=(1-\gamma) \int_{ \mathcal{S}}\sum_{l=t}^{\infty} \gamma^{l-t} \rho_{t,l}^{\theta}\big(s^{\prime} \mid s\big) \rho_{t}(s)\mathrm{d}s
$ is called the probability density function of the discounted state distribution induced by $\pi_{\theta}$.
\end{proposition}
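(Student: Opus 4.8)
The plan is to follow the route of the classical policy gradient theorem: differentiate the value relation (\ref{valuerelation}) in $\theta_{i}$, observe that this yields a Bellman-type recursion for $\nabla_{\theta_{i}}V_{i}^{\theta}(\cdot,t)$, unroll that recursion, and then average over the initial state using Assumption \ref{assumption0}. The only genuine work is analytic: showing that $V_{i}^{\theta}$ is differentiable in $\theta_{i}$ and that $\nabla_{\theta_{i}}$ may be exchanged with the state/action integrals and with the infinite sum defining $V_{i}^{\theta}$.

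First I would collect the uniform bounds that drive every interchange. From Assumption \ref{assumption1}, $|r_{i}|\leqslant U_{R}$ gives $|Q_{i}^{\theta}(s,a,t)|\leqslant U_{R}/(1-\gamma)$ for all $s,a,t,\theta$. From Assumption \ref{assumption2}, $\nabla_{\theta_{i}}\pi_{\theta_{i}}(a_{i}\mid s)=\pi_{\theta_{i}}(a_{i}\mid s)\nabla_{\theta_{i}}\log\pi_{\theta_{i}}(a_{i}\mid s)$ together with $\|\nabla_{\theta_{i}}\log\pi_{\theta_{i}}\|\leqslant B_{\Theta}$ and $\int_{\mathcal{A}_{i}}\pi_{\theta_{i}}(a_{i}\mid s)\,\mathrm{d}a_{i}=1$ give $\int_{\mathcal{A}}\|\nabla_{\theta_{i}}\pi_{\theta_{i}}(a_{i}\mid s)\|\pi_{\theta_{-i}}(a_{-i}\mid s)\,\mathrm{d}a\leqslant B_{\Theta}$. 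Setting $h_{i}^{\theta}(s,t):=\int_{\mathcal{A}}\nabla_{\theta_{i}}\pi_{\theta_{i}}(a_{i}\mid s)\pi_{\theta_{-i}}(a_{-i}\mid s)Q_{i}^{\theta}(s,a,t)\,\mathrm{d}a$, this yields $\|h_{i}^{\theta}(s,t)\|\leqslant B_{\Theta}U_{R}/(1-\gamma)$ uniformly.

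Next I would differentiate (\ref{valuerelation}) using $\pi_{\theta}(a\mid s)=\pi_{\theta_{i}}(a_{i}\mid s)\pi_{\theta_{-i}}(a_{-i}\mid s)$ and the fact that in (\ref{valuerelation1}) neither $r_{i}$ nor $\rho_{t,t+1}$ depends on $\theta$; collapsing $\int_{\mathcal{A}}\pi_{\theta}(a\mid s)\rho_{t,t+1}(s'\mid s,a)\,\mathrm{d}a=\rho_{t,t+1}^{\theta}(s'\mid s)$ gives the recursion
\begin{equation}\nonumber
\nabla_{\theta_{i}}V_{i}^{\theta}(s,t)=h_{i}^{\theta}(s,t)+\gamma\int_{\mathcal{S}}\rho_{t,t+1}^{\theta}(s'\mid s)\nabla_{\theta_{i}}V_{i}^{\theta}(s',t+1)\,\mathrm{d}s'.
\end{equation}
To legitimize differentiating under the integral and inside the series---and to establish that $\nabla_{\theta_{i}}V_{i}^{\theta}$ exists at all---I would first prove this identity for the finite-horizon truncations $V_{i,T}^{\theta}$, which are finite sums of integrals of bounded $C^{1}$ integrands, and then let $T\to\infty$, using that the truncations and their $\theta_{i}$-derivatives converge uniformly in $\theta$ by the bounds above. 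Iterating the recursion and invoking the composition rule defining $\rho_{t,l}^{\theta}$ in the statement yields
\begin{equation}\nonumber
\nabla_{\theta_{i}}V_{i}^{\theta}(s,t)=\sum_{l=t}^{\infty}\gamma^{l-t}\int_{\mathcal{S}}\rho_{t,l}^{\theta}(s'\mid s)h_{i}^{\theta}(s',l)\,\mathrm{d}s',
\end{equation}
the remainder $\gamma^{l-t}\int_{\mathcal{S}}\rho_{t,l}^{\theta}(s'\mid s)\nabla_{\theta_{i}}V_{i}^{\theta}(s',l)\,\mathrm{d}s'$ vanishing since $\|\nabla_{\theta_{i}}V_{i}^{\theta}\|\leqslant B_{\Theta}U_{R}/(1-\gamma)^{2}$ uniformly.

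Finally, by (\ref{totalreward}) and (\ref{valuefunction}), $J_{i}^{\theta}(t)=\int_{\mathcal{S}}\rho_{t}(s)V_{i}^{\theta}(s,t)\,\mathrm{d}s$, and Assumption \ref{assumption0} makes $\rho_{t}$ independent of $\theta$, so $\nabla_{\theta_{i}}J_{i}^{\theta}(t)=\int_{\mathcal{S}}\rho_{t}(s)\nabla_{\theta_{i}}V_{i}^{\theta}(s,t)\,\mathrm{d}s$. Substituting the series, swapping the order of integration and summation (the uniform bounds make every iterated integral absolutely convergent, so Fubini applies), using that $V_{i}^{\theta}$ and $Q_{i}^{\theta}$ depend on the initial state rather than the initial time---hence $h_{i}^{\theta}(s',l)=h_{i}^{\theta}(s',t+1)$ for every $l\geqslant t$---and recognizing $(1-\gamma)\int_{\mathcal{S}}\sum_{l=t}^{\infty}\gamma^{l-t}\rho_{t,l}^{\theta}(s'\mid s)\rho_{t}(s)\,\mathrm{d}s=d_{\rho_{t}}^{\theta}(s')$, we arrive at the claimed formula. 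The step I expect to be the main obstacle is the first analytic one: rigorously justifying that $V_{i}^{\theta}$ is differentiable in $\theta_{i}$ with $\nabla_{\theta_{i}}$ passing through the infinite sum and the $\mathcal{S}\times\mathcal{A}$ integrals. Once the truncation-plus-uniform-bound argument settles that, the Bellman recursion, the unrolling, Fubini, and the identification of $d_{\rho_{t}}^{\theta}$ are all routine.
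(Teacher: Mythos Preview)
Your proposal is correct and follows essentially the same route as the paper: differentiate the Bellman relation (\ref{valuerelation})--(\ref{valuerelation1}) to obtain a recursion for $\nabla_{\theta_{i}}V_{i}^{\theta}$, unroll it by induction, use the uniform bound $\|\nabla_{\theta_{i}}V_{i}^{\theta}\|\leqslant B_{\Theta}U_{R}/(1-\gamma)^{2}$ to kill the remainder, then average over $\rho_{t}$ via Assumption \ref{assumption0} and invoke time-homogeneity of $Q_{i}^{\theta}$ to collapse the time index. The only difference is cosmetic: you are more explicit about the analytic justification (finite-horizon truncation plus uniform convergence to pass $\nabla_{\theta_{i}}$ through the series and integrals), whereas the paper simply appeals to Assumptions \ref{assumption1}--\ref{assumption2} and the Dominated Convergence Theorem at the moment of letting $k\to\infty$.
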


\begin{proof}
We will prove  by induction that
\begin{align}
\nabla_{\theta_{i}}  V_{i}^{\theta}(s,t)
=&\int_{\mathcal{S} \times \mathcal{A}}\sum_{l=t}^{t+k}\gamma^{l-t}\rho_{t,l}^{\theta}\big(s^{\prime} \mid s\big) \nabla_{\theta_{i}} \log\pi_{\theta_{i}}(a_{i} \mid s^{\prime})  \pi_{\theta}(a \mid s^{\prime})  Q_{i}^{\theta}(s^{\prime}, a,l)  \mathrm{d} s^{\prime} \mathrm{d} a
\notag\\ &+ \gamma^{k+1}\left( \int_{\mathcal{S}} \rho_{t,t+k+1}^{\theta}(s^{\prime} \mid s) \nabla_{\theta_{i}} V_{i}^{\theta}(s^{\prime},t+k+1) \mathrm{d}s^{\prime} \right) ,\ \forall \ k \in \mathbb{N}\label{policyinduction}.
\end{align}
Let $k=0$, by Assumptions \ref{assumption1}-\ref{assumption2}, taking  the derivative with respect to $\theta_{i}$ on  both sides of (\ref{valuerelation}) and combining with (\ref{valuerelation1}) give
\begin{align}
\nabla_{\theta_{i}}  V_{i}^{\theta}(s,t)=&
\nabla_{\theta_{i}} \left(\int_{\mathcal{A}} Q^{\theta}_{i}(s,a,t)\pi_{\theta}(a\mid s) \mathrm{d} a\right) \notag\\
=&\int_{\mathcal{A}} Q^{\theta}_{i}(s,a,t)\nabla_{\theta_{i}}\pi_{\theta}(a\mid s)
\mathrm{d} a+
\int_{\mathcal{A}} \nabla_{\theta_{i}}Q^{\theta}_{i}(s,a,t)\pi_{\theta}(a\mid s) \mathrm{d} a\notag\\
=&\int_{\mathcal{A}} Q^{\theta}_{i}(s,a,t)\nabla_{\theta_{i}}
\pi_{\theta_{i}}(a_{i}\mid s)\pi_{\theta_{-i}}(a_{-i}\mid s)\mathrm{d}a\notag\\
&+\int_{\mathcal{A}} \nabla_{\theta_{i}}\big(r_{i}(s,a,t) +\gamma \int_{\mathcal{S}} \rho_{t,t+1}(s^{\prime} \mid s,a)  V_{i}^{\theta}(s^{\prime},t+1) \mathrm{d}s^{\prime}\big) \pi_{\theta}(a\mid s) \mathrm{d} a\notag\\
=&\int_{\mathcal{A}} Q^{\theta}_{i}(s,a,t)
 \nabla_{\theta_{i}}\log\pi_{\theta_{i}}(a_{i}\mid s)\pi_{\theta}(a \mid s)
\mathrm{d}a\notag\\
&+\gamma \int_{\mathcal{S}} \rho_{t,t+1}^{\theta}(s^{\prime} \mid s) \nabla_{\theta_{i}} V_{i}^{\theta}(s^{\prime},t+1) \mathrm{d}s^{\prime}. \label{policyinductionk0}
\end{align}
Thus, (\ref{policyinduction}) is true for $k=0$.
Let $m $ be any positive integer and suppose (\ref{policyinduction}) is true for $k=m$, that is,
\begin{align}
\nabla_{\theta_{i}}  V_{i}^{\theta}(s,t)
=&\int_{\mathcal{S} \times \mathcal{A}}\sum_{l=t}^{t+m}\gamma^{l-t}\rho_{t,l}^{\theta}\big(s^{\prime} \mid s\big) \nabla_{\theta_{i}} \log \pi_{\theta_{i}}(a_{i} \mid s^{\prime})  \pi_{\theta}(a \mid s^{\prime}) Q_{i}^{\theta}(s^{\prime}, a,l)  \mathrm{d} s^{\prime} \mathrm{d} a \notag\\
&+ \gamma^{m+1} \int_{\mathcal{S}} \rho_{t,t+m+1}^{\theta}(s^{\prime} \mid s)  \nabla_{\theta_{i}} V_{i}^{\theta}(s^{\prime},t+m+1) \mathrm{d}s^{\prime} .\label{inductionm}
\end{align}
For the term $\nabla_{\theta_{i}} V_{i}^{\theta}(s^{\prime},t+m+1)$, by Assumptions \ref{assumption1}-\ref{assumption2}, similar to the proof of (\ref{policyinductionk0}), we have
\begin{align*}
\nabla_{\theta_{i}}  V_{i}^{\theta}(s^{\prime},t+m+1)=
&\int_{\mathcal{A}} Q^{\theta}_{i}(s^{\prime},a,t+m+1)
\nabla_{\theta_{i}}\log\pi_{\theta_{i}}(a_{i}\mid s^{\prime})\pi_{\theta}(a\mid s^{\prime})
\mathrm{d}a\notag\\
&+\gamma \int_{\mathcal{S}} \rho_{t+m+1,t+m+2}^{\theta}(s^{\prime \prime} \mid s^{\prime}) \nabla_{\theta_{i}} V_{i}^{\theta}(s^{\prime \prime},t+m+2) \mathrm{d}s^{\prime\prime}.
\end{align*}
By (\ref{inductionm}) and the above inequality, we have
\begin{align*}
&\nabla_{\theta_{i}}  V_{i}^{\theta}(s,t)\\
=&\int_{\mathcal{S} \times \mathcal{A}}\sum_{l=t}^{t+m}\gamma^{l-t}\rho_{t,l}^{\theta}\big(s^{\prime} \mid s\big) \nabla_{\theta_{i}} \log \pi_{\theta_{i}}(a_{i} \mid s^{\prime})  \pi_{\theta}(a \mid s^{\prime}) Q_{i}^{\theta}(s^{\prime}, a,l)\mathrm{d} s^{\prime} \mathrm{d} a   \notag\\
& + \gamma^{m+1}\left( \int_{\mathcal{S}\times \mathcal{A}} \rho_{t,t+m+1}^{\theta}(s^{\prime} \mid s)  Q^{\theta}_{i}(s^{\prime},a,t+m+1)\nabla_{\theta_{i}} \log
\pi_{\theta_{i}}(a_{i}\mid s^{\prime})\pi_{\theta}(a \mid s^{\prime})
\mathrm{d}a \mathrm{d}s^{\prime} \right) \notag\\
&+ \gamma^{m+2}\int_{\mathcal{S}} \rho_{t,t+m+2}^{\theta}(s^{ \prime} \mid s)
 \nabla_{\theta_{i}}V_{i}^{\theta}(s^{\prime},t+m+2) \mathrm{d}s^{\prime}\notag\\
=&\int_{\mathcal{S} \times \mathcal{A}}\sum_{l=t}^{t+m+1}\gamma^{l-t}\rho_{t,l}^{\theta}\big(s^{\prime} \mid s\big) \nabla_{\theta_{i}} \log\pi_{\theta_{i}}(a_{i} \mid s^{\prime})  \pi_{\theta}(a \mid s^{\prime}) Q_{i}^{\theta}(s^{\prime}, a,l)  \mathrm{d} s^{\prime} \mathrm{d} a \notag\\
&+\gamma^{m+2} \left( \int_{\mathcal{S}} \rho_{t,t+m+2}^{\theta}(s^{ \prime} \mid s) \nabla_{\theta_{i}} V_{i}^{\theta}(s^{\prime},t+m+2) \mathrm{d}s^{\prime} \right).\notag
\end{align*}
Thus, (\ref{policyinduction}) holds for $k=m+1$. By the principle of induction,  (\ref{policyinduction}) is true for all $k \in \mathbb{N}$.

From (\ref{valuefunction}), (\ref{qvaluefunction}) and Assumption \ref{assumption1}, it follows that
$\sup \limits_{s \in \mathcal{S},a \in \mathcal{A}}|Q_{i}^{\theta}(s^{\prime}, a,l)| \leqslant \frac{U_{R}}{1-\gamma}$, which together with Assumption \ref{assumption2} and (\ref{policyinductionk0}) implies
\begin{align}
\sup \limits_{s \in \mathcal{S}}\left|\nabla_{\theta_{i}}V_{i}^{\theta}(s,t)\right| \leqslant & \sup \limits_{s \in \mathcal{S}}\left|\int_{\mathcal{A}} Q^{\theta}_{i}(s,a,t)
\nabla_{\theta_{i}}\log \pi_{\theta_{i}}(a_{i}\mid s)\pi_{\theta}(a \mid s)
\mathrm{d}a\right|\notag\\
&+\gamma\sup \limits_{s \in \mathcal{S}}\left| \int_{\mathcal{S}} \rho_{t,t+1}^{\theta}(s^{\prime} \mid s) \nabla_{\theta_{i}} V_{i}^{\theta}(s^{\prime},t+1)  \mathrm{d}s^{\prime}\right|\notag\\
\leqslant &\frac{U_{R}B_{\Theta}}{1-\gamma}+\gamma  \sup \limits_{s\in \mathcal{S}}\left|\nabla_{\theta_{i}} V_{i}^{\theta}(s,t+1) \right|\notag.
\end{align}
This together with the fact that $V_{i}^{\theta}(s,t)$ are dependent on the initial state rather than the initial time gives
\begin{align}
\sup \limits_{s \in \mathcal{S}}\left|\nabla_{\theta_{i}}V_{i}^{\theta}(s,t)\right| \leqslant
\frac{U_{R}B_{\Theta}}{(1-\gamma)^{2}}.\label{boundedpseudogradient}
\end{align}
By Assumptions \ref{assumption1}-\ref{assumption2}, we have
\begin{align}
&\sum_{l=t}^{\infty}\int_{\mathcal{S} \times \mathcal{A}}\gamma^{l-t}
\rho_{t,l}^{\theta}\big(s^{\prime} \mid s\big) \left|\nabla_{\theta_{i}}\log \pi_{\theta_{i}}(a_{i} \mid s^{\prime})\right|  \pi_{\theta}(a \mid s^{\prime})  \left|Q_{i}^{\theta}(s^{\prime}, a,l)\right|  \mathrm{d} s^{\prime} \mathrm{d} a \notag\\
\leqslant & \frac{B_{\Theta}U_{R}}{1-\gamma} \sum_{l=t}^{\infty}\gamma^{l-t}\int_{\mathcal{S} \times \mathcal{A}}
\rho_{t,l}^{\theta}\big(s^{\prime} \mid s\big) \pi_{\theta}(a \mid s^{\prime})  \mathrm{d} s^{\prime}
\mathrm{d} a
\leqslant \frac{B_{\Theta}U_{R}}{(1-\gamma)^{2}}.\notag
\end{align}
Then by the above inequality, (\ref{boundedpseudogradient}) and the Dominated Convergence Theorem, letting $k$ tends to infty on  both sides of (\ref{policyinduction}) gives
\begin{align}
\nabla_{\theta_{i}}  V_{i}^{\theta}(s,t)
=&\int_{\mathcal{S} \times \mathcal{A}}\sum_{l=t}^{\infty}\gamma^{l-t}\rho_{t,l}^{\theta}\big(s^{\prime} \mid s\big) \nabla_{\theta_{i}} \log \pi_{\theta_{i}}(a_{i} \mid s^{\prime})  \pi_{\theta}(a \mid s^{\prime}) Q_{i}^{\theta}(s^{\prime}, a,l)  \mathrm{d} s^{\prime} \mathrm{d} a. \notag
\end{align}
By Assumption \ref{assumption0}, it is known that the conditional probability density function of $s(t)$  for a given $\theta$ satisfies $\rho_{t}(s\mid \theta)=\rho_{t}(s)$, which together with (\ref{valuefunction}), (\ref{totalreward}),  the above equality and the property of conditional expectation gives
\begin{align}\nonumber
\nabla_{\theta_{i}}  J_{i}^{\theta}(t)
=&\nabla_{\theta_{i}}\mathbb{E}\left[
\sum_{l=t}^{\infty} \gamma^{l-t} r_{i}\big(s\big(l\big),a\big(l\big) \big)\bigg | \theta\right] \notag\\
=&\nabla_{\theta_{i}}\mathbb{E}\left[
\mathbb{E}\left[\sum_{l=t}^{\infty} \gamma^{l-t} r_{i}\big(s\big(l\big),a\big(l\big) \big)\bigg | s(t)=s,\theta\right]\bigg |\theta\right]\notag\\
=&\nabla_{\theta_{i}}\int_{\mathcal{S}}\mathbb{E}\left[\sum_{l=t}^{\infty} \gamma^{l-t} r_{i}\big(s\big(l\big),a\big(l\big) \big)\bigg | s(t)=s, \theta\right]\rho_{t}(s\mid \theta) \mathrm{d}s\notag\\
=& \nabla_{\theta_{i}}\int_{\mathcal{S}}V_{i}^{\theta}(s,t) \rho_{t}(s) \mathrm{d}s \notag\\
=& \int_{\mathcal{S}} \nabla_{\theta_{i}}  V_{i}^{\theta}(s,t)
\rho_{t}(s) \mathrm{d}s \notag\\
=&\int_{\mathcal{S}}\int_{ \mathcal{S} \times \mathcal{A}} \sum_{l=t}^{\infty} \gamma^{l-t} \rho_{t,l}^{\theta}\big(s^{\prime} \mid s\big) \nabla_{\theta_{i}} \log\pi_{\theta_{i}}(a_{i} \mid s^{\prime})  \pi_{\theta}(a \mid s^{\prime}) Q_{i}^{\theta}(s^{\prime}, a,l)  \mathrm{d} s^{\prime} \mathrm{d} a \rho_{t}(s) \mathrm{d}s \notag \\
=&\frac{1}{1-\gamma} \int_{\mathcal{S} \times \mathcal{A}} (1-\gamma) \int_{\mathcal{S}}\sum_{l=t}^{\infty} \gamma^{l-t} \rho_{t,l}^{\theta}\big(s^{\prime} \mid s\big)  \nabla_{\theta_{i}} \log \pi_{\theta_{i}}(a_{i} \mid s^{\prime}) \pi_{\theta}(a \mid s^{\prime})\notag\\
&~~~\times Q_{i}^{\theta}(s^{\prime}, a,l) \rho_{t}(s) \mathrm{d} s^{\prime} \mathrm{d} a  \mathrm{d} s  \notag\\
=&\frac{1}{1-\gamma} \int_{\mathcal{S} \times \mathcal{A}} d_{\rho_{t}}^{\theta}(s^{\prime})  \nabla_{\theta_{i}} \pi_{\theta_{i}}(a_{i} \mid s^{\prime})  \pi_{\theta_{-i}}(a_{-i} \mid s^{\prime}) Q_{i}^{\theta}(s^{\prime}, a,t+1) \mathrm{d} s^{\prime} \mathrm{d} a.\notag
\end{align}
\end{proof}

To give the gradient domination theorem, we need the following assumption  widely used in literatures  \cite{Zhang}, \cite{Daskalakis} and \cite{Runyu}.

\begin{assumption}\label{assumption3}
The induced discounted Markov chain by $N$ agents satisfy
$d_{\rho_{t}}^{\theta}(s^{\prime}) > 0$, $\forall\ s^{\prime} \in S$, $\forall\ \theta \in \Theta$.
\end{assumption}

We define the pseudo gradient mapping  $F:\Theta \rightarrow \mathbb{R}^{\sum_{i=1}^{N} d_{i}}$ as $F(\theta)=\left(F_{i}\left(\theta\right)\right)_{i=1}^{N}=\left(-\nabla_{\theta_{i}} J_{i}^{\theta}(t)\right)_{i=1}^{N}$, $\forall \ \theta \in \Theta$. Then we have  the following gradient domination theorem by Assumptions \ref{assumption0}-\ref{assumption3}.

\begin{lemma}\label{gradientd}(Gradient domination theorem)
If Assumptions \ref{assumption0}-\ref{assumption3} hold, then for any
$\theta=\big(\theta_{i}, \theta_{-i}\big)
 \in \Theta$, we have
\begin{equation}\label{domination}
 \sup _{\theta_{i}^{\prime} \in \Theta_{i}} J_{i}^{(\theta_{i}^{\prime}, \theta_{-i})}(t) -J_{i}^{(\theta_{i}, \theta_{-i})}(t)
 \leqslant M_{1} \sup _{\bar{\theta}_{i} \in \Theta_{i}}
 \langle F_{i}(\theta),\bar{\theta}_{i}-\theta_{i}\rangle,
 \\\ \forall \ i=1,2, \ldots, N  ,
\end{equation}
where $M_{1}=\sup \limits_{\theta,\hat{\theta}^{i}\in \Theta}\left\|\frac{d_{\rho_{t}}^{\hat{\theta}^{i}}}{d_{\rho_{t}}^{\theta}}
\right\|_{\infty}$, $
\left\|\frac{d_{\rho_{t}}^{\hat{\theta}^{i}}}{d_{\rho_{t}}^{\theta}}
\right\|_{\infty}=\sup\limits _{s^{\prime} \in \mathcal{S}} \frac{d_{\rho_{t}}^{\hat{\theta}^{i}}(s^{\prime})}{d_{\rho_{t}}^{\theta}(s^{\prime})}
$ and $\hat{\theta}^{i}=(\theta_{i}^{\prime},\theta_{-i}) \in \Theta$.
\end{lemma}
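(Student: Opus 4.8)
The plan is to combine a performance-difference identity with the concavity of the parameterization and a change of the discounted state distribution controlled by $M_{1}$. I would first establish, for fixed $\theta_{-i}$ and any deviation $\theta_{i}'$, the performance-difference identity
\begin{equation}\nonumber
J_{i}^{(\theta_{i}', \theta_{-i})}(t) - J_{i}^{(\theta_{i}, \theta_{-i})}(t) = \frac{1}{1-\gamma}\int_{\mathcal{S}\times\mathcal{A}} d_{\rho_{t}}^{(\theta_{i}',\theta_{-i})}(s')\big(\pi_{\theta_{i}'}(a_{i}\mid s') - \pi_{\theta_{i}}(a_{i}\mid s')\big)\pi_{\theta_{-i}}(a_{-i}\mid s')\, Q_{i}^{\theta}(s',a,t+1)\,\mathrm{d}s'\,\mathrm{d}a,
\end{equation}
by a telescoping argument analogous to the induction in the proof of Proposition \ref{policygradient}. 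The key structural fact is baseline invariance: because $\int_{\mathcal{A}_{i}}(\pi_{\theta_{i}'}-\pi_{\theta_{i}})\,\mathrm{d}a_{i}=0$ and $\int_{\mathcal{A}_{i}}\nabla_{\theta_{i}}\pi_{\theta_{i}}\,\mathrm{d}a_{i}=\mathbf{0}$, I may replace $Q_{i}^{\theta}$ by the shifted $\widetilde{Q}_{i}^{\theta}=Q_{i}^{\theta}+\frac{U_{R}}{1-\gamma}$ in both the identity above and in the policy-gradient formula of Proposition \ref{policygradient} without altering either quantity; by Assumption \ref{assumption1} this shift makes $\widetilde{Q}_{i}^{\theta}\geqslant 0$.

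Next I would linearize using concavity. Assumption \ref{assumption1} gives the pointwise bound $\pi_{\theta_{i}'}(a_{i}\mid s')-\pi_{\theta_{i}}(a_{i}\mid s')\leqslant\langle\nabla_{\theta_{i}}\pi_{\theta_{i}}(a_{i}\mid s'),\theta_{i}'-\theta_{i}\rangle$. Multiplying by the now-nonnegative weight $\pi_{\theta_{-i}}\widetilde{Q}_{i}^{\theta}$ and integrating over $\mathcal{A}$ preserves the inequality, so the integrand of the performance-difference identity is dominated by $\langle\nabla_{\theta_{i}}\pi_{\theta_{i}}(a_{i}\mid s'),\theta_{i}'-\theta_{i}\rangle\pi_{\theta_{-i}}\widetilde{Q}_{i}^{\theta}$, which, once integrated against $d_{\rho_{t}}^{\theta}$, reproduces the directional derivative $\langle\nabla_{\theta_{i}}J_{i}^{\theta}(t),\theta_{i}'-\theta_{i}\rangle$ of Proposition \ref{policygradient}. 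The mismatch is that it is still weighted by the deviated distribution $d_{\rho_{t}}^{(\theta_{i}',\theta_{-i})}$ rather than by $d_{\rho_{t}}^{\theta}$.

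The crux is the change of measure from $d_{\rho_{t}}^{\hat\theta^{i}}$ to $d_{\rho_{t}}^{\theta}$. Writing $d_{\rho_{t}}^{\hat\theta^{i}}(s')=\frac{d_{\rho_{t}}^{\hat\theta^{i}}(s')}{d_{\rho_{t}}^{\theta}(s')}\,d_{\rho_{t}}^{\theta}(s')\leqslant M_{1}\,d_{\rho_{t}}^{\theta}(s')$ — the ratio being finite thanks to Assumption \ref{assumption3}, which keeps $d_{\rho_{t}}^{\theta}(s')>0$ — only preserves an inequality at those $s'$ where the accompanying state-wise integrand is nonnegative. This is exactly the main obstacle I expect: for a fixed deviation $\theta_{i}'$ the linearized advantage $\int_{\mathcal{A}}\langle\nabla_{\theta_{i}}\pi_{\theta_{i}},\theta_{i}'-\theta_{i}\rangle\pi_{\theta_{-i}}\widetilde{Q}_{i}^{\theta}\,\mathrm{d}a$ need not be nonnegative in $s'$, so $d_{\rho_{t}}^{\hat\theta^{i}}\leqslant M_{1}d_{\rho_{t}}^{\theta}$ cannot simply be substituted.

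To get past this I would bound the state-wise integrand by its supremum over admissible deviations $\bar\theta_{i}\in\Theta_{i}$, which is nonnegative (the choice $\bar\theta_{i}=\theta_{i}$ yields $0$) and attained by compactness of $\Theta_{i}$; the factor multiplying $d_{\rho_{t}}^{\hat\theta^{i}}$ is then nonnegative and the change of measure goes through. Re-assembling the $d_{\rho_{t}}^{\theta}$-weighted integral recovers the maximal first-order improvement $\sup_{\bar\theta_{i}\in\Theta_{i}}\langle\nabla_{\theta_{i}}J_{i}^{\theta}(t),\bar\theta_{i}-\theta_{i}\rangle$, the variational-inequality gap determined by $F_{i}(\theta)=-\nabla_{\theta_{i}}J_{i}^{\theta}(t)$ appearing on the right-hand side of (\ref{domination}); multiplying by $M_{1}$ and taking the supremum over $\theta_{i}'$ on the left yields the claim. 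The delicate point buried in this last step is reconciling the state-wise greedy deviation with the single global $\bar\theta_{i}$ permitted in the statement, and it is precisely here that the concavity of $\pi_{\theta_{i}}$ together with the structure of the admissible policy set must do the essential work.
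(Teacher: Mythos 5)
Your skeleton (performance-difference identity, concavity of $\pi_{\theta_{i}}$ in $\theta_{i}$, change of measure from $d_{\rho_{t}}^{\hat{\theta}^{i}}$ to $d_{\rho_{t}}^{\theta}$ controlled by $M_{1}$) is the same as the paper's, and you correctly diagnose the sign obstruction in the change-of-measure step. But your proposed fix does not close the argument, and you essentially concede this yourself. Replacing the state-wise integrand by its pointwise supremum over deviations gives
\begin{equation}\nonumber
J_{i}^{(\theta_{i}^{\prime},\theta_{-i})}(t)-J_{i}^{\theta}(t)\ \leqslant\ M_{1}\int_{\mathcal{S}} d_{\rho_{t}}^{\theta}(s^{\prime})\,\sup_{\bar{\theta}_{i}\in\Theta_{i}}\Big\langle \textstyle\int_{\mathcal{A}}\nabla_{\theta_{i}}\pi_{\theta_{i}}\,\pi_{\theta_{-i}}\,Q_{i}^{\theta}\,\mathrm{d}a,\ \bar{\theta}_{i}-\theta_{i}\Big\rangle\,\mathrm{d}s^{\prime},
\end{equation}
and the quantity you need on the right-hand side of (\ref{domination}) is $M_{1}\sup_{\bar{\theta}_{i}}\int_{\mathcal{S}}d_{\rho_{t}}^{\theta}(\cdots)\,\mathrm{d}s^{\prime}$. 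Since the integral of a supremum dominates the supremum of the integral (the state-wise map $\bar{\theta}_{i}\mapsto\langle v(s^{\prime}),\bar{\theta}_{i}-\theta_{i}\rangle$ is linear, so its supremum over $\Theta_{i}$ is a support function, and Jensen gives $\int d_{\rho_{t}}^{\theta}\sup_{\bar{\theta}_{i}}\geqslant\sup_{\bar{\theta}_{i}}\int d_{\rho_{t}}^{\theta}$), your final reduction runs in the wrong direction. The "structure of the admissible policy set" cannot rescue this for a single global parameter $\bar{\theta}_{i}$ per agent; the state-wise-greedy-equals-global identity is special to product-over-states parameterizations such as the tabular direct case, which is exactly what this lemma is trying to go beyond.

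The paper avoids manufacturing a state-wise supremum altogether: in its first inequality it bounds the specific deviation by taking $\sup_{\bar{\theta}_{i}\in\Theta_{i}}$ of the \emph{entire} integral (a single global $\bar{\theta}_{i}$, valid because $\theta_{i}^{\prime}\in\Theta_{i}$ and the advantage-weighted integral vanishes at $\bar{\theta}_{i}=\theta_{i}$), inserts $d_{\rho_{t}}^{\hat{\theta}^{i}}=\big(d_{\rho_{t}}^{\hat{\theta}^{i}}/d_{\rho_{t}}^{\theta}\big)d_{\rho_{t}}^{\theta}$ as an identity, pulls out $\big\|d_{\rho_{t}}^{\hat{\theta}^{i}}/d_{\rho_{t}}^{\theta}\big\|_{\infty}$ with the supremum already outside the integral, and only \emph{then} applies concavity under that global supremum so that the result is literally $\sup_{\bar{\theta}_{i}}(\bar{\theta}_{i}-\theta_{i})^{\top}\nabla_{\theta_{i}}J_{i}^{\theta}(t)=\sup_{\bar{\theta}_{i}}\langle F_{i}(\theta),\theta_{i}-\bar{\theta}_{i}\rangle$ via Proposition \ref{policygradient}. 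The order matters: concavity must be applied after the measure change and under a global supremum, not before. As written, your proof has a genuine gap at the interchange of $\sup_{\bar{\theta}_{i}}$ and $\int_{\mathcal{S}}$, and it would need to be restructured along the paper's lines to go through.
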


\begin{proof}
By Assumption $\ref{assumption0}$, Assumption $\ref{assumption3}$ and Lemma \ref{performancedifference}, we have
\begin{align}\label{difference}
&J_{i}^{(\theta_{i}^{\prime}, \theta_{-i})}(t) -J_{i}^{(\theta_{i}, \theta_{-i})}(t)\notag\\
=&\frac{1}{1-\gamma} \int_{\mathcal{S}\times \mathcal{A}_{i}} d_{\rho_{t}}^{\hat{\theta}^{i}}(s^{\prime}) \pi_{\theta_{i}^{\prime}}\big(a_{i} \mid s^{\prime}\big) \left(\int_{\mathcal{A}_{-i}} \pi_{\theta_{-i}}\big(a_{-i} \mid s^{\prime}\big) A_{i}^{\theta}\big(s^{\prime}, a,t+1\big)\mathrm{d} a_{-i}\right)\mathrm{d} a_{i} \mathrm{d} s^{\prime}\notag\\
\leqslant &\frac{1}{1-\gamma} \sup _{\bar{\theta}_{i} \in \Theta_{i}}  \int_{\mathcal{S}\times \mathcal{A}_{i}} \frac{d_{\rho_{t}}^{\hat{\theta}^{i}}(s^{\prime})}{d_{\rho_{t}}^{\theta}(s^{\prime})} d_{\rho_{t}}^{\theta}(s^{\prime}) \pi_{\bar{\theta_{i}}}\big(a_{i} \mid s^{\prime}\big) \bigg(\int_{\mathcal{A}_{-i}} \pi_{\theta_{-i}}\big(a_{-i} \mid s^{\prime}\big)\notag\\
 & \times A_{i}^{\theta}\big(s^{\prime}, a,t+1\big)\mathrm{d} a_{-i}\bigg)\mathrm{d} a_{i} \mathrm{d} s^{\prime}\notag\\
\leqslant & \left\|\frac{d_{\rho_{t}}^{\hat{\theta}^{i}}}{d_{\rho_{t}}^{\theta}}\right\|_{\infty}
\frac{1}{1-\gamma} \sup _{\bar{\theta}_{i} \in \Theta_{i}}  \int_{\mathcal{S}\times \mathcal{A}_{i}}  d_{\rho_{t}}^{\theta}(s^{\prime}) \pi_{\bar{\theta_{i}}}\big(a_{i} \mid s^{\prime}\big) \Bigg(\int_{\mathcal{A}_{-i}} \pi_{\theta_{-i}}\big(a_{-i} \mid s^{\prime}\big) \notag \\
&\times A_{i}^{\theta}\big(s^{\prime}, a,t+1\big)\mathrm{d} a_{-i}\Bigg)\mathrm{d}a_{i} \mathrm{d} s^{\prime}\notag\\
=&\left\|\frac{d_{\rho_{t}}^{\hat{\theta}^{i}}}{d_{\rho_{t}}^{\theta}}\right\|_{\infty} \sup _{\bar{\theta}_{i} \in \Theta_{i}}  \int_{\mathcal{S}\times \mathcal{A}_{i}}  \big(\pi_{\bar{\theta_{i}}}\big(a_{i} \mid s^{\prime}\big)-\pi_{\theta_{i}}\big(a_{i} \mid s^{\prime}\big)\big)\frac{1}{1-\gamma} d_{\rho_{t}}^{\theta}(s^{\prime}) \notag\\
& \times \left(\int_{\mathcal{A}_{-i}} \pi_{\theta_{-i}}\big(a_{-i} \mid s^{\prime}\big) A_{i}^{\theta}\big(s^{\prime}, a_{i}, a_{-i},t+1\big)\mathrm{d} a_{-i}\right)\mathrm{d} a_{i} \mathrm{d} s^{\prime}\notag\\
=&\left\|\frac{d_{\rho_{t}}^{\hat{\theta}^{i}}}{d_{\rho_{t}}^{\theta}}\right\|_{\infty} \sup _{\bar{\theta}_{i} \in \Theta_{i}}  \int_{\mathcal{S}\times \mathcal{A}_{i}}  \big(\pi_{\bar{\theta_{i}}}\big(a_{i} \mid s^{\prime}\big)-\pi_{\theta_{i}}\big(a_{i} \mid s^{\prime}\big)\big)\frac{1}{1-\gamma} d_{\rho_{t}}^{\theta}(s^{\prime}) \notag\\
& \times \left(\int_{\mathcal{A}_{-i}} \pi_{\theta_{-i}}\big(a_{-i} \mid s^{\prime}\big) Q_{i}^{\theta}\big(s^{\prime}, a_{i}, a_{-i},t+1\big)\mathrm{d} a_{-i}\right)\mathrm{d} a_{i} \mathrm{d}s^{\prime}\ ,
\end{align}
where $A_{i}^{\theta}(s^{\prime}, a,t+1)=Q_{i}^{\theta}(s^{\prime}, a_{i},a_{-i},t+1)-V_{i}^{\theta}(s^{\prime},t+1)$,
$\int_{\mathcal{A}_{i}}  \pi_{\theta_{i}}
\big(a_{i} \mid s^{\prime}\big) \int_{\mathcal{A}_{-i}}  \pi_{\theta_{-i}}\big(a_{-i} \mid s^{\prime}\big)\\
 A_{i}^{\theta}\big(s^{\prime}, a,t+1\big)\mathrm{d} a_{-i} \mathrm{d} a_{i} =0$ is used in the third equality and the last equality is by $\int_{\mathcal{A}_{i}}  \pi_{\theta_{i}}\big(a_{i} \mid s^{\prime}\big) \int_{\mathcal{A}_{-i}}  \pi_{\theta_{-i}}\big(a_{-i} \mid s^{\prime}\big) V_{i}^{\theta}\big(s^{\prime},t+1\big)\mathrm{d} a_{-i}
\mathrm{d} a_{i}
=\int_{\mathcal{A}_{i}}  \pi_{\bar{\theta_{i}}}\big(a_{i} \mid s^{\prime}\big)  \int_{\mathcal{A}_{-i}} \pi_{\theta_{-i}}\big(a_{-i} \mid s^{\prime}\big)  V_{i}^{\theta}\big(s^{\prime},t+1\big)\mathrm{d} a_{-i} \mathrm{d} a_{i}$.

Noting that $\pi_{\theta_{i}}(a_{i} \mid s^{\prime})$ is concave with respect to $
 \theta_{i}$, we know that $\pi_{\bar{\theta_{i}}}\big(a_{i} \mid s^{\prime}\big)-\pi_{\theta_{i}}\big(a_{i} \mid s^{\prime}\big) \leqslant (\bar{\theta_{i}}-\theta_{i})^{T} \nabla_{\theta_{i}} \pi_{\theta_{i}}(a_{i}\mid s^{\prime})$. This together with Assumption \ref{assumption3}, Proposition \ref{policygradient} and (\ref{difference}) yields
 \begin{align*}
&\sup _{\theta_{i}^{\prime} \in \Theta_{i}}J_{i}^{(\theta_{i}^{\prime}, \theta_{-i})}(t) -J_{i}^{(\theta_{i}, \theta_{-i})}(t)\\
 \leqslant&  \sup _{\theta_{i}^{\prime} \in \Theta_{i}}\left\|\frac{d_{\rho_{t}}^{\hat{\theta}^{i}}}{d_{\rho_{t}}^{\theta}}\right\|_{\infty} \sup _{\bar{\theta}_{i} \in \Theta_{i}}  \int_{\mathcal{S}\times \mathcal{A}_{i}} (\bar{\theta_{i}}-\theta_{i})^{T}\nabla_{\theta_{i}} \pi_{\theta_{i}}(a_{i}\mid s^{\prime}) \frac{d_{\rho_{t}}^{\theta}(s^{\prime})}{1-\gamma}  \\
 &\times \left(\int_{\mathcal{A}_{-i}} \pi_{\theta_{-i}}\big(a_{-i} \mid s^{\prime}\big) Q_{i}^{\theta}\big(s^{\prime}, a_{i}, a_{-i},t+1\big)\mathrm{d}  a_{-i}\right) \mathrm{d}  a_{i} \mathrm{d}  s^{\prime}\\
 = & \sup _{\theta_{i}^{\prime} \in \Theta_{i}} \left\|\frac{d_{\rho_{t}}^{\hat{\theta}^{i}}}{d_{\rho_{t}}^{\theta}}\right\|_{\infty} \sup _{\bar{\theta}_{i} \in \Theta_{i}}(\bar{\theta_{i}}-\theta_{i})^{T} \nabla_{\theta_{i}} J_{i}^{\theta}(t)\\
 \leqslant &\sup_{\theta,\hat{\theta}^{i}\in \Theta}\left\|\frac{d_{\rho_{t}}^{\hat{\theta}^{i}}}{d_{\rho_{t}}^{\theta}}\right\|_{\infty} \sup _{\bar{\theta}_{i} \in \Theta_{i}}(\bar{\theta_{i}}-\theta_{i})^{T} \nabla_{\theta_{i}} J_{i}^{\theta}(t)
 =  M_{1}\sup _{\bar{\theta}_{i} \in \Theta_{i}}
 \langle F_{i}(\theta),\theta_{i}-\bar{\theta}_{i}\rangle,
\end{align*}
that is, (\ref{domination}) follows.
\end{proof}

 We  assume the parameterized policies are concave with respect to parameters, so the gradient domination theorems for direct parameterization ~\citep{Runyu} and  $\alpha$-greedy direct parameterization ~\citep{Zhang,Daskalakis} are special cases of our result.

Lemma \ref{gradientd} enables agents to approximate their best response to other agents' policies if they update policies in the algorithm  by controlling the upper bound of the right side of (\ref{domination}) or using the first-order necessary optimality
conditions of the total reward functions with respect to the parameter of each player, which makes it possible to learn a Nash equilibrium ~\citep{Daskalakis,Runyu}.

Inspired by Lemma \ref{gradientd}, we can find a Nash equilibrium by means of the first-order necessary optimality conditions of the total reward functions with respect to the parameter of each player. Hence we give the equivalence between Nash equilibrium  and variational inequality problems.

\begin{definition}
~\citep{KINDERLEHRER1980}
For a given subset $K$ in $\mathbb{R}^{n}$ and a mapping $G: K \rightarrow \mathbb{R}^{n}$, a variational inequality problem is to find a vector $x^{*} \in K$ such that
\begin{equation*}
\left\langle G(x^{*}), x^{*}-x\right\rangle \leqslant 0 ,\\\ \forall\ x \in K,
\end{equation*}
or  $x^{*}$ is called a solution of  SVI$(G,K)$.
\end{definition}

\begin{definition}
~\citep{G.J.MINTY}
For a given subset $K$ in $\mathbb{R}^{n}$ and a mapping $G: K \rightarrow \mathbb{R}^{n}$, a Minty variational inequality problem is to find a vector $x^{*} \in K$ such that
\begin{equation*}
\left\langle G(x), x^{*}-x\right\rangle \geqslant 0 ,\\\ \forall\ x \in K,
\end{equation*}
or $x^{*}$ is called a solution of MVI$(G,K)$.
\end{definition}

\begin{definition}
~\citep{Runyu} (First-order Nash equlibrium)\label{deffirstorderne}
For the game $\Gamma$, if there exists $\theta^{*}=\big(\theta_{i}^{*}\big)_{i=1}^{N}
 \in \Theta$ which is a solution of SVI$(F,\Theta)$, that is,
\begin{equation}\nonumber
\sup_{i \in \mathcal{N}}\left(\sup _{\theta_{i} \in \Theta_{i}}  \left\langle F_{i}(\theta^{*}), \theta_{i}^{*}-\theta_{i} \right\rangle \right) \leqslant 0,
\end{equation}
then  $\big(\pi_{\theta_{i}^{*}}\big)_{i=1}^{N}$ is called a first-order Nash equilibrium.
\end{definition}

\begin{definition}($\epsilon$-first-order Nash equlibrium)
For the game $\Gamma$ and for any given $\epsilon >0$, if there
exists $\theta^{*}=\big(\theta_{i}^{*}\big)_{i=1}^{N}  \in \Theta$ such that
\begin{equation}\nonumber
\sup_{i \in \mathcal{N}}\left(\sup _{\theta_{i} \in \Theta_{i}}  \left\langle F_{i}(\theta^{*}), \theta_{i}^{*}-\theta_{i} \right\rangle \right) \leqslant \epsilon,
\end{equation}
then $\big(\pi_{\theta_{i}^{*}}\big)_{i=1}^{N}$ is called an $\epsilon$-first-order Nash equilibrium.
\end{definition}

\begin{lemma}\label{equivalenceofnefne}
For the game $\Gamma$, if Assumptions
\ref{assumption0}-\ref{assumption3} hold, then $\big(\pi_{\theta_{i}^{*}}\big)_{i=1}^{N}$ is a first-order Nash equlibrium if and only if it is a Nash equilibrium; $\big(\pi_{\theta_{i}^{*}}\big)_{i=1}^{N}$ is an $\epsilon$-first-order Nash equilibrium if and only if it is a $M_{1}\epsilon$-Nash equilibrium, where $M_{1}$ is given by Lemma \ref{gradientd}.
\end{lemma}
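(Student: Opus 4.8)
The plan is to prove each biconditional by combining two one-directional arguments: the gradient domination theorem (Lemma~\ref{gradientd}) for the implications ``variational inequality $\Rightarrow$ game,'' and the first-order necessary optimality condition over a convex set for the converse ``game $\Rightarrow$ variational inequality.'' I treat the exact and $\epsilon$-approximate statements in parallel, since only the right-hand constant changes.

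\emph{From first-order to Nash.} Suppose $\theta^{*}$ is an (exact, resp. $\epsilon$-approximate) first-order Nash equilibrium, so that for every $i$ one has $\sup_{\theta_{i}\in\Theta_{i}}\langle F_{i}(\theta^{*}),\theta_{i}^{*}-\theta_{i}\rangle\leqslant\eta$ with $\eta=0$ (resp. $\eta=\epsilon$). Evaluating Lemma~\ref{gradientd} at $\theta=\theta^{*}=(\theta_{i}^{*},\theta_{-i}^{*})$, its right-hand side is exactly $M_{1}\sup_{\bar\theta_{i}\in\Theta_{i}}\langle F_{i}(\theta^{*}),\theta_{i}^{*}-\bar\theta_{i}\rangle\leqslant M_{1}\eta$, whence $\sup_{\theta_{i}'\in\Theta_{i}}J_{i}^{(\theta_{i}',\theta_{-i}^{*})}(t)-J_{i}^{(\theta_{i}^{*},\theta_{-i}^{*})}(t)\leqslant M_{1}\eta$ for every $i$. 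Taking the supremum over $i\in\mathcal{N}$ delivers the Nash (resp. $M_{1}\epsilon$-Nash) inequality of Definition~\ref{Nashdf}.

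\emph{From Nash to first-order (exact case).} Suppose $\theta^{*}$ is a Nash equilibrium. Then for each $i$ the point $\theta_{i}^{*}$ globally maximizes the map $\theta_{i}\mapsto J_{i}^{(\theta_{i},\theta_{-i}^{*})}(t)$ over the nonempty convex set $\Theta_{i}$ (Assumption~\ref{assumption1}); this map is continuously differentiable since Proposition~\ref{policygradient} supplies $\nabla_{\theta_{i}}J_{i}^{\theta}(t)$. The first-order necessary condition for a maximizer over a convex set then yields $\langle\nabla_{\theta_{i}}J_{i}^{(\theta_{i}^{*},\theta_{-i}^{*})}(t),\theta_{i}-\theta_{i}^{*}\rangle\leqslant 0$ for all $\theta_{i}\in\Theta_{i}$, which, because $F_{i}(\theta^{*})=-\nabla_{\theta_{i}}J_{i}^{\theta^{*}}(t)$, is precisely $\langle F_{i}(\theta^{*}),\theta_{i}^{*}-\theta_{i}\rangle\leqslant 0$. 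Taking suprema over $\theta_{i}$ and then over $i$ shows that $\theta^{*}$ solves SVI$(F,\Theta)$, i.e. it is a first-order Nash equilibrium in the sense of Definition~\ref{deffirstorderne}.

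\emph{Main obstacle.} The delicate half is the converse of the approximate statement, $M_{1}\epsilon$-Nash $\Rightarrow\epsilon$-first-order. Gradient domination is strictly one-sided---it bounds the Nash gap by $M_{1}$ times the variational-inequality gap and cannot be run backwards---so a genuinely new inequality would be needed. Moreover, the clean first-order argument above is quantitatively tight only at an exact maximizer: for an \emph{approximate} maximizer $\theta_{i}^{*}$ of a merely differentiable, generally non-concave map $J_{i}^{(\cdot,\theta_{-i}^{*})}(t)$, the smoothness afforded by Assumption~\ref{assumption2} controls $\sup_{\theta_{i}}\langle F_{i}(\theta^{*}),\theta_{i}^{*}-\theta_{i}\rangle$ only to order $\sqrt{\epsilon}$ rather than order $\epsilon$, a sublinear scaling that persists even when the objective is concave. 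I would therefore expect the linear-in-$\epsilon$ converse either to require an additional hypothesis beyond Assumptions~\ref{assumption0}--\ref{assumption3}, or to be inessential: the subsequent convergence analysis uses only the forward implication $\epsilon$-first-order $\Rightarrow M_{1}\epsilon$-Nash together with the exact equivalence established above, so I would organize the proof to secure those directions rigorously and treat the approximate converse as the point demanding the most care.
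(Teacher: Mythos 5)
The paper states Lemma \ref{equivalenceofnefne} without proof (it is not among the results established in Appendix B), so there is no authorial argument to set yours against; what you propose is evidently the intended one. Your two constructive directions are correct: gradient domination (Lemma \ref{gradientd}) applied at $\theta=\theta^{*}$ gives ``$\epsilon$-first-order $\Rightarrow$ $M_{1}\epsilon$-Nash'' (including the exact case $\epsilon=0$), and the first-order necessary condition for a maximizer over the convex set $\Theta_{i}$ --- available because Proposition \ref{policygradient} makes $\theta_{i}\mapsto J_{i}^{(\theta_{i},\theta_{-i}^{*})}(t)$ differentiable --- gives ``Nash $\Rightarrow$ first-order Nash.'' You also silently repair a sign inconsistency in the paper: the display (\ref{domination}) writes $\langle F_{i}(\theta),\bar{\theta}_{i}-\theta_{i}\rangle$, whereas the final line of its own proof (and the definition of first-order Nash equilibrium) has $\langle F_{i}(\theta),\theta_{i}-\bar{\theta}_{i}\rangle$; you correctly use the latter, which is the form consistent with $F_{i}=-\nabla_{\theta_{i}}J_{i}^{\theta}(t)$.

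Your diagnosis of the remaining implication is the substantive point, and it is accurate. The claim ``$M_{1}\epsilon$-Nash $\Rightarrow$ $\epsilon$-first-order Nash'' cannot be obtained from Assumptions \ref{assumption0}--\ref{assumption3} by arguments of this type: already for a single decision variable with the concave objective $g(x)=-x^{2}/2$ on $[-1,1]$, the point $x=\sqrt{2\epsilon}$ is an $\epsilon$-approximate maximizer whose first-order gap $\sup_{y}\langle \nabla g(x),y-x\rangle=\sqrt{2\epsilon}\,(1+\sqrt{2\epsilon})$ is of order $\sqrt{\epsilon}$, so a linear-in-$\epsilon$ converse fails even in the most favorable (smooth, concave) setting. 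As you observe, nothing downstream depends on that direction: Theorem \ref{existencenash} needs only ``solution of SVI $\Rightarrow$ Nash,'' and the convergence results (Lemma \ref{exactlemma}, Lemma \ref{lemmapseudo}) invoke Lemma \ref{gradientd} directly rather than this lemma. Your proposal therefore proves everything the paper actually uses and correctly isolates the one assertion in the lemma that is overclaimed as stated; the honest fix is either to weaken the approximate converse to an $O(\sqrt{\epsilon})$ statement under the smoothness of Assumption \ref{assumption2}, or simply to drop it.
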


By  Definition \ref{deffirstorderne} and Lemma
\ref{equivalenceofnefne}, we can characterize the Nash equilibrium problem in terms of SVI$(F,\Theta)$.

\begin{theorem}\label{existencenash}
If Assumptions \ref{assumption0}-\ref{assumption3} hold, then there exists a solution of
SVI$(F,\Theta)$ and the  game $\Gamma$ has a Nash equilibrium.
\end{theorem}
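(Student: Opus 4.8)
The plan is to produce a solution of SVI$(F,\Theta)$ from the classical existence theorem for variational inequalities (Hartman--Stampacchia: a continuous map on a nonempty compact convex set admits a solution to its variational inequality, proved by composing the map with the nonexpansive Euclidean projection onto the set and applying Brouwer's fixed point theorem), and then transfer it to a Nash equilibrium via Lemma~\ref{equivalenceofnefne}. Here the feasible set $\Theta={\times}_{i=1}^{N}\Theta_{i}\subseteq\mathbb{R}^{\sum_{i=1}^{N}d_{i}}$ is nonempty, compact and convex by Assumption~\ref{assumption1}. Hence the whole argument reduces to proving that the pseudo gradient mapping $F(\theta)=\big(-\nabla_{\theta_{i}}J_{i}^{\theta}(t)\big)_{i=1}^{N}$ is continuous on $\Theta$.

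For the continuity of $F$ I would argue from the series representation of the value gradient established inside the proof of Proposition~\ref{policygradient}, namely $\nabla_{\theta_{i}}V_{i}^{\theta}(s,t)=\sum_{l=t}^{\infty}\gamma^{l-t}\int_{\mathcal{S}\times\mathcal{A}}\rho_{t,l}^{\theta}(s'\mid s)\nabla_{\theta_{i}}\log\pi_{\theta_{i}}(a_{i}\mid s')\pi_{\theta}(a\mid s')Q_{i}^{\theta}(s',a,l)\,\mathrm{d}s'\mathrm{d}a$, and then integrate against the fixed initial density $\rho_{t}(s)$ (Assumption~\ref{assumption0}) to recover $\nabla_{\theta_{i}}J_{i}^{\theta}(t)$. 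Each $l$-th summand is continuous in $\theta$, since the $\theta$-dependence enters only through the regular factors $\pi_{\theta}$ and $\nabla_{\theta_{i}}\log\pi_{\theta_{i}}$ (continuous in $\theta$ by Assumptions~\ref{assumption1}--\ref{assumption2}), through $\rho_{t,l}^{\theta}$, which is a finite composition of the one-step kernels $\rho^{\theta}_{m,m+1}(s'\mid s)=\int_{\mathcal{A}}\rho_{m,m+1}(s'\mid s,a)\pi_{\theta}(a\mid s)\,\mathrm{d}a$, and through $Q_{i}^{\theta}(s',a,l)$, which equals a uniformly (geometrically) convergent discounted series whose finite-horizon truncations again involve only finitely many integrations against $\pi_{\theta}$; in each case dominated convergence (using $\int\rho=1$, the compactness of $\mathcal{A}$, and $|r_{i}|\leqslant U_{R}$) yields continuity in $\theta$, together with the uniform bound $\sup|Q_{i}^{\theta}|\leqslant U_{R}/(1-\gamma)$. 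Since the summands are dominated by $B_{\Theta}U_{R}\gamma^{l-t}/(1-\gamma)$ — the very bound already used to derive~\eqref{boundedpseudogradient} — the Weierstrass $M$-test gives uniform convergence in $\theta$, so the sum is continuous, and integrating over the fixed density $\rho_{t}$ preserves continuity. Thus $F$ is continuous on $\Theta$.

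Hartman--Stampacchia then furnishes $\theta^{*}=(\theta_{i}^{*})_{i=1}^{N}\in\Theta$ with $\sum_{i=1}^{N}\langle F_{i}(\theta^{*}),\theta_{i}^{*}-\theta_{i}\rangle\leqslant 0$ for all $\theta=(\theta_{i})_{i=1}^{N}\in\Theta$. Fixing any $j\in\mathcal{N}$ and taking $\theta_{i}=\theta_{i}^{*}$ for $i\neq j$ and $\theta_{j}\in\Theta_{j}$ arbitrary separates this into $\langle F_{j}(\theta^{*}),\theta_{j}^{*}-\theta_{j}\rangle\leqslant 0$ for all $\theta_{j}\in\Theta_{j}$ and all $j$, i.e.\ $\sup_{i\in\mathcal{N}}\big(\sup_{\theta_{i}\in\Theta_{i}}\langle F_{i}(\theta^{*}),\theta_{i}^{*}-\theta_{i}\rangle\big)\leqslant 0$; so $\theta^{*}$ solves SVI$(F,\Theta)$ and $(\pi_{\theta_{i}^{*}})_{i=1}^{N}$ is a first-order Nash equilibrium in the sense of Definition~\ref{deffirstorderne}, whence by Lemma~\ref{equivalenceofnefne} it is a Nash equilibrium of $\Gamma$. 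The step I expect to be the main obstacle is the continuity of $F$, and inside it the continuity of $\theta\mapsto d_{\rho_{t}}^{\theta}$ (the discounted occupancy) and of $\theta\mapsto Q_{i}^{\theta}$: because no regularity in $(s,a)$ is imposed on the transition density $\rho_{t,t+1}(\cdot\mid s,a)$, the dominated-convergence arguments must route all $\theta$-dependence through the regular factors $\pi_{\theta}$ and $\nabla_{\theta_{i}}\log\pi_{\theta_{i}}$ and retain the uniform $\gamma$-geometric tail at every step. An alternative that avoids any occupancy-measure continuity claim is to show directly that $\theta\mapsto J_{i}^{\theta}(t)$ is $C^{1}$ by applying dominated convergence to the finite-horizon truncations $\mathbb{E}[\sum_{l=t}^{t+k}\gamma^{l-t}r_{i}(s(l),a(l))\mid\theta]$ and letting $k\to\infty$; either route delivers continuity of $F$, after which the existence of the SVI solution and hence of the Nash equilibrium follows as above.
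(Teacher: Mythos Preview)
Your proposal is correct and follows essentially the same route as the paper: invoke an existence theorem for variational inequalities on a nonempty compact convex set with a continuous map (the paper cites Lemma~3.1 in \cite{Philip}, which is the Hartman--Stampacchia result you name), and then pass to a Nash equilibrium via Lemma~\ref{equivalenceofnefne}. The only difference is one of economy: you work out the continuity of $F$ from scratch via dominated convergence on the series representation, whereas the paper simply invokes the external lemma, and in fact the Lipschitz continuity of $F$ is established separately in Lemma~\ref{FLIPSCHITZ} (stated just after this theorem), which would give you continuity in one line.
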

\begin{proof}
See Appendix B.
\end{proof}

\section{Learning Nash Equilibria with  Exact Pseudo Gradients}
In this section, we assume all agents access to the exact pseudo gradient $F(\theta)$. Before we design the algorithm, we will prove that $F(\theta)$ is Lipschitz  continuous with respect to $\theta$ at first.

\begin{lemma}\label{FLIPSCHITZ}
If Assumptions \ref{assumption0}-\ref{assumption2} hold, then $F(\theta)$ is $L$-Lipschitz continuous with respect to $\theta \in \Theta$, where the Lipschitz constant
$$
L=\sqrt{\frac{2(U_{R})^{2}(L_{\Theta})^{2}}{(1-\gamma)^6}+\frac{2(1+\gamma)^{2}(U_{R})^{2}N(B_{\Theta})^{4}}{(1-\gamma)^6}}.
$$
\end{lemma}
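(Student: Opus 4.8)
The plan is to bound $\|F(\theta^{1}) - F(\theta^{2})\|$ componentwise, i.e. to show $\|F_{i}(\theta^{1}) - F_{i}(\theta^{2})\| = \|\nabla_{\theta_{i}} J_{i}^{\theta^{1}}(t) - \nabla_{\theta_{i}} J_{i}^{\theta^{2}}(t)\|$ is controlled by $\|\theta^{1} - \theta^{2}\|$, and then assemble the $N$ components via $\|F(\theta^{1}) - F(\theta^{2})\|^{2} = \sum_{i=1}^{N} \|F_{i}(\theta^{1}) - F_{i}(\theta^{2})\|^{2}$. The natural starting point is the policy gradient expression from Proposition~\ref{policygradient}, which I would rewrite in the ``log-trick'' form
\[
\nabla_{\theta_{i}} J_{i}^{\theta}(t) = \frac{1}{1-\gamma}\int_{\mathcal{S}\times\mathcal{A}} d_{\rho_{t}}^{\theta}(s') \, \nabla_{\theta_{i}}\log\pi_{\theta_{i}}(a_{i}\mid s') \, \pi_{\theta}(a\mid s') \, Q_{i}^{\theta}(s',a,t+1)\,\mathrm{d}s'\,\mathrm{d}a.
\]
Actually, a cleaner route avoids differentiating $d_{\rho_{t}}^{\theta}$: I would instead use the trajectory (G(PO)MDP / score-function) representation of $\nabla_{\theta_{i}} J_{i}^{\theta}(t)$ as an expectation over trajectories of $\sum_{l}\gamma^{l-t} r_{i}(s(l),a(l))\big(\sum_{\tau \le l}\nabla_{\theta_{i}}\log\pi_{\theta_{i}}(a_{i}(\tau)\mid s(\tau))\big)$, which presumably is established elsewhere in the paper, since the G(PO)MDP estimator is used later.

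The key steps, in order: (1) express the difference $\nabla_{\theta_{i}} J_{i}^{\theta^{1}}(t) - \nabla_{\theta_{i}} J_{i}^{\theta^{2}}(t)$ and split it (add and subtract terms) into pieces where only one ingredient changes at a time --- the state-action occupancy measure $d_{\rho_{t}}^{\theta}$, the policy product $\pi_{\theta}$, the score $\nabla_{\theta_{i}}\log\pi_{\theta_{i}}$, and the action-value $Q_{i}^{\theta}$; (2) bound each ingredient's variation: the score is $L_{\Theta}$-Lipschitz by Assumption~\ref{assumption2} and bounded by $B_{\Theta}$; $\pi_{\theta}(a\mid s') = \prod_{j}\pi_{\theta_{j}}(a_{j}\mid s')$ is a product of densities, and since each factor is a probability density bounded appropriately and differentiable, its variation is controlled by $\sum_{j}\|\theta^{1}_{j} - \theta^{2}_{j}\|$ times a constant involving $B_{\Theta}$ (using $\nabla\pi_{\theta_{j}} = \pi_{\theta_{j}}\nabla\log\pi_{\theta_{j}}$); $Q_{i}^{\theta}$ is bounded by $U_{R}/(1-\gamma)$ and its Lipschitz dependence on $\theta$ follows from a geometric-series / fixed-point argument (the Bellman-type recursion (\ref{valuerelation})--(\ref{valuerelation1}) plus boundedness of rewards and the policy Lipschitz bound); and the discounted occupancy $d_{\rho_{t}}^{\theta}$ has total mass $1$ with a Lipschitz dependence on $\theta$ obtained from its series definition and the per-step Lipschitz bound on the kernel $\rho^{\theta}_{l-1,l}$; (3) collect the constants, noting each piece contributes a factor of $U_{R}$, a factor of $1/(1-\gamma)$ to various powers (the $(1-\gamma)^{6}$ under the root strongly suggests three layers of $1/(1-\gamma)$ squared: one from $Q$, one from $d_{\rho_{t}}^{\theta}$'s sensitivity, one from the prefactor), a factor of $B_{\Theta}$ or $B_{\Theta}^{2}$ from the score terms, and an $L_{\Theta}$ from the genuinely ``Lipschitz'' term; (4) the two terms under the square root correspond to the two qualitatively different sources: the term with $(L_{\Theta})^{2}$ comes from the variation of the score $\nabla_{\theta_{i}}\log\pi_{\theta_{i}}$ itself, while the term with $(B_{\Theta})^{4}$ and the factor $N$ comes from the variation of everything that multiplies the score (the policy product over all $N$ agents, the occupancy measure, the $Q$-function), each of which is estimated using $B_{\Theta}$-type bounds and the product structure over $N$ agents; (5) finally sum over $i=1,\dots,N$ — this is where the extra $N$ in the second term's numerator is absorbed / appears, and take the square root to get $L$.

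The main obstacle I anticipate is the Lipschitz estimate for $Q_{i}^{\theta}$ (equivalently $V_{i}^{\theta}$) in $\theta$: one must show $\sup_{s}|Q_{i}^{\theta^{1}}(s,a,t+1) - Q_{i}^{\theta^{2}}(s,a,t+1)|$ and $\sup_{s}|V_{i}^{\theta^{1}}(s,t) - V_{i}^{\theta^{2}}(s,t)|$ are $O\!\big(\tfrac{U_{R} B_{\Theta}}{(1-\gamma)^{2}}\|\theta^{1}-\theta^{2}\|\big)$ by setting up the coupled Bellman recursion, using that the only $\theta$-dependence enters through $\pi_{\theta}$ in (\ref{valuerelation}) and through $\rho^{\theta}$ implicitly, bounding the single-step difference by $|\pi_{\theta^{1}}(a\mid s) - \pi_{\theta^{2}}(a\mid s)|$ (controlled via the score bound $B_{\Theta}$ and $|\pi_{\theta^{1}} - \pi_{\theta^{2}}| \le \pi_{\theta^{2}}\cdot(\text{something})$, or more carefully by integrating $\nabla_{\theta}\pi$ along the segment) times the bounded continuation value $U_{R}/(1-\gamma)$, and then solving the resulting contraction $g \le c\|\theta^{1}-\theta^{2}\| + \gamma g$ to get $g \le \tfrac{c}{1-\gamma}\|\theta^{1}-\theta^{2}\|$. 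A secondary nuisance is that the occupancy measure $d_{\rho_{t}}^{\theta}$ also depends on $\theta$ through the whole trajectory of kernels, so its sensitivity bound requires a similar telescoping-over-time argument; keeping careful track of which $(1-\gamma)^{-1}$ factors come from the discounted sum versus from the $\tfrac{1}{1-\gamma}$ prefactor versus from the $Q$-bound is the bookkeeping that produces the precise constant $L$ in the statement. I would organize the proof so that each of the (at most) four ``one-ingredient-changes'' terms is handled by a displayed inequality, then combine.
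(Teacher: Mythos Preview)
Your proposal is workable but takes a longer road than the paper. The paper does adopt the trajectory (score-function) representation you flag in passing as the ``cleaner route''---but then, rather than reverting to the four-ingredient split ($d_{\rho_{t}}^{\theta}$, $\pi_{\theta}$, $\nabla_{\theta_{i}}\log\pi_{\theta_{i}}$, $Q_{i}^{\theta}$) that you go on to outline, it writes $\nabla_{\theta_{i}}V_{i}^{\theta}(s,t)$ as the double sum $\sum_{l\ge t}\sum_{\tau\ge 0}\gamma^{l+\tau-t}\int r_{i}\bigl(s(l+\tau),a(l+\tau)\bigr)\,\nabla_{\theta_{i}}\log\pi_{\theta_{i}}\bigl(a_{i}(l)\mid s(l)\bigr)\,\rho_{\theta,t:l+\tau}$, where $\rho_{\theta,t:l+\tau}$ is the full trajectory density, and splits the difference into only \emph{two} pieces: (i) the variation of the score, bounded by $L_{\Theta}\|\theta_{i}^{1}-\theta_{i}^{2}\|$, and (ii) the variation of the trajectory density with the score held fixed. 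For (ii) the key move is a single mean-value expansion of the policy product $\prod_{u}\pi_{\theta}\bigl(a(u)\mid s(u)\bigr)$ along the segment from $\theta^{2}$ to $\theta^{1}$: its gradient equals the product itself times $\sum_{u}\nabla\log\pi_{\theta}\bigl(a(u)\mid s(u)\bigr)$, so the density difference is bounded pointwise by $(l+\tau-t+1)\sqrt{N}\,B_{\Theta}\,\|\theta^{1}-\theta^{2}\|$ times a bona fide trajectory density at an intermediate $\widetilde\theta$. Summing $\sum_{l,\tau}\gamma^{l+\tau-t}(l+\tau-t+1)=(1+\gamma)/(1-\gamma)^{3}$ then produces the second constant directly. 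This bypasses entirely the separate Lipschitz bounds for $Q_{i}^{\theta}$ and for $d_{\rho_{t}}^{\theta}$ that you anticipate as the main obstacles: in the trajectory form all $\theta$-dependence outside the score lives in the single product $\prod_{u}\pi_{\theta}$, and the mean-value step handles it in one stroke. Your four-term decomposition would also succeed, but it requires those two auxiliary Lipschitz lemmas (your ``Bellman contraction'' for $Q$ and ``telescoping'' for $d$) and more delicate bookkeeping to recover the stated constant; the paper's two-term route is shorter and explains at a glance why the two summands under the square root carry $L_{\Theta}^{2}$ and $N B_{\Theta}^{4}$ respectively.
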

\begin{proof}
See Appendix B.
\end{proof}

Then, we  propose  Algorithm \ref{alg::exactGradient} for learning Nash equilibrium with exact pseudo gradients.

\renewcommand{\algorithmicrequire}{\textbf{Input:}}  
\renewcommand{\algorithmicensure}{\textbf{Output:}} 

\begin{algorithm}[H]

         \begin{algorithmic}[1] 
         \caption{Algorithm for Exact Pseudo gradients}
          \label{alg::exactGradient}
          \State Input: Lipschitz constant $L$, integer $K \geqslant 1$, weight $\gamma_{k}$, initial values $\theta_{1}=(\theta_{i,1})_{i=1}^{N} \in \Theta$,
         \State \ \ \ \ \ \ \ \ \ $\beta \in \left(0, \frac{1}{L}\right)$, $\widetilde{\eta}=\frac{1}{2\sqrt{L^{2}+\frac{1}
         {\beta^{2}}}}$.
          \For{$k = 1$, \ldots, $K$}
             \State Input: initial values $\theta^{1}=(\theta^{1}_{i})_{i=1}^{N}
             =z^{1} \in \Theta$, integer $H_{k}$, stepsize $\eta$ satisfying\\
            \ \ \ \ $\eta \in \left(0, \min\bigg\{\frac{1}{\left(\frac{1}
            {\beta}-L\right)}
, \frac{-\left(\frac{1}{\beta}-L\right)+
\sqrt{\left(\frac{1}{\beta}-L\right)^{2}+
32\left(L^{2}+\frac{1}{\beta^{2}}\right)}
}{16}, \frac{-\left(\frac{1}{\beta}-L\right)+
\sqrt{\left(\frac{1}{\beta}-L\right)^{2}
+8\left(L^{2}+\frac{1}{\beta^{2}}\right)}
}{32}
\bigg\}\right)$.
             \State  Let $F_{k}(\theta) = (F_{i,k}(\theta))_{i=1}^{N}=
             F(\theta)+\frac{1}{\beta}(\theta-\theta_{k})
             =\left(F_{i}(\theta)+\frac{1}{\beta}(\theta_{i}-
             \theta_{i,k})\right)_{i=1}^{N}$.
             \For{$h = 1$, \ldots, $H_{k}$}
              \For{$i=1$, \ldots, $N$}
              \State $\theta_{i}^{h+1}=\mathop{\arg\min}\limits_{
              \theta_{i} \in \Theta_{i}} \left\{\left\langle 2\eta F_{i,k}(\theta^{h}), \theta_{i} \right\rangle+\|\theta_{i}-z_{i}^{h}\|^{2}\right\}.$
              \State $z_{i}^{h+1}=\mathop{\arg\min}\limits_{\theta_{i} \in \Theta_{i}} \left\{\left\langle 2\eta F_{i,k}(\theta^{h+1}), \theta_{i} \right\rangle+\|\theta_{i}-z_{i}^{h}\|^{2}\right\}.$

            \EndFor
           \EndFor
          \State $z^{H_{k}+1}=(z_{i}^{H_{k}+1})_{i=1}^{N}$.
          \State $\theta_{k+1}=\mathop{\arg\min}\limits_{\theta \in \Theta} \left\{\left\langle 2\widetilde{\eta} F_{k}(z^{H_{k}+1}), \theta \right\rangle+\|\theta-z^{H_{k}+1}\|^{2}
          \right\}$.
         \EndFor
        \State Randomly choose $\tau_{K}$ satisfying  $P(\tau_{K}=k)=\frac{\gamma_{k}}{\sum_{k=1}^{K}\gamma_{k}}$ , $k=1,\ldots, K $.
         \State Output: $\theta_{\tau_{K}}$.
        \end{algorithmic}
 \end{algorithm}

In Algorithm \ref{alg::exactGradient}, similar to~\cite{Lin} and  ~\cite{JKOSHAL2010,JKOSHAL}, by adding a strongly monotone term $\frac{1}{\beta}(\theta-\theta_{k})$ to $F$, we construct SVI($F_{k}, \Theta)$ in the outer loop and  provide SVI($F_{k}, \Theta)$ is $\left(\frac{1}{\beta}-L\right)$-strongly monotone if $\frac{1}{\beta} > L$ by Lemma \ref{stronglymonotone}. We update SVI($F_{k}, \Theta)$ by updating the proximal parameter $\theta_{k}$.
In the inner loop, we employ a single-call extra-gradient algorithm for solving the constructed strongly monotone variational inequality while ~\cite{Lin} adopted an extra-gradient algorithm with two calls of pseudo gradients.
 We aim at alleviating the cost of pseudo gradients per iteration. ~\cite{Y} also used a single-call extra-gradient algorithm to approximate the solution of a strongly monotone variational inequality.  They measured the performance of the average of the inner loop output by the dual gap function (Definition \ref{definitionofgap}) of SVI$(F_{k},\Theta)$ while we measure the performance of the last-iterate output of the inner loop  by the prime gap function
  (Definition \ref{definitionofgap}) of SVI$(F_{k},\Theta)$. From Definition \ref{definitionofgap}, we know our result is not a corollary of ~\cite{Y}. For the convergence result of the outer-loop iteration, we measure the performance of the output of the outer loop by the prime gap function of SVI$(F,\Theta)$ while ~\cite{Lin} measured the difference between the output of the outer loop and  the real solution of SVI($F_{k}, \Theta)$.

For the convergence of our algorithm, we make the following assumption which  has been adopted in recent
works  on variational inequalities ~\citep{Lin,Song.C}.
\begin{assumption}\label{assumption4}
MVI$(F,\Theta)$ has a solution.
\end{assumption}

Below we will give the lemma and the theorem in this section.
\begin{lemma}\label{exactlemma}
If Assumptions \ref{assumption0}-\ref{assumption4} hold and we choose $\gamma_{k}=k^{\frac{1}{2}}$, $\gamma_{0}=0$ and  $H_{k}=k$ in Algorithm \ref{alg::exactGradient}, then  $\big((\pi_{\theta_{i,k}})_{i=1}^{N}\big)
_{k=1}^{\infty}$ given by Algorithm \ref{alg::exactGradient}  satisfies
\begin{equation}\label{lem52}
\begin{aligned}
\sup _{i \in \mathcal{N}}\frac{\sum_{k=1}^{K} k^{\frac{1}{2}}\left(\sup \limits_{\theta_{i} \in \Theta_{i}}J_{i}^{(\theta_{i}, \theta_{-i,k})}(t)-J_{i}^{(\theta_{i,k}, \theta_{-i,k})}(t)\right)}{\sum_{k=1}^{K}k^{\frac{1}{2}}}
\leqslant& L(K),
\end{aligned}
\end{equation}
and  is  a $k^{\frac{1}{2}}$-weighted  $L(k)$-Nash equilibrium of the game $\Gamma$, where $L(K)=2\sqrt{3}DM_{1}
\Big(4LD+\frac{2D}{\beta}+\frac{\sqrt{N}B_{\Theta}U_{R}}{(1-\gamma)^{2}}
\Big)\frac{1}{K^{\frac{1}{2}}}+2\sqrt{6}\left(1+\sqrt{2}\right)^{\frac{1}{2}} D\left(\frac{1}{\frac{1}{\beta}-L}+2 \beta\right)^{\frac{1}{2}} \left(1+2 L^{2}+\frac{2}{\beta^{2}}\right)^{\frac{1}{4}} \left(\frac{L^{2}+\frac{1}{\beta^{2}}}{e \eta\left(\frac{1}{\beta}-L\right)}\right)^{\frac{1}{4}} \Big(4LD\\+\frac{2D}{\beta}+\frac{\sqrt{N}B_{\Theta}U_{R}}{(1-\gamma)^{2}}
\Big) \frac{M_{1}}{K^{\frac{1}{4}}}$, $M_{1}$ is given by Lemma \ref{gradientd}, $e$ is Euler's number and $D=\sqrt{\sum_{i=1}^{N} D_{i}^{2}}$.
\end{lemma}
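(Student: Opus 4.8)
The plan is to decompose the weighted gap in \eqref{lem52} by inserting the unknown exact solution of SVI$(F_k,\Theta)$ for each outer iterate, and to control the two resulting error terms separately: the ``outer'' error coming from how the proximal parameter $\theta_k$ drifts, and the ``inner'' error coming from running only $H_k=k$ steps of the single-call extra-gradient scheme on SVI$(F_k,\Theta)$. First I would invoke Lemma~\ref{gradientd} to pass from the total-reward gap $\sup_{\theta_i\in\Theta_i}J_i^{(\theta_i,\theta_{-i,k})}(t)-J_i^{(\theta_{i,k},\theta_{-i,k})}(t)$ to the variational quantity $M_1\sup_{\bar\theta_i\in\Theta_i}\langle F_i(\theta_k),\bar\theta_i-\theta_{i,k}\rangle$, i.e.\ to the prime gap function of SVI$(F,\Theta)$ evaluated at $\theta_k$; so it suffices to bound the $\gamma_k$-weighted average of those gaps by $L(K)/M_1$.

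Next I would set up the two-level error analysis. Let $\theta_k^*$ denote the (unique, by Lemma~\ref{stronglymonotone} and $\tfrac1\beta>L$) solution of the $(\tfrac1\beta-L)$-strongly monotone SVI$(F_k,\Theta)$. Writing the gap of SVI$(F,\Theta)$ at $\theta_k$ as $\langle F(\theta_k),\theta_k-x\rangle = \langle F_k(\theta_k),\theta_k-x\rangle - \tfrac1\beta\langle\theta_k-\theta_{k-1},\theta_k-x\rangle$ (using $F_k(\theta)=F(\theta)+\tfrac1\beta(\theta-\theta_{k})$ built on the \emph{previous} proximal center, so that $\theta_k$ plays the role of ``$z^{H_{k-1}+1}$'' updated against SVI$(F_{k-1},\Theta)$), the first term is the prime gap of SVI$(F_{k-1},\Theta)$ at its approximate last-iterate solution $\theta_k$, which I bound by a linear-convergence estimate for the inner single-call extra-gradient loop (strong monotonicity plus $L$-Lipschitzness of $F_{k-1}$, which is $L+\tfrac1\beta$-Lipschitz, under the stepsize restrictions on $\eta$ and $\widetilde\eta$); this contributes a term decaying like $\big((L^2+\tfrac1{\beta^2})/(e\eta(\tfrac1\beta-L))\big)^{H_{k-1}/\cdots}$, which with $H_{k-1}=k-1$ and the concrete choice of $\widetilde\eta$ is summable-after-weighting against $\gamma_k=k^{1/2}$, yielding the $K^{-1/4}$ term in $L(K)$. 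The second term is telescoping-type: $\sum_k \gamma_k\tfrac1\beta\langle\theta_k-\theta_{k-1},\theta_k-x\rangle$ is handled by Cauchy--Schwarz, the diameter bound $\|\theta\|\le\sqrt2 D_i$ hence $\mathrm{diam}(\Theta)\le 2D$, and a bound on $\sum_k\gamma_k\|\theta_k-\theta_{k-1}\|$; since consecutive proximal updates move by $O(\beta\cdot(\text{pseudo-gradient bound}))$ and $\|F_i(\theta)\|=\|\nabla_{\theta_i}J_i^\theta\|\le \tfrac{B_\Theta U_R}{(1-\gamma)^2}$ by \eqref{boundedpseudogradient}, this produces, after dividing by $\sum_{k=1}^K k^{1/2}=\Theta(K^{3/2})$, the leading $K^{-1/2}$ term with coefficient involving $4LD+\tfrac{2D}\beta+\tfrac{\sqrt N B_\Theta U_R}{(1-\gamma)^2}$.

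I would then assemble: each of the two pieces is bounded uniformly in $i\in\mathcal N$, so the $\sup_{i\in\mathcal N}$ is absorbed; multiplying through by $M_1$ gives exactly the stated $L(K)$, and since $L(K)\to 0$ monotonically the sequence is a $k^{1/2}$-weighted $L(k)$-Nash equilibrium in the sense of Definition~\ref{nashweightapp} (with $\sum_k k^{1/2}=\infty$). The main obstacle I expect is the inner-loop contraction estimate for the \emph{single-call} extra-gradient iteration measured by the \emph{prime} (last-iterate) gap rather than an averaged/dual gap: one must show geometric decay of $\|z^{h}-\theta_k^*\|$ (and of the last-iterate gap) under the three-fold stepsize constraint on $\eta$, carefully tracking the coupling between the $\theta^h$ and $z^h$ sequences; the delicate point is choosing the right Lyapunov combination of $\|z^h-\theta_k^*\|^2$ and $\|z^h-z^{h-1}\|^2$ so that the recursion closes with a contraction factor of the advertised form, and then converting that into a per-step prime-gap bound via Lipschitzness. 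Propagating the resulting geometric rate against the growing horizon $H_k=k$ and the weights $\gamma_k=k^{1/2}$, and verifying the $K^{-1/4}$ aggregate rate, is the most computation-heavy part but is routine once the contraction is in hand; I would relegate the detailed constants to an appendix.
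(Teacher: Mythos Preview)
Your decomposition produces a second term $\tfrac{1}{\beta}\langle \theta_k-\theta_{k-1},\theta_k-x\rangle$, and you propose to control its weighted sum by Cauchy--Schwarz and a bound on $\sum_k\gamma_k\|\theta_k-\theta_{k-1}\|$. This is where the argument breaks. You assert that ``consecutive proximal updates move by $O(\beta\cdot(\text{pseudo-gradient bound}))$'', but that is a \emph{constant} per-step bound: if $\|\theta_k-\theta_{k-1}\|\le C$ for all $k$, then $\sum_{k=1}^K k^{1/2}\|\theta_k-\theta_{k-1}\|$ is of order $\sum_{k=1}^K k^{1/2}$, and dividing by $\sum_{k=1}^K k^{1/2}$ gives $O(1)$, not $O(K^{-1/2})$. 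Nothing in the algorithm forces $\|\theta_k-\theta_{k-1}\|\to 0$ without further structure, and since you must take $\sup_x$ inside the sum (the best $x$ depends on $k$), no telescoping survives either.

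The missing ingredient is Assumption~\ref{assumption4}: the existence of a Minty solution $\theta_*$. The paper does not decompose $\langle F(\theta_k),\theta_k-x\rangle$ via $F_{k-1}$; instead it first bounds the full prime gap by $C\|\theta_k-\overline{\theta_k}\|$ (where $\overline{\theta_k}$ solves SVI$(F_k,\Theta)$), using Lipschitzness of $F$ and the fact that $\overline{\theta_k}$ solves SVI$(F_k,\Theta)$ so $\langle F(\overline{\theta_k}),\overline{\theta_k}-x\rangle\le \tfrac{2D}{\beta}\|\theta_k-\overline{\theta_k}\|$. The quantity $\|\theta_k-\overline{\theta_k}\|$ is then controlled by a genuine telescope: the Minty inequality $\langle F(\theta_{k+1}),\theta_{k+1}-\theta_*\rangle\ge 0$ combined with $F=F_k-\tfrac1\beta(\cdot-\theta_k)$ and the inner-loop gap bound yields $\langle\theta_{k+1}-\theta_k,\theta_*-\theta_{k+1}\rangle\ge -\beta C_1 l_1^{H_k/2}$, and expanding $\|\theta_k-\theta_*\|^2$ via the parallelogram identity gives
\[
\tfrac12\|\theta_k-\overline{\theta_k}\|^2\le\|\theta_k-\theta_*\|^2-\|\theta_{k+1}-\theta_*\|^2+\|\overline{\theta_k}-\theta_{k+1}\|^2+2\beta C_1 l_1^{H_k/2}.
\]
The first two terms telescope against the \emph{fixed} anchor $\theta_*$; after Abel summation with weights $\gamma_k=k^{1/2}$ and the diameter bound, this contributes $O(\gamma_K D^2/\sum_k\gamma_k)=O(K^{-1})$, and the inner-loop residuals $\|\overline{\theta_k}-\theta_{k+1}\|^2$ and $l_1^{H_k/2}$ give the $K^{-1/2}$ piece. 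Without $\theta_*$ there is no fixed point to anchor the telescope, and your approach as written does not close. (Your inner-loop Lyapunov guess is also slightly off: the paper couples $\|z^h-\overline{\theta_k}\|^2$ with $\|\theta^h-\theta^{h-1}\|^2$, not $\|z^h-z^{h-1}\|^2$, but that is a secondary issue.)
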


\begin{proof}
By  Lemma  \ref{propertyofprox} (ii), we have
\begin{align*}
\frac{\left\|z_{i}^{h+1}-\theta_{i}\right\|^{2}}{2} \leqslant &
\frac{\left\|z_{i}^{h}-\theta_{i}\right\|^{2}}{2}-
\left\langle \eta F_{i,k}(\theta^{h+1}), \theta_{i}^{h+1}-\theta_{i}\right\rangle+\frac{\left\|\eta F_{i,k}(\theta^{h+1})-\eta F_{i,k}(\theta^{h})\right\|^{2}}{2}\\
&-\frac{\left\| \theta_{i}^{h+1}-z_{i}^{h}\right\|^{2}}{2}, \forall \ i \in \mathcal{N}.
\end{align*}
Taking summation for both sides of the above inequality from $i=1$ to $N$ and rearranging the above inequality lead to
\begin{equation}\label{prox12}
\begin{aligned}
\left\langle \eta F_{k}(\theta^{h+1}), \theta^{h+1}-\theta\right\rangle \leqslant &\frac{\left\|z^{h}-\theta\right\|^{2}}{2}-
\frac{\left\|z^{h+1}-\theta \right\|^{2}}{2} +\frac{\left\|\eta F_{k}(\theta^{h+1})-\eta F_{k}(\theta^{h})\right\|^{2}}{2}\\
&-\frac{\left\| \theta^{h+1}-z^{h}\right\|^{2}}{2}.
\end{aligned}
\end{equation}
For the third term on the right side of the above inequality, by Assumptions \ref{assumption0}-\ref{assumption2} and Lemma
\ref{stronglylipschitz}, we have
\begin{equation}\nonumber
\begin{aligned}
\frac{\left\|\eta F_{k}(\theta^{h+1})-\eta F_{k}(\theta^{h})\right\|^{2}}{2} \leqslant \eta^{2} \left(L^{2}+\frac{1}{\beta^{2}}\right)\left\| \theta^{h+1}-\theta^{h}\right\|^{2}.
\end{aligned}
\end{equation}
This together with (\ref{prox12}) gives
\begin{equation}\label{prox2}
\begin{aligned}
\left\langle \eta F_{k}(\theta^{h+1}), \theta^{h+1}-\theta\right\rangle \leqslant &\frac{\left\|z^{h}-\theta\right\|^{2}}{2}-
\frac{\left\|z^{h+1}-\theta \right\|^{2}}{2} +\eta^{2} \left(L^{2}+\frac{1}{\beta^{2}}\right)\left\| \theta^{h+1}-\theta^{h}\right\|^{2}\\
&-\frac{\left\| \theta^{h+1}-z^{h}\right\|^{2}}{2}.
\end{aligned}
\end{equation}
For the term $\left\| \theta^{h+1}-\theta^{h}\right\|^{2}$ in the above inequality,  by $C_{2}$ inequality, we have
\begin{equation}\label{Young}
\begin{aligned}
\left\| \theta^{h+1}-\theta^{h}\right\|^{2}\leqslant 2\left\| \theta^{h+1}-z^{h}\right\|^{2}+ 2\left\| \theta^{h}-z^{h}\right\|^{2}.
\end{aligned}
\end{equation}
For the first term on the right side of (\ref{Young}), by the non-expansion property of the proximal mapping in Lemma \ref{propertyofprox} (iii), Assumptions \ref{assumption0}-\ref{assumption2} and Lemma \ref{stronglylipschitz}, we have
\begin{align*}
 \left\| \theta^{h}-z^{h}\right\|^{2}=\sum_{i=1}^{N} \left\| \theta^{h}_{i}-z^{h}_{i}\right\|^{2}& \leqslant \sum_{i=1}^{N}\left\|\eta F_{i,k}(\theta^{h-1})-\eta F_{i,k}(\theta^{h})\right\|^{2}\\
 &=\eta^{2}\left\| F_{k}(\theta^{h-1})- F_{k}(\theta^{h})\right\|^{2}\leqslant \eta^{2}\left(2L^{2}+\frac{2}{\beta^{2}}\right)\left\| \theta^{h-1}-\theta^{h}\right\|^{2}.
\end{align*}
This together with $(\ref{Young})$ gives
\begin{equation}\nonumber
\begin{aligned}
\left\| \theta^{h+1}-\theta^{h}\right\|^{2}\leqslant 2\left\| \theta^{h+1}-z^{h}\right\|^{2}+ 2\eta^{2}\left(2L^{2}+\frac{2}{\beta^{2}}\right)\left\| \theta^{h-1}-\theta^{h}\right\|^{2}.
\end{aligned}
\end{equation}
By $\left\| \theta^{h+1}-\theta^{h}\right\|^{2}=
2\left\| \theta^{h+1}-\theta^{h}\right\|^{2}-\left\| \theta^{h+1}-\theta^{h}\right\|^{2}$ and the above inequality, we have
\begin{equation}\nonumber
\begin{aligned}
\left\| \theta^{h+1}-\theta^{h}\right\|^{2}
\leqslant 4
\left\| \theta^{h+1}-z^{h}\right\|^{2}
+4\eta^{2}\left(2L^{2}+\frac{2}{\beta^{2}}\right)\left\| \theta^{h-1}-\theta^{h}\right\|^{2}-\left\| \theta^{h+1}-\theta^{h}\right\|^{2}.
\end{aligned}
\end{equation}
Combining (\ref{prox2})  with the above inequality gives
\begin{equation}\nonumber
\begin{aligned}
\left\langle \eta F_{k}(\theta^{h+1}), \theta^{h+1}-\theta\right\rangle \leqslant & \frac{\left\|z^{h}-\theta\right\|^{2}}{2}-
\frac{\left\|z^{h+1}-\theta \right\|^{2}}{2} +\left(4\eta^{2} \left(L^{2}+\frac{1}{\beta^{2}}\right)-\frac{1}{2}\right)\left\| \theta^{h+1}-z^{h}\right\|^{2}\\
&+8\eta^{4} \left(L^{2}+\frac{1}{\beta^{2}}\right)^{2}\left\| \theta^{h-1}-\theta^{h}\right\|^{2}-\eta^{2} \left(L^{2}+\frac{1}{\beta^{2}}\right)\left\| \theta^{h+1}-\theta^{h}\right\|^{2}.
\end{aligned}
\end{equation}
According to Assumptions
\ref{assumption0}-\ref{assumption2} and Lemma
\ref{stronglymonotone}, we know that $F_{k}(\theta)$ is $\left(\frac{1}{\beta}-L\right)$-strongly monotone and then SVI($F_{k},\Theta)$ has a unique solution ~\citep{KINDERLEHRER1980}, which is  denoted by $\overline{\theta_{k}}$. Substituting $\overline{\theta_{k}}$ for $\theta$ in the above inequality leads to
\begin{align}
&\left\langle \eta F_{k}(\theta^{h+1}), \theta^{h+1}-\overline{\theta_{k}}\right\rangle \notag\\
\leqslant & \frac{\left\|z^{h}-\overline{\theta_{k}}\right\|^{2}}{2}-
\frac{\left\|z^{h+1}-\overline{\theta_{k}} \right\|^{2}}{2} +\left(4\eta^{2} \left(L^{2}+\frac{1}{\beta^{2}}\right)-\frac{1}{2}\right)\left\| \theta^{h+1}-z^{h}\right\|^{2}
+8\eta^{4} \left(L^{2}+\frac{1}{\beta^{2}}\right)^{2}\notag\\
&\times \left\| \theta^{h-1}-\theta^{h}\right\|^{2}-\eta^{2} \left(L^{2}+\frac{1}{\beta^{2}}\right)\left\| \theta^{h+1}-\theta^{h}\right\|^{2}.\label{prox223}
\end{align}
Noting that $\overline{\theta_{k}}$ solves SVI($F_{k},\Theta)$, by $C_{2}$ inequality and Lemma \ref{stronglymonotone}, we have
\begin{align}\label{prox224}
\eta\left(\frac{1}{\beta}-L\right)\frac{\left\|z^{h}-
\overline{\theta_{k}}\right\|^{2}}{2}
-\eta\left(\frac{1}{\beta}-L\right)\left\|z^{h}-
\theta^{h+1}\right\|^{2}
&\leqslant \eta\left(\frac{1}{\beta}-L\right)\left\|\theta^{h+1}-
\overline{\theta_{k}}\right\|^{2}\notag\\
& \leqslant \eta \left\langle F_{k}(\theta^{h+1})-F_{k}(\overline{\theta_{k}}),
\theta^{h+1}-\overline{\theta_{k}}\right\rangle \notag\\
& \leqslant \left\langle \eta F_{k}(\theta^{h+1}),
\theta^{h+1}-\overline{\theta_{k}}\right\rangle.
\end{align}
Taking summation for both sides of  (\ref{prox223}) and the above inequality gives
\begin{equation}\nonumber
\begin{aligned}
\frac{\left\|z^{h+1}-\overline{\theta_{k}} \right\|^{2}}{2} \leqslant & \left(1-\eta\left(\frac{1}{\beta}-L\right)\right)
\frac{\left\|z^{h}-\overline{\theta_{k}}\right\|^{2}}{2}+8\eta^{4} \left(L^{2}+\frac{1}{\beta^{2}}\right)^{2}\left\| \theta^{h-1}-\theta^{h}\right\|^{2}-\eta^{2} \bigg(L^{2}
\\& +\frac{1}{\beta^{2}}\bigg)
\left\| \theta^{h+1}-\theta^{h}\right\|^{2}+\left[4\eta^{2} \left(L^{2}+\frac{1}{\beta^{2}}\right)+
\eta\left(\frac{1}{\beta}-L\right)-\frac{1}{2}\right] \left\| \theta^{h+1}-z^{h}\right\|^{2}.
\end{aligned}
\end{equation}
From $\eta < \frac{-\left(\frac{1}{\beta}-L\right)+
\left(\left(\frac{1}{\beta}-L\right)^{2}+8\left(L^{2}+\frac{1}{\beta^{2}}\right)\right)^{\frac{1}{2}}
}{8}$, we know that
$4\eta^{2} \left(L^{2}+\frac{1}{\beta^{2}}\right)+\eta\left(\frac{1}{\beta}-L\right)-\frac{1}{2} \leqslant 0$, which together with the above inequality gives
\begin{equation}\nonumber
\begin{aligned}
\frac{\left\|z^{h+1}-\overline{\theta_{k}} \right\|^{2}}{2} \leqslant & \left(1-\eta\left(\frac{1}{\beta}-L\right)\right)
\frac{\left\|z^{h}-\overline{\theta_{k}}\right\|^{2}}{2}+
8\eta^{4} \left(L^{2}+\frac{1}{\beta^{2}}\right)^{2}\left\| \theta^{h-1}-\theta^{h}\right\|^{2}\\
& -\eta^{2} \left(L^{2}+\frac{1}{\beta^{2}}\right)\left\| \theta^{h+1}-\theta^{h}\right\|^{2}.
\end{aligned}
\end{equation}
Then, from $\eta < \frac{-\left(\frac{1}{\beta}-L\right)+
\left(\left(\frac{1}{\beta}-L\right)^{2}+32\left(L^{2}+\frac{1}{\beta^{2}}\right)\right)^{\frac{1}{2}}
}{16}$, we know that $8\eta^{4} \left(L^{2}+\frac{1}{\beta^{2}}\right)^{2} \leqslant \eta^{2} \Big(L^{2}\\+\frac{1}{\beta^{2}}\Big) \left(1-\eta\left(\frac{1}{\beta}-L\right)\right)
$. This together with the above inequality gives
\begin{equation}\nonumber
\begin{aligned}
\frac{\left\|z^{h+1}-\overline{\theta_{k}} \right\|^{2}}{2} +\eta^{2} \Big(L^{2}+\frac{1}{\beta^{2}}\Big)\left\| \theta^{h+1}-\theta^{h}\right\|^{2} \leqslant & l_{1}
\Bigg(\frac{\left\|z^{h}-\overline{\theta_{k}}\right\|^{2}}{2}
+\eta^{2} \left(L^{2}+\frac{1}{\beta^{2}}\right)\\
& \times\left\| \theta^{h}-\theta^{h-1}\right\|^{2}\Bigg),
\end{aligned}
\end{equation}
where $l_{1}=1-\eta\big(\frac{1}{\beta}-L\big)$. From $0<\eta < \frac{1}{\frac{1}{\beta}-L}$, it follows that $l_{1} \in (0,1)$. By Assumption \ref{assumption1} and the above inequality, we have
\begin{equation}\nonumber
\begin{aligned}
&\frac{\left\|z^{H_{k}+1}-\overline{\theta_{k}} \right\|^{2}}{2} +\eta^{2} \left(L^{2}+\frac{1}{\beta^{2}}\right)\left\| \theta^{H_{k}+1}-\theta^{H_{k}}\right\|^{2}\\
\leqslant  & l_{1}^{H_{k}}\left(\frac{\left\|z^{1}-\overline{\theta_{k}}\right\|^{2}}{2}+\eta^{2} \left(L^{2}+\frac{1}{\beta^{2}}\right)\left\| \theta^{1}-\theta^{0}\right\|^{2}\right)
\leqslant l_{1}^{H_{k}} \left(1+2\left(L^{2}+\frac{1}{\beta^{2}}\right)\right)2D^{2},
\end{aligned}
\end{equation}
and thus
\begin{equation}\label{prox2261}
\begin{aligned}
\frac{\left\|z^{H_{k}+1}-\overline{\theta_{k}} \right\|^{2}}{2} \leqslant l_{1}^{H_{k}} \left(1+2\left(L^{2}+\frac{1}{\beta^{2}}\right)\right)2D^{2}.
\end{aligned}
\end{equation}
By Assumptions \ref{assumption0}-\ref{assumption2}, Lemma \ref{stronglymonotone} and Lemma \ref{contorolstrong}, we have
\begin{equation}\nonumber
\left\langle F_{k}\big(\theta_{k+1}\big), \theta_{k+1}-\theta\right\rangle \leqslant
2D\left(2L^{2}+\frac{2}{\beta^{2}}\right)^{\frac{1}{2}}\big(2+\sqrt{2}\big)\left\|z^{H_{k}+1}-\overline{\theta_{k}} \right\|.
\end{equation}
Substituting (\ref{prox2261}) into the above inequality gives
\begin{equation}\label{prox228}
\begin{aligned}
\left\langle F_{k}\big(\theta_{k+1}\big), \theta_{k+1}-\theta\right\rangle & \leqslant C_{1}l_{1}^{\frac{H_{k}}{2}},\ \forall \ \theta \in \Theta,
\end{aligned}
\end{equation}
where $C_{1}=8\big(1+\sqrt{2}\big)D^{2}
\left(L^{2}+\frac{1}{\beta^{2}}\right)^{\frac{1}{2}}
\left(1+2\left(L^{2}+\frac{1}{\beta^{2}}\right)\right)^{\frac{1}{2}}$. Noting that $\overline{\theta_{k}}$ is the solution of  SVI($F_{k},\Theta)$, by
(\ref{prox228}), we have
\begin{equation}\label{thetawidetheta}
\begin{aligned}
\left\langle  F_{k}\big(\theta_{k+1}\big)-F_{k}\big(
\overline{\theta_{k}}\big), \theta_{k+1}-\overline{\theta_{k}}\right\rangle =&\left\langle F_{k}\big(\theta_{k+1}\big), \theta_{k+1}-\overline{\theta_{k}}\right\rangle+\left\langle -F_{k}\big(\overline{\theta_{k}}\big), \theta_{k+1}-\overline{\theta_{k}}\right\rangle\\
\leqslant & C_{1}l_{1}^{\frac{H_{k}}{2}}.
\end{aligned}
\end{equation}
Combining  Assumptions \ref{assumption0}-\ref{assumption2} and Lemma \ref{stronglymonotone} with the above inequality implies
\begin{equation}\label{answerofstrong}
\left(\frac{1}{\beta}-L\right) \left\|\theta_{k+1}-\overline{\theta_{k}}\right\|^{2} \leqslant \left\langle  F_{k}\big(\theta_{k+1}\big)-F_{k}\big(\overline{\theta}_{k}\big), \theta_{k+1}-\overline{\theta}_{k}\right\rangle \leqslant C_{1}l_{1}^{\frac{H_{k}}{2}}.
\end{equation}
By Assumption \ref{assumption4}, we know that MVI$(F,\Theta)$ has a solution, which is denoted by $\theta_{*}$. This together with $F\big(\theta_{k+1}\big)=F_{k}\big(\theta_{k+1}\big)
-\frac{1}{\beta}$\
$\big(\theta_{k+1}-\theta_{k}\big)$ gives
\begin{equation}\nonumber
 \left\langle \beta F_{k}\big(\theta_{k+1}\big)-\big(\theta_{k+1}-\theta_{k}\big), \theta_{k+1}-\theta_{*}\right\rangle \geqslant 0.
\end{equation}
Then, combining  $(\ref{prox228})$ with the above inequality leads to
\begin{equation}\nonumber
-\beta \left(C_{1}l_{1}^{\frac{H_{k}}{2}}\right) \leqslant
\left\langle \beta F_{k}\big(\theta_{k+1}\big), \theta_{*}-\theta_{k+1}\right\rangle \leqslant
 \left\langle  \theta_{k+1}-\theta_{k}, \theta_{*}-\theta_{k+1}\right\rangle,
\end{equation}
that is,
\begin{equation}\nonumber
-\beta \left(C_{1}l_{1}^{\frac{H_{k}}{2}}\right)  \leqslant
 \left\langle  \theta_{k+1}-\theta_{k}, \theta_{*}-\theta_{k+1}\right\rangle .
\end{equation}
This together with $\frac{1}{2}\left\|\theta_{k}-\overline{\theta_{k}}
\right\|^{2}+2\left\|\overline{\theta_{k}}-\theta_{k+1}\right\|^{2} \geqslant -2\left\langle \theta_{k}-\overline{\theta}_{k}, \overline{\theta}_{k}-\theta_{k+1} \right\rangle$ gives
\begin{align*}
\left\|\theta_{k}-\theta_{*}\right\|^{2}
=&\left\|\theta_{k}-\theta_{k+1}+
\theta_{k+1}-\theta_{*}\right\|^{2}\\
=&\left\|\theta_{k}-\theta_{k+1}\right\|^{2}+
\left\|\theta_{k+1}-\theta_{*}\right\|^{2}+
2\left\langle \theta_{k}-\theta_{k+1}, \theta_{k+1}-\theta_{*}\right\rangle\\
\geqslant & \left\|\theta_{k}-\theta_{k+1}\right\|^{2}+
\left\|\theta_{k+1}-\theta_{*}\right\|^{2}-2\beta \left(C_{1}l_{1}^{\frac{H_{k}}{2}}\right) \\
=&\left\|\theta_{k}-\overline{\theta_{k}}
\right\|^{2}+\left\|\overline{\theta_{k}}-\theta_{k+1}\right\|^{2} +2\left\langle \theta_{k}-\overline{\theta}_{k}, \overline{\theta}_{k}-\theta_{k+1} \right\rangle\\
&+\left\|\theta_{k+1}-\theta_{*}\right\|^{2}-2\beta \left(C_{1}l_{1}^{\frac{H_{k}}{2}}\right)\\
\geqslant &\left\|\theta_{k}-\overline{\theta_{k}}
\right\|^{2}+\left\|\overline{\theta_{k}}-
\theta_{k+1}\right\|^{2}-\frac{1}{2} \left\|\theta_{k}-\overline{\theta_{k}}
\right\|^{2}-2\left\|\overline{\theta_{k}}-\theta_{k+1}\right\|^{2}\\
&+\left\|\theta_{k+1}-\theta_{*}\right\|^{2}-2\beta \left(C_{1}l_{1}^{\frac{H_{k}}{2}}\right).
\end{align*}
Then we have
\begin{equation}\nonumber
\begin{aligned}
\frac{1}{2} \left\|\theta_{k}-\overline{\theta_{k}}
\right\|^{2} \leqslant \left\|\theta_{k}-\theta_{*}\right\|^{2} -
\left\|\theta_{k+1}-\theta_{*}\right\|^{2}+
\left\|\overline{\theta_{k}}-\theta_{k+1}\right\|^{2}+2\beta C_{1}l_{1}^{\frac{H_{k}}{2}}.
\end{aligned}
\end{equation}
Multiply both sides of
the above inequality  by $\gamma_{k}$  and
take summation for both sides  from $i=1$ to $N$. Then, by (\ref{answerofstrong}), we have
\begin{align}\nonumber
& \frac{1}{2} \sum_{k=1}^{K} \gamma_{k}\left\|\theta_{k}-\overline{\theta_{k}}
\right\|^{2} \\ \leqslant &\sum_{k=1}^{K} \gamma_{k}\big(\left\|\theta_{k}-\theta_{*}\right\|^{2} -
\left\|\theta_{k+1}-\theta_{*}\right\|^{2}\big)+\left(\frac{1}{\frac{1}{\beta}-L}+2\beta\right)\sum_{k=1}^{K}\gamma_{k}  \left(C_{1}l_{1}^{\frac{H_{k}}{2}}\right)\notag\\
\leqslant & \sum_{k=1}^{K} \big(\gamma_{k-1} \left\|\theta_{k}-\theta_{*}\right\|^{2} -\gamma_{k}
\left\|\theta_{k+1}-\theta_{*}\right\|^{2}\big)+\sum_{k=1}^{K} \big(\gamma_{k}-\gamma_{k-1}\big) \left\|\theta_{k}-\theta_{*}\right\|^{2}\notag\\
&+\left(\frac{1}{\frac{1}{\beta}-L}
+2\beta\right)\sum_{k=1}^{K}\gamma_{k}  \left(C_{1}l_{1}^{\frac{H_{k}}{2}}\right)\notag\\
=&\gamma_{0}\left\|\theta_{1}-\theta_{*}\right\|^{2}
-\gamma_{K}\left\|\theta_{K+1}-\theta_{*}\right\|^{2}+\big(\gamma_{K}-\gamma_{0}\big)4D^{2}+
\left(\frac{1}{\frac{1}{\beta}-L}+2\beta\right)\sum_{k=0}^{K-1}\gamma_{k}  \left(C_{1}l_{1}^{\frac{H_{k}}{2}}\right)\notag\\
\leqslant & 4\gamma_{K}D^{2}+
\left(\frac{1}{\frac{1}{\beta}-L}+2\beta\right)\sum_{k=1}^{K}\gamma_{k}  \left(C_{1}l_{1}^{\frac{H_{k}}{2}}\right)\notag.
\end{align}
Dividing  both sides of the above inequality by $\sum_{k=1}^{K}\gamma_{k}$, we have
\begin{equation}\label{nearfinal1}
\begin{aligned}
\frac{\sum_{k=1}^{K} \gamma_{k}\left\|\theta_{k}-\overline{\theta_{k}}
\right\|^{2}}{\sum_{k=1}^{K}\gamma_{k}} \leqslant 8\frac{\gamma_{K}D^{2}}{\sum_{k=1}^{K} \gamma_{k}}+\frac{2\left(\frac{1}{\frac{1}{\beta}-L}+
2\beta\right)\sum_{k=1}^{K}\gamma_{k}  C_{1}l_{1}^{\frac{H_{k}}{2}}}{\sum_{k=1}^{K} \gamma_{k}}.
\end{aligned}
\end{equation}
From $l_{1}=1-\eta\left(\frac{1}{\beta}-L\right) \in (0,1)$, we know that $l_{1}^{H_{k}} < \frac{1}{eH_{k}\eta\left(\frac{1}{\beta}-L\right)}$, $l_{1}^{\frac{H_{k}}{2}}< \frac{1}{H_{k}^{\frac{1}{2}}\left(e\eta\left(\frac{1}{\beta}-L\right)\right)^{\frac{1}{2}}}$. This together with the above inequality implies
\begin{equation}\nonumber
\begin{aligned}
\frac{\sum_{k=1}^{K} \gamma_{k}\left\|\theta_{k}-\overline{\theta_{k}}
\right\|^{2}}{\sum_{k=1}^{K}\gamma_{k}} \leqslant 8\frac{\gamma_{K}D^{2}}{\sum_{k=1}^{K} \gamma_{k}}+\frac{2\left(\frac{1}{\frac{1}{\beta}-L}
+2\beta\right)\sum_{k=1}^{K}\gamma_{k}  C_{1}\frac{1}{H_{k}^{\frac{1}{2}}\left(e\eta\left(\frac{1}{\gamma}
-L\right)\right)^{\frac{1}{2}}}}{\sum_{k=1}^{K} \gamma_{k}}.
\end{aligned}
\end{equation}
Combining $H_{k}=k$, $\gamma_{k}=k^{\frac{1}{2}}$ and
$\sum_{k=1}^{K} k^{\frac{1}{2}} \geqslant \frac{2}{3}K^
{\frac{3}{2}}$ with the above inequality, we have
\begin{equation}
\begin{aligned}
\frac{\sum_{k=1}^{K} k^{\frac{1}{2}}\left\|\theta_{k}-\overline{\theta_{k}}
\right\|^{2}}{\sum_{k=1}^{K}k^{\frac{1}{2}}}
\leqslant 12D^{2}\frac{1}{K}+3
\left(\frac{1}{\frac{1}{\beta}-L}+2\beta\right)
 \frac{C_{1}}{\left(e\eta\left(\frac{1}{\beta}-L\right)\right)^{\frac{1}{2}}}
 \frac{1}{K^{\frac{1}{2}}}.
\end{aligned}
\end{equation}
Applying Lyapunov inequality into the above inequality leads to
\begin{align}
\frac{\sum_{k=1}^{K} k^{\frac{1}{2}}\left\|\theta_{k}-\overline{\theta_{k}}
\right\|}{\sum_{k=1}^{K}k^{\frac{1}{2}}}
\leqslant &\left(\frac{\sum_{k=1}^{K} k^{\frac{1}{2}}\left\|\theta_{k}-\overline{\theta_{k}}
\right\|^{2}}{\sum_{k=1}^{K}k^{\frac{1}{2}}}\right)^{\frac{1}{2}}\notag\\
\leqslant & 2\sqrt{3}D\frac{1}{K^{\frac{1}{2}}}+\left(3
\left(\frac{1}{\frac{1}{\beta}-L}+2\beta\right)
\frac{C_{1}}{\left(e\eta\left(\frac{1}{\beta}-L\right)\right)^{\frac{1}{2}}}\right)^{\frac{1}{2}}
\frac{1}{K^{\frac{1}{4}}}.\label{realandal}
\end{align}
By Assumptions \ref{assumption0}-\ref{assumption2}, Lemma \ref{FLIPSCHITZ} and Lemma \ref{weaklyandstrongly}, we have
\begin{align}
\langle F(\theta_{k}), \theta_{k}-\theta\rangle=&\langle F(\theta_{k})-F(\overline{\theta_{k}}), \theta_{k}-\overline{\theta_{k}}\rangle+\langle F(\theta_{k})-F(\overline{\theta_{k}}),\overline{\theta_{k}}-\theta\rangle\notag\\
&+\langle F(\overline{\theta_{k}}), \theta_{k}-\overline{\theta_{k}}\rangle+\langle F(\overline{\theta_{k}}),\overline{\theta_{k}}-\theta\rangle\notag\\
\leqslant&  2LD\left\|\theta_{k}-\overline{\theta_{k}}\right\|
+2LD\left\|\theta_{k}-\overline{\theta_{k}}\right\|
+\left\|F(\overline{\theta_{k}})\right\|\left\|\theta_{k}-\overline{\theta_{k}}\right\|
+\frac{2D}{\beta}\left\|\theta_{k}-\overline{\theta_{k}}\right\|.\label{prox230}
\end{align}
By Proposition \ref{policygradient}, we have
\begin{align*}
\left\|F(\theta)\right\|^{2}=\sum_{n=1}^{N}\left\|\nabla_{\theta_{i}} J_{i}^{\theta}(t)\right\|^{2}
\leqslant & \sum_{n=1}^{N} \bigg\|\int_{ \mathcal{S} \times \mathcal{A}} \int_{ \mathcal{S}}\sum_{l=t}^{\infty} \gamma^{l-t} \rho^{\theta}\big(s(l)=s^{\prime} \mid s(t)=s\big) \nabla_{\theta_{i}} \pi_{\theta_{i}}(a_{i} \mid s^{\prime})  \\ &\pi_{\theta_{-i}}(a_{-i} \mid s^{\prime})
 Q_{i}^{\theta}(s^{\prime}, a,l) \rho_{t}(s)\mathrm{d} s \mathrm{d} s^{\prime} \mathrm{d} a \bigg\| ^{2}.
\end{align*}
From (\ref{valuefunction}), (\ref{qvaluefunction}) and Assumption \ref{assumption1}, it follows that
$|Q_{i}^{\theta}(s^{\prime}, a,l)| \leqslant \frac{U_{R}}{1-\gamma}$, which together with Assumption \ref{assumption0}, Assumption \ref{assumption2} and the above inequality implies
\begin{align}\label{boundedoff}
\left\|F(\overline{\theta_{k}})\right\|
&\leqslant
\left(\sum_{n=1}^{N}\left\|\sum_{l=t}^{\infty} \gamma^{l-t}  B_{\Theta} \frac{U_{R}}{1-\gamma} \int_{ \mathcal{S} \times \mathcal{A}} \int_{ \mathcal{S}}\rho^{\theta_{k}}\big(s(l)=s^{\prime} \mid s(t)=s\big) \rho_{t}(s)\mathrm{d} s\mathrm{d} s^{\prime} \mathrm{d} a \right\|^{2}\right)^{\frac{1}{2}}\notag\\
&\leqslant \left(\sum_{n=1}^{N}\left\|\sum_{l=t}^{\infty} \gamma^{l-t}  B_{\Theta} \frac{U_{R}}{1-\gamma}\right\|^{2}\right)^{\frac{1}{2}} \leqslant \frac{\sqrt{N}B_{\Theta}U_{R}}{(1-\gamma)^{2}}.
\end{align}
By (\ref{prox230}) and the above inequality, we have
\begin{equation}\label{prox231}
\begin{aligned}
\langle F(\theta_{k}), \theta_{k}-\theta\rangle\leqslant \left(4LD+\frac{2D}{\beta}+\frac{\sqrt{N}B_{\Theta}U_{R}}{(1-\gamma)^{2}}
\right)\left\|\theta_{k}-\overline{\theta_{k}}\right\|.
\end{aligned}
\end{equation}
Then, combining (\ref{realandal}) with the above inequality gives
\begin{equation}\nonumber
\begin{aligned}
\frac{\sum_{k=1}^{K} k^{\frac{1}{2}}\left(\sup \limits_{\theta \in \Theta} \left\langle F\big(\theta_{k}\big), \theta_{k}-\theta \right\rangle\right)}
{\sum_{k=1}^{K}k^{\frac{1}{2}}}
\leqslant&
\frac{\sum_{k=1}^{K} k^{\frac{1}{2}}\left(4LD+\frac{2D}{\beta}+\frac{\sqrt{N}B_{\Theta}U_{R}}{(1-\gamma)^{2}}
\right)\left\|\theta_{k}-\overline{\theta_{k}}
\right\|}{\sum_{k=1}^{K}k^{\frac{1}{2}}}\\
\leqslant& 2\sqrt{3}D\left(4LD+\frac{2D}{\beta}+\frac{\sqrt{N}B_{\Theta}U_{R}}{(1-\gamma)^{2}}
\right)\frac{1}{K^{\frac{1}{2}}}
+C_{2}\frac{1}{K^{\frac{1}{4}}},
\end{aligned}
\end{equation}
where $C_{2}=\left(4LD+\frac{2D}{\beta}+\frac{\sqrt{N}B_{\Theta}U_{R}}{(1-\gamma)^{2}}
\right)\left(3
\left(\frac{1}{\frac{1}{\beta}-L}+2\beta\right)
\frac{C_{1}}{\left(e\eta\left(\frac{1}{\beta}-L\right)\right)^{\frac{1}{2}}}\right)^{\frac{1}{2}}
$.
Note that \\ $\sup \limits_{\theta_{i} \in \Theta_{i}} \big\langle F_{i}\big(\theta_{k}\big), \theta_{i,k}-\theta_{i}\big\rangle \leqslant  \sup \limits_{\theta \in \Theta} \left\langle F\big(\theta_{k}\big), \theta_{k}-\theta \right\rangle$. This together with the above inequality gives
\begin{align*}
\frac{\sum_{k=1}^{K} k^{\frac{1}{2}}\left( \sup \limits_{\theta_{i} \in \Theta_{i}} \left\langle F_{i}(\theta_{k}), \theta_{i,k}-\theta_{i}\right\rangle\right)}
{\sum_{k=1}^{K}k^{\frac{1}{2}}}
\leqslant& 2\sqrt{3}D\left(4LD+\frac{2D}{\beta}+\frac{\sqrt{N}B_{\Theta}U_{R}}{(1-\gamma)^{2}}
\right)\frac{1}{K^{\frac{1}{2}}}+C_{2}\frac{1}{K^{\frac{1}{4}}}.
\end{align*}
Then, from Assumptions \ref{assumption0}-\ref{assumption3}, Lemma \ref{gradientd} and the above inequality, it follows that
\begin{align*}
&\sup _{i \in \mathcal{N}}\frac{\sum_{k=1}^{K} k^{\frac{1}{2}}\left(\sup \limits_{\theta_{i} \in \Theta_{i}}J_{i}^{(\theta_{i}, \theta_{-i,k})}(t)-J_{i}^{(\theta_{i,k}, \theta_{-i,k})}(t)\right)}{\sum_{k=1}^{K}k^{\frac{1}{2}}}\\
\leqslant & \sup _{i \in \mathcal{N}}
\frac{\sum_{k=1}^{K} k^{\frac{1}{2}}M_{1}\left( \sup \limits _{\theta_{i} \in \Theta_{i}} \left\langle F_{i}(\theta_{k}), \theta_{i,k}-\theta_{i}\right\rangle\right)}{\sum_{k=1}^{K}k^{\frac{1}{2}}}\\
\leqslant& 2\sqrt{3}DM_{1}\left(4LD+\frac{2D}{\beta}
+\frac{\sqrt{N}B_{\Theta}U_{R}}{(1-\gamma)^{2}}
\right)\frac{1}{K^{\frac{1}{2}}}+C_{2}M_{1}\frac{1}{K^{\frac{1}{4}}}
= L(K).
\end{align*}
 By the above inequality,
for any countable set $\widetilde{\Theta_{i}} \subseteq \Theta_{i}$, we have
\begin{align*}
& \sup _{i \in \mathcal{N}}\frac{\sum_{k=1}^{K} k^{\frac{1}{2}}\left(\sup \limits_{\theta_{i} \in \widetilde{\Theta_{i}}}J_{i}^{(\theta_{i}, \theta_{-i,k})}(t)-J_{i}^{(\theta_{i,k}, \theta_{-i,k})}(t)\right)}{\sum_{k=1}^{K}k^{\frac{1}{2}}}\\
\leqslant &
\sup _{i \in \mathcal{N}}\frac{\sum_{k=1}^{K} k^{\frac{1}{2}}\left(\sup \limits_{\theta_{i} \in \Theta_{i}}J_{i}^{(\theta_{i}, \theta_{-i,k})}(t)-J_{i}^{(\theta_{i,k}, \theta_{-i,k})}(t)\right)}{\sum_{k=1}^{K}k^{\frac{1}{2}}}
\leqslant  L(K),
\end{align*}
that is, (\ref{lem52}) holds. Then from Definition \ref{nashweightapp}, we know $\big((\pi_{\theta_{i,k}})_{i=1}^{N}\big)
_{k=1}^{\infty}$ is a $k^{\frac{1}{2}}$ is a $k^{\frac{1}{2}}$-weighted $L(k)$-Nash equilibrium of the game $\Gamma$.
\end{proof}

\begin{theorem}\label{exactgradient}
If Assumptions \ref{assumption0}-\ref{assumption4} hold and we choose $\gamma_{k}=k^{\frac{1}{2}}$, $\gamma_{0}=0$ and $H_{k}=k$, then $\big((\pi_{\theta_{i,k}})_{i=1}^{N}\big)
_{k=1}^{\infty}$ given by Algorithm \ref{alg::exactGradient} is a $k^{\frac{1}{2}}$-weighted asymptotic Nash equilibrium of the game $\Gamma$ and
$\lim \limits_{K\rightarrow \infty}\sup\limits_{i \in \mathcal{N}}
E\bigg[\sup\limits_{\theta_{i}\in \widetilde{\Theta_{i}}}J_{i}^{(\theta_{i}, \theta_{-i,\tau_K})}(t)-J_{i}^{(\theta_{i,\tau_K}, \theta_{-i,\tau_K})}(t)\bigg | \theta_{k},k=1,2,...,K\bigg]=0$, where $\widetilde{\Theta_{i}}$ is any countable subset of $\Theta_{i}$.
\end{theorem}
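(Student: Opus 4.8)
The plan is to derive Theorem~\ref{exactgradient} as a direct consequence of Lemma~\ref{exactlemma} together with Theorems~\ref{weightuninoweight} and~\ref{connectionofne}, so essentially no new estimation is needed. First I would invoke Lemma~\ref{exactlemma}: with the choices $\gamma_k=k^{1/2}$, $\gamma_0=0$, $H_k=k$, the policy sequence $\big((\pi_{\theta_{i,k}})_{i=1}^{N}\big)_{k=1}^{\infty}$ produced by Algorithm~\ref{alg::exactGradient} is a $k^{1/2}$-weighted $L(k)$-Nash equilibrium of $\Gamma$, where $L(K)$ is the explicit constant displayed in Lemma~\ref{exactlemma}. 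Since the pseudo gradients are exact in this section, the iterates $\theta_k$ are deterministic, so by the remark following Definitions~\ref{nashweightapp}--\ref{nashprobability} the countable set $\widetilde{\Theta_i}$ may be replaced by $\Theta_i$; this is already reflected in the conclusion of Lemma~\ref{exactlemma}.

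Next I would check that $L(K)\to 0$ as $K\to\infty$. Inspecting the formula for $L(K)$, it is the sum of one term of the form $(\text{const})\cdot K^{-1/2}$ and one term of the form $(\text{const})\cdot K^{-1/4}$, where the constants — built from $L$, $D$, $M_1$, $\beta$, $\eta$, $N$, $B_\Theta$, $U_R$, $\gamma$, and $e$ — do not depend on $K$; hence $L(K)=O(K^{-1/4})\to 0$. Because $\{L(k)\}$ is deterministic, it converges to zero with probability $1$ (and in probability). Applying Theorem~\ref{weightuninoweight} with $\epsilon_k=L(k)$, and using $\sum_{k\ge 1}k^{1/2}=\infty$, I conclude that $\big((\pi_{\theta_{i,k}})_{i=1}^{N}\big)_{k=1}^{\infty}$ is a $k^{1/2}$-weighted asymptotic Nash equilibrium of $\Gamma$, which is the first assertion.

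For the second assertion I would apply Theorem~\ref{connectionofne} with the same $\gamma_k=k^{1/2}$ and $\epsilon_k=L(k)$. Its hypotheses hold: the sequence is a $k^{1/2}$-weighted $L(k)$-Nash equilibrium with probability $1$ by the previous step; the random index drawn in the last step of Algorithm~\ref{alg::exactGradient} satisfies $P(\tau_K=k)=\gamma_k/\sum_{k=1}^{K}\gamma_k$ by construction; and $\{\tau_k,k\ge 1\}$ is independent of $\{\theta_k=(\theta_{i,k})_{i=1}^{N},k\ge 1\}$ because $\tau_K$ comes from an auxiliary random draw while the $\theta_k$ are generated by the deterministic inner/outer loop updates. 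Theorem~\ref{connectionofne} then yields
\[
\sup_{i\in\mathcal N}\mathbb E\Big[\sup_{\theta_i\in\widetilde{\Theta_i}}J_i^{(\theta_i,\theta_{-i,\tau_K})}(t)-J_i^{(\theta_{i,\tau_K},\theta_{-i,\tau_K})}(t)\,\Big|\,\theta_k,k=1,\ldots,K\Big]\le L(K),\qquad K=1,2,\ldots,
\]
for every countable $\widetilde{\Theta_i}\subseteq\Theta_i$. Letting $K\to\infty$ and using $L(K)\to 0$ from the second step gives the claimed limit, completing the proof.

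The main obstacle here is bookkeeping rather than analysis: all the heavy lifting — the single-call extra-gradient contraction in the inner loop, the telescoping over the proximal parameter in the outer loop (where $\gamma_0=0$ is used), and the passage from the gap function of $\mathrm{SVI}(F,\Theta)$ to the Nash gap via Lemma~\ref{gradientd} — already lives inside Lemma~\ref{exactlemma}. The only points needing care are (i) reading off from the expression for $L(K)$ that its two dominant terms carry the powers $K^{-1/2}$ and $K^{-1/4}$ with $K$-independent coefficients, so that $L(K)\to 0$, and (ii) verifying the independence and measurability conditions required to invoke Theorem~\ref{connectionofne}.
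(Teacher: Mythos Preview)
Your proposal is correct and follows essentially the same approach as the paper's own proof: invoke Lemma~\ref{exactlemma} to obtain the $k^{1/2}$-weighted $L(k)$-Nash equilibrium property, observe that $L(K)\to 0$, then apply Theorem~\ref{weightuninoweight} for the first assertion and Theorem~\ref{connectionofne} for the second. Your version is slightly more explicit about the determinism of the iterates and the independence of $\tau_K$, which are points the paper leaves implicit.
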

\begin{proof}
If  Assumptions \ref{assumption0}-\ref{assumption4} hold,  by Lemma \ref{exactlemma}, we have
\begin{equation}\nonumber
\begin{aligned}
\sup _{i \in \mathcal{N}}\frac{\sum_{k=1}^{K} k^{\frac{1}{2}}\left(\sup \limits_{\theta_{i} \in \widetilde{\Theta_{i}}}J_{i}^{(\theta_{i}, \theta_{-i,k})}(t)-J_{i}^{(\theta_{i,k}, \theta_{-i,k})}(t)\right)}{\sum_{k=1}^{K}k^{\frac{1}{2}}}
\leqslant& L(K).
\end{aligned}
\end{equation}
 Noting that $\lim \limits_{k\rightarrow \infty} L(k)=0$, by Theorem
\ref{weightuninoweight}, we know that  $\big((\pi_{\theta_{i,k}})_{i=1}^{N}\big)
_{k=1}^{\infty}$
is a $k^{\frac{1}{2}}$-weighted asymptotic Nash equilibrium of the game $\Gamma$.  Then, by Theorem  $\ref{connectionofne}$, it follows that \\ $\lim \limits_{K\rightarrow \infty}\sup\limits_{i \in \mathcal{N}}
E\bigg[\sup\limits_{\theta_{i} \in \widetilde{\Theta_{i}}}J_{i}^{(\theta_{i}, \theta_{-i,\tau_K})}(t)-J_{i}^{(\theta_{i,\tau_K}, \theta_{-i,\tau_K})}(t)\biggm|\theta_{k},k=1,2,...,K\bigg]=0$.
\end{proof}

\section{Learning Nash Equilibria with Unknown Pseudo Gradients}
In this section, we design Algorithm  \ref{alg::pseudoGradient} for learning  Nash equilibria with  unknown pseudo gradients. We employ Monte-Carlo simulations for the estimations of pseudo gradients. Similar to Algorithm \ref{alg::exactGradient}, Algorithm \ref{alg::pseudoGradient}  is also a two-loop algorithm. In the outer loop, we construct $\hat{F}_{k}(\theta)=\left(\hat{F}_{i}(\theta)+\frac{1}{\beta}(\theta_{i}-
\theta_{i,k})\right)_{i=1}^{N}=\left(-\hat{\nabla}_{\theta_{i}}^{T,K_{1}} J_{i}^{\theta}(t)+\frac{1}{\beta}(\theta_{i}-
\theta_{i,k})\right)_{i=1}^{N}$ and SVI($\hat{F}_{k}, \Theta)$, where $-\hat{\nabla}_{\theta_{i}}^{T,K_{1}} J_{i}^{\theta}(t)$ is the estimation of ${F}_{i}(\theta)$ with $K_{1}$ trajectories and a finite time horizon of length $T$. Then we update the proximal parameter in each iteration. In the inner loop, we provide a single-call extra-gradient algorithm for the constructed variational inequality.
\begin{algorithm}[H]

         \begin{algorithmic}[1]
         \caption{Algorithm for Unknown Pseudo gradients}
          \label{alg::pseudoGradient}
         \State Input: Lipschitz constant $L$, integer $K \geqslant 1$, weight $\gamma_{k}$, initial values $\theta_{1}=(\theta_{i,1})_{i=1}^{N} \in \Theta$,
         \State \ \ \ \ \ \ \ \ \ $\beta \in\left(0, \frac{1}{L}\right)$,  $\widetilde{\eta}=\frac{1}{2\sqrt{L^{2}+\frac{1}
         {\beta^{2}}}}$, number of trajectories $K_{1}$, time horizon $T$.
          \For{$k = 1$, \ldots, $K$}
             \State Input: $l_{2}\in \Big[ \max\Big\{\frac{1}{\beta}-L, 6\left(L^{2}+\frac{1}{\beta^{2}}\right)\Big\}, \infty\Big)$, $l_{1}=\min\Big\{\frac{1}{2\sqrt{l_{2}}}, \frac{1}{4l_{2}}\Big\}$,
             stepsize $\eta_{h}$,
             \State \ \ \ \ \ \ \ \ \ \ initial values $\theta^{1}=(\theta^{1}_{i})_{i=1}^{N}$$=z^{1} \in \Theta$, integer $H_{k}$.
             \State Let $\hat{F}_{k}(\theta) = (\hat{F}_{i,k}(\theta))_{i=1}^{N}= \hat{F}(\theta)+\frac{1}{\beta}
             (\theta-\theta_{k})=\left(\hat{F}_{i}(\theta)
             +\frac{1}{\beta}
             (\theta_{i}-\theta_{i,k})\right)_{i=1}^{N}$.
              \For{$h = 1$, \ldots, $H_{k}$}
                \For{$i=1$, \ldots, $N$}
                \State  Sample $ \hat{\nabla}_{\theta_{i}}^{T,K_{1}} J_{i}^{\theta^{h}}(t)$ by Monte-Carlo simulations.
                 \State $\theta_{i}^{h+1}=\mathop{\arg\min}\limits_{\theta_{i} \in \Theta_{i}} \left\{\left\langle 2\eta_{h} \hat{F}_{i,k}(\theta^{h}), \theta_{i} \right\rangle+
                 \|\theta_{i}-z_{i}^{h}\|^{2}\right\}$.
                 \State  Sample $ \hat{\nabla}_{\theta_{i}}^{T,K_{1}} J_{i}^{\theta^{h+1}}(t)$ by Monte-Carlo simulations.
                 \State $z_{i}^{h+1}=\mathop{\arg\min}\limits_{\theta_{i} \in \Theta_{i}} \left\{\left\langle 2\eta_{h} \hat{F}_{i,k}(\theta^{h+1}), \theta_{i} \right\rangle+
                 \|\theta_{i}-z_{i}^{h}\|^{2}\right\}$.
                 \EndFor
               \EndFor
              \State $z^{H_{k}+1}=\left(z_{i}^{H_{k}+1}\right)_{i=1}^{N}$.
              \State  Sample $ \left(\hat{\nabla}_{\theta_{i}}^{T,K_{1}} J_{i}^{z^{H_{k}+1}}(t)\right)_{i=1}^{N}$ by Monte-carlo simulations.
          \State $\theta_{k+1}=\mathop{\arg\min}\limits_{\theta \in \Theta} \left\{\left\langle 2\widetilde{\eta} \hat{F}_{k}(z^{H_{k}+1}), \theta \right\rangle+\|\theta-z^{H_{k}+1}\|^{2}\right\}$ .
          \EndFor
            \State Randomly choose $\tau_{K}$ satisfying $P(\tau_{K}=k)=\frac{\gamma_{k}}{\sum_{k=1}^{K}\gamma_{k}}$ , $k=1,\ldots, K $.

          \State Output: $\theta_{\tau_{K}}$.
        \end{algorithmic}
 \end{algorithm}

\subsection{Error Analysis of Pseudo Gradient Estimation}
We will give the difference between the estimated pseudo gradient and the real pseudo gradient at first.

By Assumptions \ref{assumption0}-\ref{assumption2}, for agent $i$, $\nabla_{\theta_{i}}  J_{i}^{\theta}(t)$ can be written as
\begin{equation}\label{policygradient1}
\nabla_{\theta_{i}}  J_{i}^{\theta}(t) =\mathbb{E}_{\mathcal{T} \sim \rho_{\theta}}\left[\sum_{l=t}^{\infty}\left(\sum_{\tau=t}^{l} \nabla_{\theta_{i}} \log \pi_{\theta_{i}}\big(a_{i}(\tau,w)\mid s(\tau,w)\big)\big) \gamma^{l-t} r_{i}\big(s(l,w), a(l,w)\right)\Bigg | \theta\right],
\end{equation}
where $\mathcal{T}$ is the trajectory of agent $i$ for the given $\pi_{\theta}$, that is, $\mathcal{T}=\big(s(t,w), a(t,w), \ldots, s(l,w), \\ a(l,w),   \ldots\big)$, $w$ is the sample path and $\rho_{\theta}$ is the probability distribution density function of the trajectory $\mathcal{T}$ \citep{Baxter}.
If the transition probability density function is  unknown,  it's intractable to compute  the expectation of (\ref{policygradient1}). Thus we employ the stochastic estimator of (\ref{policygradient1}).
The G(PO)MDP gradient estimator of $\nabla_{\theta_{i}} J_{i}^{\theta}(t)$ is
\begin{equation}\label{G(PO)MDP}
 \hat{\nabla}_{\theta_{i}} J_{i}^{\theta}(t)=\sum_{l=t}^{\infty}\left(\sum_{\tau=t}^{l} \nabla_{\theta_{i}} \log \pi_{\theta_{i}}(a_{i}(\tau,w)\mid s(\tau,w))\big) \gamma^{l-t} r_{i}\big(s(l,w), a(l,w)\right).
\end{equation}
Notice that  sampling from a single trajectory $\big(s(t,w), a(t,w), ..., s(l,w), a(l,w), \ldots\big)$ in (\ref{G(PO)MDP}) may cause a  high variance of the G(PO)MDP gradient estimator and sampling from an infinite horizon is not tractable in (\ref{G(PO)MDP}). To this end,  ~\cite{Tianyi} proposed  the stochastic estimator with $K_{1}$ trajectories and a finite time horizon of length $T$ which can be expressed as
\begin{equation}\nonumber
\begin{aligned}
&\hat{\nabla}_{\theta_{i}}^{T,K_{1}} J_{i}^{\theta}(t) =
\frac{1}{K_{1}} \sum_{w=1}^{K_{1}} \sum_{l=t}^{T+t}\left(\sum_{\tau=t}^{l} \nabla_{\theta_{i}} \log \pi_{\theta_{i}}(a_{i}(\tau,w)\mid s(\tau,w))\big)  \gamma^{l-t} r_{i}\big(s(l,w), a(l,w)\right).
\end{aligned}
\end{equation}
Denote $\hat{F}(\theta)=\left(\hat{F_{i}}(\theta)\right)_{i=1}^{N}=\left(-\hat{\nabla}_{\theta_{i}}^{T,K_{1}} J_{i}^{\theta}(t) \right)_{i=1}^{N}$ and
\begin{equation}\nonumber
\nabla_{\theta_{i}}^{T} J_{i}^{\theta}(t)=\mathbb{E}_{\mathcal{T} \sim \rho_{\theta}}\left[\sum_{l=t}^{T+t}\left(\sum_{\tau=t}^{l} \nabla_{\theta_{i}} \log \pi_{\theta_{i}}(a_{i}(\tau)\mid s(\tau))\big) \gamma^{l-t} r_{i}\big(s(l), a(l)\right)\Bigg |\theta \right],
\end{equation}
where $\mathcal{T}$ is the trajectory according to the policy $\pi_{\theta}$. Denote $F(\theta,T)=(F_{i}(\theta,T))_{i=1}^{N}=
\big(-\nabla_{\theta_{i}}^{T}J_{i}^{\theta}(t)
\big)_{i=1}^{N}$, $\forall \ \theta \in \Theta$.

\begin{lemma}\label{gradientestimate}
If Assumptions \ref{assumption0}-\ref{assumption2}  hold, then
\begin{equation}\nonumber
P\bigg\{\|\hat{F}(\theta)-F(\theta)\|^{2} \leqslant M\big(T, K_{1}, \delta\big)\bigg\} \geqslant 1-\frac{\delta}{4K_{1}}, \forall \ \theta \in \Theta, \ \forall \  \delta \in(0,1],
\end{equation}
where $M\big(T, K_{1}, \delta\big)=2N\big(B_{\Theta} U_{R}\big)^{2}\left[\left(\frac{T+1}{1-\gamma}+
\frac{\gamma}{(1-\gamma)^{2}}\right) \gamma^{T+1}\right]^{2}+16\log\big(\frac{8K_{1}}{\delta}\big)
\frac{NB_{\Theta}^{2}U_{R}^{2}\gamma^{2}}{(1-\gamma)^{4}K_{1}}
$.
\end{lemma}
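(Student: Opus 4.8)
The plan is to split the estimation error into a \emph{statistical} part and a \emph{truncation (bias)} part by inserting the finite-horizon expectation $F(\theta,T)$: writing
\begin{align}\nonumber
\hat F(\theta)-F(\theta)=\big(\hat F(\theta)-F(\theta,T)\big)+\big(F(\theta,T)-F(\theta)\big),
\end{align}
the $C_2$ inequality gives $\|\hat F(\theta)-F(\theta)\|^2\le 2\|\hat F(\theta)-F(\theta,T)\|^2+2\|F(\theta,T)-F(\theta)\|^2$. I would control the first summand by a concentration argument and the second summand deterministically; matching these to the two terms of $M(T,K_1,\delta)$ finishes the proof.

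For the bias term, note that $\nabla_{\theta_i}J_i^\theta(t)-\nabla_{\theta_i}^{T}J_i^\theta(t)$ is the expectation of the tail $\sum_{l=T+t+1}^{\infty}\big(\sum_{\tau=t}^{l}\nabla_{\theta_i}\log\pi_{\theta_i}(a_i(\tau)\mid s(\tau))\big)\gamma^{l-t}r_i(s(l),a(l))$. Moving the norm inside the expectation and using $\|\nabla_{\theta_i}\log\pi_{\theta_i}\|\le B_\Theta$ (Assumption \ref{assumption2}) and $|r_i|\le U_R$ (Assumption \ref{assumption1}), I bound $\|\nabla_{\theta_i}J_i^\theta(t)-\nabla_{\theta_i}^{T}J_i^\theta(t)\|\le B_\Theta U_R\sum_{m=T+1}^{\infty}(m+1)\gamma^{m}$, and evaluate the weighted geometric tail, by splitting $m+1$ and using the identities for $\sum_m\gamma^m$ and $\sum_m m\gamma^m$, as $\big(\tfrac{T+1}{1-\gamma}+\tfrac{\gamma}{(1-\gamma)^2}\big)\gamma^{T+1}$. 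Squaring and summing over $i\in\mathcal N$ gives $\|F(\theta,T)-F(\theta)\|^2\le N(B_\Theta U_R)^2\big[(\tfrac{T+1}{1-\gamma}+\tfrac{\gamma}{(1-\gamma)^2})\gamma^{T+1}\big]^2$, i.e.\ half of the first term of $M$; this estimate is uniform in $\theta$ and carries no probability.

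For the statistical term, observe that $\hat\nabla_{\theta_i}^{T,K_1}J_i^\theta(t)=\tfrac1{K_1}\sum_{w=1}^{K_1}Y_{i,w}$, where the $Y_{i,w}$ are i.i.d.\ copies of a trajectory functional with mean $\nabla_{\theta_i}^{T}J_i^\theta(t)$ and, by the same telescoping-norm bound now restricted to $l\le T+t$, deterministically bounded norm. I would then apply a dimension-free vector concentration inequality of Bernstein/Hoeffding type to the centered average, applied to the block vector concatenated over $i=1,\dots,N$ (so that only $N$ and not $\sum_i d_i$ enters), to obtain that with probability at least $1-\tfrac{\delta}{4K_1}$ one has $\|\hat F(\theta)-F(\theta,T)\|^2\le 8\log\!\big(\tfrac{8K_1}{\delta}\big)\tfrac{NB_\Theta^2U_R^2\gamma^2}{(1-\gamma)^4K_1}$; after the factor $2$ this is the second term of $M$. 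The $\log(8K_1/\delta)$ and the failure level $\delta/(4K_1)$ arise from inverting the two-sided sub-Gaussian tail and allocating the probability budget. Combining the two bounds through the $C_2$ inequality gives the claim for every fixed $\theta\in\Theta$ and every $\delta\in(0,1]$.

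The main obstacle is the statistical step: one needs a concentration inequality that is dimension-free in $\sum_i d_i$ and, moreover, sharp enough in its variance proxy to yield the stated $\gamma^2/(1-\gamma)^4$ scaling rather than the cruder range bound $(1-\gamma)^{-4}$; establishing the right variance/sub-Gaussian parameter for $Y_{i,w}$ and feeding it into a Bernstein-type bound is the delicate computation, whereas the bias step and the bookkeeping of the numerical constants are routine.
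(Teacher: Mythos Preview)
Your proposal is correct and follows essentially the same route as the paper: the same $C_2$ split into a deterministic truncation term and a statistical term, the same tail-sum computation for the bias, and a vector concentration inequality for the average over trajectories. The obstacle you flag is resolved exactly as you anticipate: the paper invokes a per-trajectory deviation bound $\|\hat F_i(\theta,T,w)-F_i(\theta,T)\|\le \tfrac{2B_\Theta U_R\gamma}{(1-\gamma)^2}$ (this is where the $\gamma^2/(1-\gamma)^4$ scaling comes from, not from a crude range bound) and then feeds it into the Pinelis-type dimension-free martingale concentration inequality stated as Lemma~\ref{concentration inequality}.
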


\begin{proof}
By $C_{2}$ inequality, we have
\begin{equation}\label{estimateF}
    \begin{aligned}
    \left\|\hat{F}(\theta)-F(\theta)\right\|^{2}
    &=\left\|\hat{F}(\theta)-F(\theta,T)+F(\theta,T)-F(\theta)\right\|^{2}\\
    & \leqslant 2\left\|\hat{F}(\theta)-F(\theta,T)\right\|^{2}+
    2\left\|F(\theta,T)-F(\theta)\right\|^{2}.\\
\end{aligned}
\end{equation}
For the second term in the above inequality, from Assumptions \ref{assumption0}-\ref{assumption2} and Lemma 6 in \cite{Tianyi}, it follows that
\begin{equation}\label{firsttermestimate}
\begin{aligned}
2\left\|F(\theta,T)-F(\theta)\right\|^{2}
= &2\sum_{i=1}^{N} \left\|-\nabla_{\theta_{i}}^{T} J_{i}^{\theta}(t)+\nabla_{\theta_{i}} J_{i}^{\theta}(t)\right\|^{2}
\leqslant  \sigma_{T} ,
\end{aligned}
\end{equation}
where $\sigma_{T}=2N\big(B_{\Theta} U_{R}\big)^{2}\left[\left(\frac{T+1}{1-\gamma}+\frac{\gamma}{(1-\gamma)^{2}}\right) \gamma^{T+1}\right]^{2}$.
For the first term in (\ref{estimateF}), denote
\begin{equation}\nonumber
\begin{aligned}
\hat{F}(\theta,T,w)= & (\hat{F}_{i}(\theta,T,w))_{i=1}^{N}\\
=& \left(\sum_{l=t}^{T+t}\left(\sum_{\tau=t}^{l} \nabla_{\theta_{i}} \log \pi_{\theta_{i}}(a_{i}(\tau,w)\mid s(\tau,w))\big)  \gamma^{l-t} r_{i}\big(s(l,w), a(l,w)\right)\right)_{i=1}^{N},
\end{aligned}
\end{equation}
and then $\hat{F}(\theta)=\frac{\sum_{w=1}^{K_{1}}\hat{F}(\theta,T,w)}{K_{1}}$. Noting that  $\hat{F}_{i}(\theta,T,w)$ is an unbiased estimator of $F_{i}(\theta,T)$ ~\citep{YLi},  we know that
 $\hat{F}(\theta,T,w)$ is the unbiased estimator of $F(\theta,T)$. By Assumptions
 \ref{assumption0}-\ref{assumption2} and Lemma 5 in \cite{Tianyi}, we have $\left\|\hat{F}_{i}(\theta,T,w)-F_{i}(\theta,T)\right\|^{2} \leqslant \left(\frac{2B_{\Theta}U_{R}\gamma}{(1-\gamma)^{2}}\right)^{2}$, and then
$$\left\|\frac{\hat{F}(\theta,T,w)-F(\theta,T)}{K_{1}}
\right\|^{2} = \frac{ \sum_{i=1}^{N}\left\|\hat{F}_{i}(\theta,T,w)-F_{i}(\theta,T)\right\|^{2}}{K_{1}^{2}} \leqslant  N\left(\frac{2B_{\Theta}U_{R}\gamma}{(1-\gamma)^{2}}
\right)^{2}\frac{1}{K_{1}^{2}}.
$$
Hence, by Lemma \ref{concentration inequality}, it follows that
\begin{equation}\nonumber
\begin{aligned}
P\bigg\{\left\|\hat{F}(\theta)-F(\theta,T)\right\|^{2}
\leqslant 8\log\left(\frac{8K_{1}}{\delta}\right)
\frac{NB_{\Theta}^{2}U_{R}^{2}\gamma^{2}}{(1-\gamma)^{4}K_{1}}\bigg\} \geqslant 1-\frac{\delta}{4K_{1}},
\ \forall \ \delta \in (0,1].
\end{aligned}
\end{equation}
Then combining (\ref{estimateF}) and (\ref{firsttermestimate}) with the above inequality gives
\begin{equation}\nonumber
\begin{aligned} P\bigg\{\left\|\hat{F}(\theta)-F(\theta)\right\|^{2} \leqslant M(T,K_{1},\delta) \bigg\}\geqslant 1-\frac{\delta}{4K_{1}},
\ \forall \ \delta \in (0,1].
\end{aligned}
\end{equation}
\end{proof}

\subsection{Convergence Analysis of Algorithm \ref{alg::pseudoGradient}}
In this section, we establish the convergence of Algorithm \ref{alg::pseudoGradient}.

\begin{lemma}\label{lemmapseudo}
If Assumptions \ref{assumption0}-\ref{assumption4} hold, and we choose $\eta_{h}=\frac{l_{1}}{h^{\frac{2}{3}}}$, $\gamma_{k}=k^{\frac{1}{4}}$ and $H_{k}=k$  in Algorithm \ref{alg::pseudoGradient}, then $\big((\pi_{\theta_{i,k}})_{i=1}^{N}\big)
_{k=1}^{\infty}$ given by Algorithm \ref{alg::pseudoGradient} satisfies
\begin{equation}\nonumber
\begin{aligned}
P\left\{\sup _{i \in \mathcal{N}}\frac{\sum_{k=1}^{K} k^{\frac{1}{4}}\left(\sup \limits_{\theta_{i} \in \widetilde{\Theta_{i}}}J_{i}^{(\theta_{i}, \theta_{-i,k})}(t)-J_{i}^{\theta_{k}}(t)\right)}
{\sum_{k=1}^{K}k^{\frac{1}{4}}}\leqslant L(K,K_{1},T,\delta)
+(\phi(K))^{\frac{1}{4}}\right\} \geqslant 1-\frac{K}{K_{1}}\delta,\\
 \forall \ \delta \in (0,1],\  for \ any \ given \ \text{countable set} \ \widetilde{\Theta_{i}} \subseteq \Theta_{i},
\end{aligned}
\end{equation}
where $L(K,K_{1},T,\delta)=\frac{\sqrt{10}D}{2} \left(4LD+\frac{2D}{\beta}+\frac{\sqrt{N}B_{\Theta}U_{R}}{(1-\gamma)^{2}}\right)
\frac{M_{1}}{K^{\frac{1}{2}}}+\sqrt{2}M_{1}\left(\frac{1}{\frac{1}{\beta}-L}
+2\beta\right)^{\frac{1}{2}}\Big(4LD+\frac{2D}{\beta}+\frac{\sqrt{N}B_{\Theta}U_{R}}{(1-\gamma)^{2}}
\Big)
\Bigg[\frac{\sqrt{M}}{2\left(L^{2}+\frac{1}{\beta^{2}}\right)^{\frac{1}{2}}}\left(4D
\left(2L^{2}+\frac{2}{\beta^{2}}\right)^{\frac{1}{2}}+\left(\frac{2NB_{\Theta}^{2}U_{R}^{2}}{(1-\gamma)^{2}}
+\frac{8D^{2}}{\gamma^{2}}\right)^{\frac{1}{2}}\right)
+4D(1+\sqrt{2})\bigg(\Big(2L^{2}+\frac{2}{\beta^{2}}\Big) \frac{D\sqrt{M}}{\frac{1}{\beta}-L}\Bigg)^{\frac{1}{2}}\Bigg]^{\frac{1}{2}} +
\sqrt{\frac{30}{11}} \left(\frac{1}{\frac{1}{\beta}-L}+2\beta
\right)^{\frac{1}{2}} \left(4LD+\frac{2D}{\beta}+\frac{\sqrt{N}B_{\Theta}U_{R}}{(1-\gamma)^{2}}
\right)
\Bigg(2D \left(2L^{2}+\frac{2}{\beta^{2}}\right)^{\frac{1}{2}}
(2+\sqrt{2})\\
 \frac{l_{1}^{\frac{1}{2}}\big(9M+\frac{D^{2}}{2}\big)^{\frac{1}{2}}}
 {\big(\frac{1}{\beta}-L\big)^{\frac{1}{2}}}\Bigg)^{\frac{1}{2}}
\frac{M_{1}}{K^{\frac{1}{6}}}$, $ (\phi(K))^{\frac{1}{4}}=o\left(\frac{1}{K^{\frac{1}{6}}}\right)$, $M_{1}$ is given by Lemma \ref{gradientd}, $M=M\big(T, K_{1}, \delta\big)=2N\left(B_{\Theta} U_{R}\right)^{2}\left[\left(\frac{T+1}{1-\gamma}+
\frac{\gamma}{(1-\gamma)^{2}}\right) \gamma^{T+1}\right]^{2}+16\log\big(\frac{8K_{1}}{\delta}\big)
\frac{NB_{\Theta}^{2}U_{R}^{2}\gamma^{2}}{(1-\gamma)^{4}K_{1}}
$ and $D=\sqrt{\sum_{i=1}^{N} D_{i}^{2}}$.
\end{lemma}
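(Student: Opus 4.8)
The plan is to transplant the architecture of Lemma~\ref{exactlemma} to the stochastic setting, treating $\hat F_k$ as a perturbation of the exact operator $F_k=F+\frac1\beta(\cdot-\theta_k)$ and propagating the estimation error through every step. Fixing a countable set $\widetilde\Theta_i\subseteq\Theta_i$, the gradient domination theorem (Lemma~\ref{gradientd}) again reduces the weighted Nash gap to the weighted variational-inequality gap $\sup_i\frac{\sum_k k^{1/4}\sup_{\theta_i\in\widetilde\Theta_i}\langle F_i(\theta_k),\theta_{i,k}-\theta_i\rangle}{\sum_k k^{1/4}}$, and the estimate $\langle F(\theta_k),\theta_k-\theta\rangle\leqslant\big(4LD+\frac{2D}{\beta}+\frac{\sqrt N B_\Theta U_R}{(1-\gamma)^2}\big)\|\theta_k-\overline{\theta_k}\|$ from~(\ref{prox231})--(\ref{boundedoff}) (which needs only Proposition~\ref{policygradient} and is insensitive to the sampling) reduces everything to bounding the weighted average of $\|\theta_k-\overline{\theta_k}\|$, where $\overline{\theta_k}$ solves the \emph{true} strongly monotone SVI$(F_k,\Theta)$.

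The second ingredient is a high-probability event. Applying Lemma~\ref{gradientestimate} at the gradient evaluations and taking a union bound produces an event $\mathcal E$, of probability at least $1-\frac{K}{K_1}\delta$, on which $\|\hat F(\theta)-F(\theta)\|^2\leqslant M$ holds at all points needed below; I would work pathwise on $\mathcal E$. A point worth stressing is that $\hat F$ is \emph{not} pathwise Lipschitz (the trajectories feeding $\hat\nabla^{T,K_1}_{\theta_i}J_i$ are resampled from $\rho_\theta$ at each point), so every difference $\hat F_k(\theta^{h+1})-\hat F_k(\theta^{h})$ must be routed through the true $F$, writing $\hat F_k=F_k+(\hat F_k-F_k)$ and bounding the exact part by Lemmas~\ref{stronglymonotone} and~\ref{stronglylipschitz} and the perturbation by $\sqrt M$.

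For the inner loop I would begin from the proximal inequality (Lemma~\ref{propertyofprox}(ii)) for the $\hat F_{i,k}$-updates and split off the perturbation. The decisive step is the lower bound $\langle\eta_h\hat F_k(\theta^{h+1}),\theta^{h+1}-\overline{\theta_k}\rangle\geqslant\eta_h(\frac1\beta-L)\|\theta^{h+1}-\overline{\theta_k}\|^2-\eta_h\sqrt M\,\|\theta^{h+1}-\overline{\theta_k}\|$, where the first term is strong monotonicity and the second is the error; applying Young's inequality to the error absorbs half of the monotonicity and leaves a residual $\propto\eta_h M$. This turns the exact-case geometric recursion into a stochastic-approximation recursion $V_{h+1}\leqslant(1-c\eta_h)V_h+(\text{quadratic difference terms}\propto\eta_h^2)+c'\eta_h M$, with $V_h$ aggregating $\|z^h-\overline{\theta_k}\|^2$ and the successive-difference term. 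With the diminishing stepsize $\eta_h=l_1/h^{2/3}$, summing $\sum_h\eta_h^2\prod_{j>h}(1-c\eta_j)\sim\eta_H$ and $\sum_h\eta_h M\prod_{j>h}(1-c\eta_j)\sim M$ yields the two-part bound $\|z^{H_k+1}-\overline{\theta_k}\|^2\lesssim H_k^{-2/3}+M=k^{-2/3}+M$: a vanishing optimization error and an irreducible estimation floor proportional to $M$. The outer loop then proceeds exactly as in Lemma~\ref{exactlemma}: bound $\langle\hat F_k(\theta_{k+1}),\theta_{k+1}-\theta\rangle$ by $\|z^{H_k+1}-\overline{\theta_k}\|$ (Lemma~\ref{contorolstrong}), transfer to $F_k$ and to $F$ at the cost of $O(\sqrt M)$, telescope against the MVI solution $\theta_*$ of Assumption~\ref{assumption4}, and weight by $\gamma_k=k^{1/4}$. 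Using $\sum_{k=1}^K k^{1/4}\geqslant\frac45K^{5/4}$ gives $\frac{\sum_k\gamma_k\|\theta_k-\overline{\theta_k}\|^2}{\sum_k\gamma_k}\lesssim K^{-1/3}+\sqrt M$, and a final Lyapunov (Jensen) step produces $\frac{\sum_k\gamma_k\|\theta_k-\overline{\theta_k}\|}{\sum_k\gamma_k}\lesssim K^{-1/6}+M^{1/4}$, which after multiplication by $M_1$ separates into the stated $L(K,K_1,T,\delta)$ (carrying the $M$-floor) and the higher-order remainder $(\phi(K))^{1/4}=o(K^{-1/6})$.

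The hard part will be the inner-loop error analysis under the diminishing stepsize: unlike the constant-stepsize geometric contraction~(\ref{prox2261}), here the products $\prod_h(1-c\eta_h)$ decay only sub-geometrically and the perturbation does not vanish, so I must track precisely how the $\eta_h^2$ optimization error and the $\eta_h M$ floor accumulate and then balance the exponents against the outer choices $H_k=k$ and $\gamma_k=k^{1/4}$ to land exactly the $K^{-1/6}$ order. A closely related obstacle is the probabilistic bookkeeping: because the iterates are random the suprema must be restricted to the countable $\widetilde\Theta_i$, and the union bound over the sampled pseudo gradients must be organized so that the guarantee remains $1-\frac{K}{K_1}\delta$ while still producing the sharp concentration floor $M$ (rather than the cruder deterministic per-trajectory bound) in $L(K,K_1,T,\delta)$.
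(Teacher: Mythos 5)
Your proposal is correct and follows essentially the same route as the paper's proof: a union bound over the events $\{\|\hat F-F\|^2\leqslant M\}$ from Lemma \ref{gradientestimate} giving probability $1-\frac{K}{K_1}\delta$, the perturbed single-call proximal recursion with all differences of $\hat F_k$ routed through the true $F_k$ (exactly because $\hat F$ is not pathwise Lipschitz), Chung-type resolution of the recursion under $\eta_h=l_1 h^{-2/3}$ to get $\|z^{H_k+1}-\overline{\theta_k}\|\lesssim k^{-1/3}+(\text{floor in }M)$, then Lemma \ref{contorolstrong}, the telescoping against the MVI solution, and the Lyapunov step, landing the same $K^{-1/2}$, $K^{-1/6}$ and $M$-floor terms. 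The only deviation is that you absorb the estimation error via Young's inequality against the strong monotonicity (residual $\propto\eta_h M$), whereas the paper bounds it by the diameter as $2\eta_h D\sqrt{M}$; this changes only the order in $M$ of the non-vanishing floor inside $L(K,K_1,T,\delta)$ (which vanishes as $T,K_1\to\infty$ either way), not the structure, the rates in $K$, or the probabilistic guarantee.
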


\begin{proof}
By  Lemma  \ref{propertyofprox} (ii), we have
\begin{equation}\nonumber
\begin{aligned}
\frac{\left\|z_{i}^{h+1}-\theta_{i}\right\|^{2}}{2} \leqslant &
\frac{\left\|z_{i}^{h}-\theta_{i}\right\|^{2}}{2}-\left\langle \eta_{h} \hat{F}_{i,k}(\theta^{h+1}), \theta_{i}^{h+1}-\theta_{i}\right\rangle+\frac{\left\|\eta_{h} \hat{F}_{i,k}(\theta^{h+1})-\eta_{h} \hat{F}_{i,k}(\theta^{h})\right\|^{2}}{2}
\\&-
\frac{\left\| \theta_{i}^{h+1}-z_{i}^{h}\right\|^{2}}{2}, \forall \ i \in \mathcal{N}.
\end{aligned}
\end{equation}
Taking summation for both sides of the above inequality from $i=1$ to $N$ and rearranging the above inequality lead to
\begin{align}\label{threeterm}
\left\langle \eta_{h} F_{k}(\theta^{h+1}), \theta^{h+1}-\theta\right\rangle \leqslant & \frac{\left\|z^{h}-\theta\right\|^{2}}{2}-\frac{\left\|z^{h+1}-\theta \right\|^{2}}{2}+\frac{\left\|\eta_{h} \hat{F}_{k}(\theta^{h+1})-\eta_{h} \hat{F}_{k}(\theta^{h})\right\|^{2}}{2} \notag\\
&+\left\langle -\eta_{h} \hat{F}_{k}(\theta^{h+1})+\eta_{h}F_{k}(\theta^{h+1}), \theta^{h+1}-\theta\right\rangle
-\frac{\left\| \theta^{h+1}-z^{h}\right\|^{2}}{2}.
\end{align}
For the fourth term on the right side of (\ref{threeterm}), by Assumption \ref{assumption1}, we have
\begin{equation}\label{secondterm}
\begin{aligned}
\left\langle -\eta_{h} \hat{F}_{k}(\theta^{h+1})+\eta_{h}F_{k}(\theta^{h+1}), \theta^{h+1}-\theta\right\rangle
&\leqslant \eta_{h}\left\|-\hat{F}_{k}(\theta^{h+1})+F_{k}(\theta^{h+1})\right\| \left\|\theta^{h+1}-\theta\right\|\\
&\leqslant  2\eta_{h}D\left\|-\hat{F}_{k}(\theta^{h+1})+F_{k}(\theta^{h+1})\right\|.
\end{aligned}
\end{equation}
For the third term on the right side of (\ref{threeterm}),  by Cauchy-Schwartz  inequality, Assumptions
\ref{assumption0}-\ref{assumption2} and Lemma \ref{stronglylipschitz}, we have
\begin{align}
&\frac{\left\|\eta_{h} \hat{F}_{k}(\theta^{h+1})-\eta_{h} \hat{F}_{k}(\theta^{h})\right\|^{2}}{2} \notag\\
=& \frac{\eta_{h}^{2}\left\| \hat{F}_{k}(\theta^{h+1})- F_{k}(\theta^{h+1})+ F_{k}(\theta^{h+1})-F_{k}(\theta^{h})+F_{k}(\theta^{h})- \hat{F}_{k}(\theta^{h})\right\|^{2}}{2}\notag\\
\leqslant & \frac{3\eta_{h}^{2}}{2}\Big(\left\| \hat{F}_{k}(\theta^{h+1})- F_{k}(\theta^{h+1})\right\|^{2}+\left\| F_{k}(\theta^{h+1})- F_{k}(\theta^{h})\right\|^{2}+\left\| F_{k}(\theta^{h})- \hat{F}_{k}(\theta^{h})\right\|^{2}\Big)\notag\\
\leqslant & \frac{3\eta_{h}^{2}}{2}\left(\left\| \hat{F}_{k}(\theta^{h+1})- F_{k}(\theta^{h+1})\right\|^{2}
+\left\| F_{k}(\theta^{h})- \hat{F}_{k}(\theta^{h})\right\|^{2}\right) +\frac{3}{2}\eta_{h}^{2}\left(2L^{2}+\frac{2}{\beta^{2}}\right) \left\|\theta^{h+1}-\theta^{h}\right\|^2.\label{thirdterm}
\end{align}
Applying $C_{2}$ inequality into  the term $\left\|\theta^{h+1}-\theta^{h}\right\|^2 $ in the above inequality leads to
\begin{equation}\label{single}
    \begin{aligned}
    \left\|\theta^{h+1}-\theta^{h}\right\|^2 \leqslant 2\left\|\theta^{h+1}-z^{h}\right\|^2+2\left\|\theta^{h}-z^{h}\right\|^2.
    \end{aligned}
\end{equation}
For the term $\left\|\theta^{h}-z^{h}\right\|^2$ in the above inequality, by the non-expansion property of the  proximal mapping in Lemma  \ref{propertyofprox} (iii), Cauchy-Schwartz inequality, Assumptions \ref{assumption0}-\ref{assumption2} and Lemma \ref{stronglylipschitz}, we have
    \begin{align*}
    &\left\|\theta^{h}-z^{h}\right\|^2\\
    =&\sum_{i=1}^{N}\left\|\theta_{i}^{h}-z_{i}^{h}\right\|^2\\
   \leqslant &  \sum_{i=1}^{N}\left\|-\eta_{h-1}\hat{F}_{ik}(\theta^{h-1})+\eta_{h-1} \hat{F}_{ik}(\theta^{h}) \right\|^2\\
    = & \sum_{i=1}^{N}\Bigg\|-\eta_{h-1}\hat{F}_{ik}(\theta^{h-1})+\eta_{h-1}F_{ik}(\theta^{h-1})-\eta_{h-1} F_{ik}(\theta^{h-1})+\eta_{h-1} F_{ik}(\theta^{h}) \\
    &-\eta_{h-1} F_{ik}(\theta^{h}) +\eta_{h-1} \hat{F}_{ik}(\theta^{h}) \Bigg\|^2\\
    \leqslant &  3\eta_{h-1}^2 \left\{  \sum_{i=1}^{N}\left\|-\hat{F}_{ik}(\theta^{h-1})+F_{ik}(\theta^{h-1}) \right\|^2+\sum_{i=1}^{N}\left\|- F_{ik}(\theta^{h-1})+ F_{ik}(\theta^{h}) \right\|^2 \right.\\
    & \left.+\sum_{i=1}^{N}\left\|-  F_{ik}(\theta^{h}) + \hat{F}_{ik}(\theta^{h}) \right\|^2 \right\}
    \\
    =& 3\eta_{h-1}^2 \left\{ \left\|-\hat{F_{k}}(\theta^{h-1})+F_{k}(\theta^{h-1}) \right\|^2+\left\|F_{k}(\theta^{h-1})-F_{k}(\theta^{h}) \right\|^2+\left\|F_{k}(\theta^{h})-\hat{F}_{k}(\theta^{h}) \right\|^2\right\}\\
    \leqslant & 3\eta_{h-1}^2 \left\{ \left\|-\hat{F_{k}}(\theta^{h-1})+F_{k}(\theta^{h-1}) \right\|^2+\left\|F_{k}(\theta^{h})-\hat{F}_{k}(\theta^{h}) \right\|^2\right\}\\
    &+3\eta_{h-1}^2 \left(2L^{2}+\frac{2}{\beta^{2}}\right)\left\|\theta^{h-1}-\theta^{h}\right\|^2 \label{thirdtemson}.
    \end{align*}
Substituting the above inequality into (\ref{single}) gives
\begin{equation}\nonumber
    \begin{aligned}
    \left\|\theta^{h+1}-\theta^{h}\right\|^2 \leqslant&
    6\eta_{h-1}^2 \left\{ \left\|-\hat{F_{k}}(\theta^{h-1})+F_{k}(\theta^{h-1}) \right\|^2+\left\|F_{k}(\theta^{h})-\hat{F}_{k}(\theta^{h}) \right\|^2\right\}\\
    &+6\eta_{h-1}^2 \left(2L^{2}+\frac{2}{\beta^{2}}\right)
    \left\|\theta^{h-1}-\theta^{h}\right\|^2+2\left\|\theta^{h+1}-z^{h}\right\|^2.
    \end{aligned}
\end{equation}
By (\ref{threeterm}), (\ref{secondterm}),
(\ref{thirdterm}) and the above inequality, we have
\begin{align*}
\left\langle \eta_{h} F_{k}(\theta^{h+1}), \theta^{h+1}-\theta\right\rangle \leqslant & \frac{\left\|z^{h}-\theta\right\|^{2}}{2}-\frac{\left\|z^{h+1}-\theta \right\|^{2}}{2}+2\eta_{h}D\left\|-\hat{F}_{k}(\theta^{h+1})+
F_{k}(\theta^{h+1})\right\| \\
&+ \frac{3\eta_{h}^{2}}{2}\left(\left\| \hat{F}_{k}(\theta^{h+1})- F_{k}(\theta^{h+1})\right\|^{2}
+\left\| F_{k}(\theta^{h})- \hat{F}_{k}(\theta^{h})\right\|^{2}\right) \\
&+9\left(2L^{2}+\frac{2}{\beta^{2}}\right) \eta_{h}^{2}\eta_{h-1}^2 \Bigg\{ \left\|-\hat{F_{k}}(\theta^{h-1})+F_{k}(\theta^{h-1}) \right\|^2\\
&+\left\|F_{k}(\theta^{h})-\hat{F}_{k}(\theta^{h}) \right\|^2 \Bigg\}
+9\left(2L^{2}+\frac{2}{\beta^{2}}\right)^{2} \eta_{h}^{2}\eta_{h-1}^2\left\|\theta^{h-1}-\theta^{h}\right\|^2\\
&+\left(3\eta_{h}^{2}\left(2L^{2}+\frac{2}{\beta^{2}}\right)-\frac{1}{2}\right) \left\|\theta^{h+1}-z^{h}\right\|^2.
\end{align*}
From $l_{2} \geqslant 3\left(2L^{2}+\frac{2}{\beta^{2}}\right)$, $l_{1}= \min\left\{\frac{1}{2\sqrt{l_{2}}}, \frac{1}{4l_{2}}  \right\}$ and $\eta_{h}=\frac{l_{1}}{h^{\frac{2}{3}}} $, we know that $9\Big(2L^{2}+\frac{2}{\beta^{2}}\Big) \eta_{h-1}^2 \leqslant \frac{3}{4}$, $9\Big(2L^{2}+\frac{2}{\beta^{2}}\Big)^{2}$
$\eta_{h-1}^2 \leqslant \frac{1}{16}$. This together with the above inequality gives
\begin{align*}
\left\langle \eta_{h} F_{k}(\theta^{h+1}), \theta^{h+1}-\theta\right\rangle \leqslant & \frac{\left\|z^{h}-\theta\right\|^{2}}{2}-\frac{\left\|z^{h+1}-\theta \right\|^{2}}{2}+2\eta_{h}D\left\|-\hat{F}_{k}(\theta^{h+1})+
F_{k}(\theta^{h+1})\right\| \\
&+ \frac{3\eta_{h}^{2}}{2}\left(\left\| \hat{F}_{k}(\theta^{h+1})- F_{k}(\theta^{h+1})\right\|^{2}
+\left\| F_{k}(\theta^{h})- \hat{F}_{k}(\theta^{h})\right\|^{2}\right) \\
&+\frac{3}{4}\eta_{h}^{2} \left( \left\|-\hat{F_{k}}(\theta^{h-1})+F_{k}(\theta^{h-1}) \right\|^2+\left\|F_{k}(\theta^{h})-\hat{F}_{k}(\theta^{h}) \right\|^2\right)\\
&+\frac{1}{16} \eta_{h}^{2}\left\|\theta^{h-1}-\theta^{h}\right\|^2+
\left(3\eta_{h}^{2}\big(2L^{2}+\frac{2}{\beta^{2}}\big)-\frac{1}{2}\right) \left\|\theta^{h+1}-z^{h}\right\|^2.
\end{align*}
According to Assumptions \ref{assumption0}-\ref{assumption2} and Lemma
\ref{stronglymonotone}, $F_{k}(\theta)$ is $\left(\frac{1}{\beta}-L\right)$-strongly monotone and then SVI($F_{k},\Theta)$ has a unique solution ~\citep{KINDERLEHRER1980}, which is denoted by $\overline{\theta_{k}}$. Substituting $\overline{\theta_{k}}$ for $\theta$ in the above inequality implies
\begin{align}
\left\langle \eta_{h} F_{k}(\theta^{h+1}), \theta^{h+1}-\overline{\theta_{k}}\right\rangle \leqslant & \frac{\left\|z^{h}-\overline{\theta_{k}}\right\|^{2}}{2}
-\frac{\left\|z^{h+1}-\overline{\theta_{k}} \right\|^{2}}{2}+2\eta_{h}D\left\|-\hat{F}_{k}(\theta^{h+1})+
F_{k}(\theta^{h+1})\right\| \notag\\
&+ \frac{3\eta_{h}^{2}}{2}\left(\left\| \hat{F}_{k}(\theta^{h+1})- F_{k}(\theta^{h+1})\right\|^{2}
+\left\| F_{k}(\theta^{h})- \hat{F}_{k}(\theta^{h})\right\|^{2}\right) \notag\\
&+ \frac{3}{4}\eta_{h}^{2} \left( \left\|-\hat{F_{k}}(\theta^{h-1})+F_{k}(\theta^{h-1}) \right\|^2+\left\|F_{k}(\theta^{h})-\hat{F}_{k}(\theta^{h}) \right\|^2\right)\notag\\
&+\frac{1}{16} \eta_{h}^{2}\left\|\theta^{h-1}-\theta^{h}\right\|^2+
\left(3\eta_{h}^{2}\left(2L^{2}+\frac{2}{\beta^{2}}\right)-\frac{1}{2}\right) \left\|\theta^{h+1}-z^{h}\right\|^2\label{part53}.
\end{align}
Similar to the proof of (\ref{prox224}) in Lemma \ref{exactlemma}, we have
\begin{equation}\nonumber
\begin{aligned}
\eta_{h}\left(\frac{1}{\beta}-L\right)\frac{\left\|z^{h}-\overline{\theta_{k}}\right\|^{2}}{2}
-\eta_{h}\left(\frac{1}{\beta}-L\right)\left\|z^{h}-\theta^{h+1}\right\|^{2} \leqslant \left\langle \eta_{h} F_{k}(\theta^{h+1}),
\theta^{h+1}-\overline{\theta_{k}}\right\rangle.
\end{aligned}
\end{equation}
Taking summation for both sides of (\ref{part53}) and the above inequality gives
\begin{align*}
&\frac{\left\|z^{h+1}-\overline{\theta_{k}} \right\|^{2}}{2}\\
 \leqslant & \left(1-\eta_{h}\left(\frac{1}{\beta}-L\right)\right)
\frac{\left\|z^{h}-\overline{\theta_{k}}\right\|^{2}}{2}
+2\eta_{h}D\left\|-\hat{F}_{k}(\theta^{h+1})+
F_{k}(\theta^{h+1})\right\| \\
&+ \frac{3\eta_{h}^{2}}{2}\left(\left\| \hat{F}_{k}(\theta^{h+1})- F_{k}(\theta^{h+1})\right\|^{2}
+\left\| F_{k}(\theta^{h})- \hat{F}_{k}(\theta^{h})\right\|^{2}\right) \\
&+ \frac{3}{4}\eta_{h}^{2} \left( \left\|-\hat{F_{k}}(\theta^{h-1})+F_{k}(\theta^{h-1}) \right\|^2+\left\|F_{k}(\theta^{h})-\hat{F}_{k}(\theta^{h}) \right\|^2\right)\\
&+\frac{1}{16} \eta_{h}^{2}\left\|\theta^{h-1}-\theta^{h}\right\|^2+
\Bigg[3\eta_{h}^{2}\left(2L^{2}+\frac{2}{\beta^{2}}\right)
+\eta_{h}\left(\frac{1}{\beta}-L\right)-\frac{1}{2}\Bigg] \left\|\theta^{h+1}-z^{h}\right\|^2.
\end{align*}
From $l_{2} \geqslant \max \left\{3\left(2L^{2}+\frac{2}{\beta^{2}}\right), \left(\frac{1}{\beta}-L\right)\right\}$, $l_{1}= \min\left\{\frac{1}{2\sqrt{l_{2}}}, \frac{1}{4l_{2}}  \right\}$ and $\eta_{h}=\frac{l_{1}}{h^{\frac{2}{3}}} $,
we know that  $3\eta_{h}^{2}\left(2L^{2}+\frac{2}{\beta^{2}}\right)
+\eta_{h}\left(\frac{1}{\beta}-L\right)
\leqslant l_{2}\frac{l_{1}^{2}}{h^{\frac{4}{3}}}+
l_{2}\frac{l_{1}}{h^{\frac{2}{3}}} \leqslant l_{2}l_{1}^{2}+l_{2}l_{1} \leqslant l_{2}\frac{1}{4l_{2}}+l_{2}\frac{1}{4l_{2}} \leqslant \frac{1}{2}$, that is, $3\eta_{h}^{2}\Big(2L^{2}+\frac{2}{\beta^{2}}\Big)
+\eta_{h}\left(\frac{1}{\beta}-L\right)
-\frac{1}{2} \leqslant 0$. This together with the above inequality leads to
\begin{align}
\frac{\left\|z^{h+1}-\overline{\theta_{k}} \right\|^{2}}{2} \leqslant & \left(1-\eta_{h}\left(\frac{1}{\beta}-L\right)\right)
\frac{\left\|z^{h}-\overline{\theta_{k}}\right\|^{2}}{2}
+2\eta_{h}D\left\|-\hat{F}_{k}(\theta^{h+1})+
F_{k}(\theta^{h+1})\right\| \notag\\
&+ \frac{3\eta_{h}^{2}}{2}\left(\left\| \hat{F}_{k}(\theta^{h+1})- F_{k}(\theta^{h+1})\right\|^{2}
+\left\| F_{k}(\theta^{h})- \hat{F}_{k}(\theta^{h})\right\|^{2}\right) \notag\\
&+ \frac{3}{4}\eta_{h}^{2} \left( \left\|-\hat{F_{k}}(\theta^{h-1})+F_{k}(\theta^{h-1}) \right\|^2+\left\|F_{k}(\theta^{h})-\hat{F}_{k}(\theta^{h}) \right\|^2\right)\notag\\
&+\frac{1}{16} \eta_{h}^{2}\left\|\theta^{h-1}-\theta^{h}\right\|^2.\label{betweentheta}
\end{align}
Denote $A_{k}(\theta)=\bigg\{w \in \Omega:\|\hat{F_{k}}\big(\theta\big)-F_{k}(\theta)\|^{2} \leqslant M\big(T, K_{1},\delta\big)\bigg\}$ and
$A_{k}=A_{k}(\theta^{H_{k}-1})${\Large $\cap$} $A_{k}(\theta^{H_{k}})${\Large $\cap$}
$A_{k}(\theta^{H_{k}+1})${\Large $\cap$}$A_{k}(z^{H_{k}+1})$.
By Assumptions \ref{assumption0}-\ref{assumption2} and Lemma \ref{gradientestimate}, we know that $P\big\{A_{k}(\theta)\big\}$ $ \geqslant 1-\frac{\delta}{4K_{1}}$, for any $ \theta \in \Theta$.  Hence, we have $P\big\{A_{k}\big\} \geqslant 1-\frac{\delta}{K_{1}}$. Then, by (\ref{betweentheta}), we obtain
\begin{equation}\nonumber
\begin{aligned}
\frac{\left\|z^{H_{k}+1}-\overline{\theta_{k}} \right\|^{2}}{2} \leqslant \left(1-\eta_{H_{k}}\left(\frac{1}{\beta}-L\right)\right)
\frac{\left\|z^{H_{k}}-\overline{\theta_{k}}\right\|^{2}}{2}
+2\eta_{H_{k}}D\sqrt{M}+\frac{9\eta_{H_{k}}^{2}}{2}M+\frac{D^{2}}{4} \eta_{H_{k}}^{2}, \\ \ \forall \ w \in A_{k}.
\end{aligned}
\end{equation}
Combining $\eta_{H_{k}}\left(\frac{1}{\beta}-L\right) \geqslant 0$ and  the above inequality with  Lemma
\ref{lastlemma} gives
\begin{equation}\label{part591}
\begin{aligned}
\left\|z^{H_{k}+1}-\overline{\theta_{k}} \right\|\leqslant \left(\frac{4D\sqrt{M}}{\frac{1}{\beta}-L}\right)^{\frac{1}{2}}+
 \left(\frac{l_{1}\left(9M+\frac{D^{2}}{2}\right)}
 {\left(\frac{1}{\beta}-L\right)}\right)^{\frac{1}{2}} \frac{1}{H_{k}^{\frac{1}{3}}}+(\phi(H_{k}))^{\frac{1}{2}}, \forall \ w \in A_{k},
\end{aligned}
\end{equation}
where $\phi(H_{k})=o\left(\frac{1}{H_{k}^{\frac{2}{3}}}\right)$.
Denote $\widetilde{\theta}_{k+1}=\mathop{\arg\min}\limits_{\theta \in \Theta} \left\{\left\langle 2\widetilde{\eta} F_{k}(z^{H_{k}+1}), \theta \right\rangle+\|\theta-z^{H_{k}+1}\|^{2}\right\}$, where $\widetilde{\eta}=\frac{1}{2\sqrt{L^{2}+
\frac{1}{\beta^{2}}}}$.
According to the non-expansion property of the proximal mapping in Lemma  \ref{propertyofprox} (iii), we have
\begin{equation}\label{wildetheta}
\left\|\theta_{k+1}-\widetilde{\theta}_{k+1} \right\|^{2}\leqslant \widetilde{\eta}^{2} \left\|F_{k}(z^{H_{k}+1})-\hat{F}_{k}(z^{H_{k}+1}) \right\|^{2}.
\end{equation}
By Assumptions \ref{assumption0}-\ref{assumption2}, Lemma
\ref{stronglymonotone} and Lemma  \ref{contorolstrong}, we have
\begin{align}
&\langle F_{k}(\theta_{k+1}), \theta_{k+1}-\theta\rangle \notag\\
=&\langle F_{k}(\theta_{k+1})-F_{k}(\widetilde{\theta}_{k+1})+F_{k}
(\widetilde{\theta}_{k+1}), \theta_{k+1}-\widetilde{\theta}_{k+1}
+\widetilde{\theta}_{k+1}-\theta\rangle\notag\\
=&\langle F_{k}(\theta_{k+1})-F_{k}(\widetilde{\theta}_{k+1}), \theta_{k+1}-\widetilde{\theta}_{k+1}\rangle+\langle F_{k}(\widetilde{\theta}_{k+1}), \theta_{k+1}-\widetilde{\theta}_{k+1}\rangle\notag\\
&+\langle F_{k}(\theta_{k+1})-F_{k}(\widetilde{\theta}_{k+1}),
\widetilde{\theta}_{k+1}-\theta\rangle+\langle F_{k}(\widetilde{\theta}_{k+1}),\widetilde{\theta}_{k+1}
-\theta\rangle\notag\\
\leqslant & 2\left(2L^{2}+\frac{2}{\beta^{2}}\right)^{\frac{1}{2}}D\left\|\theta_{k+1}-\widetilde{\theta}_{k+1}\right\|+
2\left(2L^{2}+\frac{2}{\beta^{2}}\right)^{\frac{1}{2}}D\left\|\theta_{k+1}-\widetilde{\theta}_{k+1}\right\|
\notag\\
&+\left\|F_{k}(\widetilde{\theta}_{k+1})\right\|\left\|\theta_{k+1}-\widetilde{\theta}_{k+1}
\right\|+2D \left(2L^{2}+\frac{2}{\beta^{2}}\right)^{\frac{1}{2}}(2+\sqrt{2})
\left\|z^{H_{k}+1}-\overline{\theta_{k}} \right\|.\label{part60}
\end{align}
Similar to the proof of (\ref{boundedoff}) in Lemma \ref{exactlemma}, by  Assumptions \ref{assumption0}-\ref{assumption2}, we know that $\left\|F(\widetilde{\theta}_{k+1})\right\| \\ \leqslant
\frac{\sqrt{N}B_{\Theta}U_{R}}{(1-\gamma)^{2}}$, then
$\left\|F_{k}(\widetilde{\theta}_{k+1})\right\|
\leqslant \bigg(2\left\|F(\widetilde{\theta}_{k+1})\right\|^{2}
+
\frac{2}{\beta^{2}}
\left\|\widetilde{\theta}_{k+1}-\theta_{k}\right\|^{2}\bigg)
^{\frac{1}{2}}
\leqslant \left(\frac{2NB_{\Theta}^{2}U_{R}^{2}}{(1-\gamma)^{4}}+\frac{8D^{2}}{\beta^{2}}\right)^{\frac{1}{2}}$. This together with (\ref{part591}), (\ref{wildetheta}) and   (\ref{part60}) gives
\begin{align}
\langle F_{k}(\theta_{k+1}), \theta_{k+1}-\theta\rangle
\leqslant & 4\left(2L^{2}+\frac{2}{\beta^{2}}\right)^{\frac{1}{2}}D\left\|\theta_{k+1}-\widetilde{\theta}_{k+1}\right\|
+\left(\frac{2NB_{\Theta}^{2}U_{R}^{2}}{(1-\gamma)^{4}}
+\frac{8D^{2}}{\beta^{2}}\right)^{\frac{1}{2}}\notag\\
&\times \left\|\theta_{k+1}-\widetilde{\theta}_{k+1}
\right\|+2D \left(2L^{2}+\frac{2}{\beta^{2}}\right)^{\frac{1}{2}}(2+\sqrt{2})
\left\|z^{H_{k}+1}-\overline{\theta_{k}} \right\|\notag\\
\leqslant &C_{3}\left\|F_{k}(z^{H_{k}+1})-\hat{F}_{k}(z^{H_{k}+1}) \right\|
+C_{4}
 +C_{5}\frac{1}{H_{k}^{\frac{1}{3}}}+C_{6}(\phi(H_{k}))^{\frac{1}{2}},\label{tlastitrea}
\end{align}
where $C_{3}=\widetilde{\eta}\left(4\left(2L^{2}+\frac{2}{\beta^{2}}\right)^{\frac{1}{2}}D+\left(\frac{2NB_{\Theta}^{2}U_{R}^{2}}{(1-\gamma)^{2}}
+\frac{8D^{2}}{\gamma^{2}}\right)^{\frac{1}{2}}\right)$, $C_{4}=2D \left(2L^{2}+\frac{2}{\beta^{2}}\right)^{\frac{1}{2}}(2+\sqrt{2})\\
\left(\frac{4D\sqrt{M}}{\frac{1}{\beta}-L}\right)^{\frac{1}{2}}$, $C_{5}=2D \sqrt{2L^{2}+\frac{2}{\beta^{2}}}
(2+\sqrt{2})
 \left(\frac{l_{1}\big(9M+\frac{D^{2}}{2}\big)}
 {\big(\frac{1}{\beta}-L\big)}\right)^{\frac{1}{2}}$ and $C_{6}=2D \left(2L^{2}+\frac{2}{\beta^{2}}\right)^{\frac{1}{2}}(2+\sqrt{2})$.
From $A_{k} \subseteq A_{k}(z^{H_{k}+1})$ and (\ref{tlastitrea}), it follows that
\begin{align*}
\left\langle F_{k}(\theta_{k+1}), \theta_{k+1}-\theta \right\rangle \leqslant C_{3}\sqrt{M}
+C_{4}+C_{5}\frac{1}{H_{k}^{\frac{1}{3}}}
+C_{6}(\phi(H_{k}))^{\frac{1}{2}}, \forall \ w \in A_{k}.
\end{align*}
Recall that $P\big\{A_{k}\big\} \geqslant 1-\frac{\delta}{K_{1}}$. This together with the above inequality gives
\begin{align}\nonumber
P\left\{\left\langle F_{k}(\theta_{k+1}), \theta_{k+1}-\theta \right\rangle \leqslant C_{3}\sqrt{M}
+C_{4}+C_{5}\frac{1}{H_{k}^{\frac{1}{3}}}
+C_{6}(\phi(H_{k}))^{\frac{1}{2}} \right\} \geqslant P\left\{A_{k}\right\} \geqslant 1-\frac{\delta}{K_{1}}.
\end{align}
Noting that  $P\big\{A_{k}\big\} \geqslant 1-\frac{\delta}{K_{1}}$, we have
\begin{align}\label{intersectionAk}
P\left\{\bigcap_{k=1}^{K}A_{k} \right\} =1-P\left\{\bigcup_{k=1}^{K} A_{k}^{c} \right\} \geqslant 1-\frac{K}{K_{1}}\delta.
\end{align}
Noting that $\gamma_{k}=k^{\frac{1}{4}}$ and $H_{k}=k$, we denote $\widetilde{A}=\bigg\{w \in \Omega:\frac{\sum_{k=1}^{K} k^{\frac{1}{4}}\left\|\theta_{k}-\overline{\theta_{k}}
\right\|^{2}}{\sum_{k=1}^{K}k^{\frac{1}{4}}} \leqslant 2\frac{K^{\frac{1}{4}}D^{2}}{\sum_{k=1}^{K} k^{\frac{1}{4}}}+\frac{2\left(\frac{1}{\frac{1}
 {\gamma}-L}+2\beta\right)\sum_{k=1}^{K}k^{\frac{1}{4}} P_{k}}{\sum_{k=1}^{K} k^{\frac{1}{4}}}\bigg\}$, where $P_{k}=C_{3}\sqrt{M}
+C_{4}+C_{5}\frac{1}{H_{k}^{\frac{1}{3}}}
+C_{6}(\phi(k))^{\frac{1}{2}}$.
By Assumption \ref{assumption4}, similar to the proof of (\ref{thetawidetheta})-(\ref{nearfinal1}) in Lemma \ref{exactlemma}, for any $w \in$
{\Large $\cap$}$_{k=1}^{K}$$A_{k}$, we obtain that  $w \in \widetilde{A}$. Hence, {\Large $\cap$}$_{k=1}^{K}$$A_{k} \subseteq$$\widetilde{A}$. This together with (\ref{intersectionAk}) gives
\begin{align}\label{intersectionwidetildeAk}
P\left\{\widetilde{A} \right\} \geqslant P\left\{\bigcap_{k=1}^{K}A_{k} \right\} \geqslant 1-\frac{K}{K_{1}}\delta.
\end{align}
Noting that $\sum_{k=1}^{K} k^{\frac{1}{4}}\geqslant \frac{4}{5}K^{\frac{5}{4}}$ and $\sum_{k=1}^{K} k^{\frac{-1}{12}}\leqslant \frac{12}{11}K^{\frac{11}{12}}$, we have
\begin{align}
\frac{\sum_{k=1}^{K} k^{\frac{1}{4}}\left\|\theta_{k}-\overline{\theta_{k}}
\right\|^{2}}{\sum_{k=1}^{K}k^{\frac{1}{4}}}
 \leqslant &2\frac{K^{\frac{1}{4}}D^{2}}{\sum_{k=1}^{K} k^{\frac{1}{4}}}+\frac{2\left(\frac{1}{\frac{1}
 {\beta}-L}+2\beta\right)\sum_{k=1}^{K}k^{\frac{1}{4}} P_{k}}{\sum_{k=1}^{K} k^{\frac{1}{4}}}\notag\\
 \leqslant & \frac{5D^{2}}{2}\frac{1}{K}+2\left(\frac{1}{\frac{1}
 {\beta}-L}+2\beta\right)\left(C_{3}\sqrt{M}
+C_{4}\right) \notag\\
&+\frac{30}{11}\left(\frac{1}
 {\frac{1}{\beta}-L}+2\beta\right)C_{5}\frac{1}{K^{\frac{1}{3}}}
 +(\phi(K))^{\frac{1}{2}},  \forall \ w \in \widetilde{A}\notag.
\end{align}
Then, by Lyapunov inequality and the above inequality, we have
\begin{align}
\frac{\sum_{k=1}^{K} k^{\frac{1}{4}}\left\|\theta_{k}-\overline{\theta_{k}}
\right\|}{\sum_{k=1}^{K}k^{\frac{1}{4}}}
\leqslant &
\sqrt{\frac{\sum_{k=1}^{K} k^{\frac{1}{4}}\left\|\theta_{k}-\overline{\theta_{k}}
\right\|^{2}}{\sum_{k=1}^{K}k^{\frac{1}{4}}}}\notag\\
\leqslant &
\frac{\sqrt{10}}{2}D\frac{1}{K^{\frac{1}{2}}}+\sqrt{2}
\left(\frac{1}{\frac{1}
 {\beta}-L}+2\beta\right)^{\frac{1}{2}}\left(C_{3}\sqrt{M}
+C_{4}\right)^{\frac{1}{2}}\notag\\
&+\sqrt{\frac{30}{11}}\left(\frac{1}
 {\frac{1}{\beta}-L}+2\beta\right)^{\frac{1}{2}}C_{5}^{\frac{1}{2}}
 \frac{1}{K^{\frac{1}{6}}}
+(\phi(K))^{\frac{1}{4}},  \forall \ w \in \widetilde{A}\label{part63}.
\end{align}
By Assumptions \ref{assumption0}-\ref{assumption2}, similar to the proof of $(\ref{prox231})$ in Lemma \ref{exactlemma}, we have
\begin{equation}\nonumber
\begin{aligned}
\langle F(\theta_{k}), \theta_{k}-\theta\rangle\leqslant \left(4LD+\frac{2D}{\beta}+\frac{\sqrt{N}B_{\Theta}U_{R}}{(1-\gamma)^{2}}
\right)\left\|\theta_{k}-\overline{\theta_{k}}\right\|.
\end{aligned}
\end{equation}
Then combining (\ref{part63}) with the above inequality gives
\begin{align}
\frac{\sum_{k=1}^{K} k^{\frac{1}{4}}\left( \sup \limits_{\theta \in \Theta} \left\langle F(\theta_{k}), \theta_{k}-\theta \right\rangle \right)}{\sum_{k=1}^{K}k^{\frac{1}{4}}}
\leqslant&\left(4LD+\frac{2D}{\beta}+\frac{\sqrt{N}B_{\Theta}
U_{R}}
{(1-\gamma)^{2}}
\right)
\frac{\sum_{k=1}^{K} k^{\frac{1}{4}}\left\|\theta_{k}-\overline{\theta_{k}}
\right\|}{\sum_{k=1}^{K}k^{\frac{1}{4}}}\notag\\
\leqslant& L_{1}\frac{1}{K^{\frac{1}{2}}}+L_{2}+L_{3}
 \frac{1}{K^{\frac{1}{6}}}
+(\phi(K))^{\frac{1}{4}}, \forall \ w \in \widetilde{A}\label{unmeasurable},
\end{align}
where $L_{1}=\frac{\sqrt{10}}{2}D\left(4LD+\frac{2D}{\beta}+\frac{\sqrt{N}B_{\Theta}
U_{R}}
{(1-\gamma)^{2}}
\right)$, $L_{2}=\sqrt{2}
\left(\frac{1}{\frac{1}
 {\beta}-L}+2\beta\right)^{\frac{1}{2}}\Big(4LD+\frac{2D}{\beta}\\+\frac{\sqrt{N}B_{\Theta}
U_{R}}
{(1-\gamma)^{2}}
\Big)$$\big(C_{3}\sqrt{M}
+C_{4}\big)^{\frac{1}{2}}$ and $L_{3}=\sqrt{\frac{30}{11}}\left(4LD+\frac{2D}{\beta}+\frac{\sqrt{N}B_{\Theta}
U_{R}}
{(1-\gamma)^{2}}
\right)\left(\frac{1}
 {\frac{1}{\beta}-L}+2\beta\right)^{\frac{1}{2}}C_{5}^{\frac{1}{2}}$.
 By Assumptions \ref{assumption0}-\ref{assumption3} and Lemma \ref{gradientd}, we have
\begin{equation}\nonumber
\begin{aligned}
&\sup _{i \in \mathcal{N}}\frac{\sum_{k=1}^{K} k^{\frac{1}{4}}\left(\sup \limits_{\theta_{i} \in \Theta_{i}}J_{i}^{(\theta_{i}, \theta_{-i,k})}(t)-J_{i}^{(\theta_{i,k}, \theta_{-i,k})}(t)\right)}{\sum_{k=1}^{K}k^{\frac{1}{4}}}\\
\leqslant &
\sup _{i \in \mathcal{N}}\frac{\sum_{k=1}^{K} k^{\frac{1}{4}}M_{1}\left( \sup \limits_{\theta_{i} \in \Theta_{i}} \langle F_{i}(\theta_{k}), \theta_{i,k}-\theta_{i}\rangle\right)}{\sum_{k=1}^{K}
k^{\frac{1}{4}}}.
\end{aligned}
\end{equation}
Noting that $\sup \limits_{\theta_{i} \in \Theta_{i}}\left\langle F_{i}(\theta_{k}), \theta_{i,k}-\theta_{i}\right\rangle \leqslant  \sup \limits_{\theta \in \Theta} \left\langle F(\theta_{k}), \theta_{k}-\theta \right\rangle$, for any countable set $\widetilde{\Theta_{i}} \subseteq \Theta_{i}$, it follows that
$\sup \limits_{\theta_{i} \in \widetilde{\Theta_{i}}}J_{i}^{(\theta_{i}, \theta_{-i,k})}(t)-J_{i}^{(\theta_{i,k}, \theta_{-i,k})}(t) \leqslant \sup \limits_{\theta_{i} \in \Theta_{i}}J_{i}^{(\theta_{i}, \theta_{-i,k})}(t)-J_{i}^{(\theta_{i,k}, \theta_{-i,k})}(t)$. This together with the above inequality leads to
\begin{align*}
&\sup _{i \in \mathcal{N}}\frac{\sum_{k=1}^{K} k^{\frac{1}{4}}\left(\sup \limits_{\theta_{i} \in \widetilde{\Theta_{i}}}J_{i}^{(\theta_{i}, \theta_{-i,k})}(t)-J_{i}^{(\theta_{i,k}, \theta_{-i,k})}(t)\right)}{\sum_{k=1}^{K}k^{\frac{1}{4}}}\\
\leqslant &
\sup _{i \in \mathcal{N}}\frac{\sum_{k=1}^{K} k^{\frac{1}{4}}\left(\sup \limits_{\theta_{i} \in \Theta_{i}}J_{i}^{(\theta_{i}, \theta_{-i,k})}(t)-J_{i}^{(\theta_{i,k}, \theta_{-i,k})}(t)\right)}{\sum_{k=1}^{K}k^{\frac{1}{4}}}\\
\leqslant &
\frac{\sum_{k=1}^{K} k^{\frac{1}{4}}M_{1}\left(  \sup \limits _{\theta \in \Theta} \left\langle F(\theta_{k}), \theta_{k}-\theta \right\rangle\right)}{\sum_{k=1}^{K}
k^{\frac{1}{4}}}.
\end{align*}
Hence, by (\ref{unmeasurable}) and the above inequality, we have
\begin{align*}
\sup _{i \in \mathcal{N}}\frac{\sum_{k=1}^{K} k^{\frac{1}{4}}\left(\sup \limits_{\theta_{i} \in \widetilde{\Theta_{i}}}J_{i}^{(\theta_{i}, \theta_{-i,k})}(t)-J_{i}^{\theta_{k}}(t)\right)}
{\sum_{k=1}^{K}k^{\frac{1}{4}}}\leqslant L(K,K_{1},T,\delta)
+(\phi(K))^{\frac{1}{4}}, \forall \ w \in \widetilde{A}.
\end{align*}
Then, from (\ref{intersectionwidetildeAk}) and the above inequality, we have
\begin{align*}
P\Bigg\{\sup _{i \in \mathcal{N}}\frac{\sum_{k=1}^{K} k^{\frac{1}{4}}\left(\sup \limits_{\theta_{i} \in \widetilde{\Theta_{i}}}J_{i}^{(\theta_{i}, \theta_{-i,k})}(t)-J_{i}^{\theta_{k}}(t)\right)}
{\sum_{k=1}^{K}k^{\frac{1}{4}}}\leqslant L(K,K_{1},T,\delta)
+(\phi(K))^{\frac{1}{4}}\Bigg\} \geqslant 1-\frac{K}{K_{1}}\delta,
\end{align*}
that is, Lemma \ref{lemmapseudo} holds.
\end{proof}

\begin{theorem}\label{pseudogradient}
If Assumptions \ref{assumption0}-\ref{assumption4} hold, and we choose $\eta_{h}=\frac{l_{1}}{h^{\frac{2}{3}}}$, $\gamma_{k}=k^{\frac{1}{4}}$, $H_{k}=k$ and $K_{1}=K$ in Algorithm \ref{alg::pseudoGradient}, then $\big((\pi_{\theta_{i,k}})_{i=1}^{N}\big)
_{k=1}^{\infty}$  given by Algorithm \ref{alg::pseudoGradient} is a $k^{\frac{1}{4}}$-weighted asymptotic Nash equilibrium of the game $\Gamma$ in probability.
\end{theorem}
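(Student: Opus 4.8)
The plan is to obtain Theorem~\ref{pseudogradient} as an essentially immediate consequence of Lemma~\ref{lemmapseudo} together with Definition~\ref{nashprobability}: all the analytical heavy lifting (the inner-loop contraction in the presence of the stochastic errors, the outer-loop telescoping, and the concentration estimate from Lemma~\ref{gradientestimate}) has already been carried out in Lemma~\ref{lemmapseudo}, so here I only need to specialize the free parameters and check that the resulting high-probability error bound vanishes. First I would instantiate Lemma~\ref{lemmapseudo} with $\eta_h=l_1/h^{2/3}$, $\gamma_k=k^{1/4}$, $H_k=k$, and then set $K_1=K$. With $K_1=K$ the failure probability $\frac{K}{K_1}\delta$ appearing in Lemma~\ref{lemmapseudo} collapses to $\delta$, so that for every $\delta\in(0,1]$ and every countable $\widetilde{\Theta_i}\subseteq\Theta_i$ one gets
\begin{equation}\nonumber
P\Bigg\{\sup_{i\in\mathcal{N}}\frac{\sum_{k=1}^{K}k^{\frac14}\Big(\sup_{\theta_i\in\widetilde{\Theta_i}}J_i^{(\theta_i,\theta_{-i,k})}(t)-J_i^{\theta_k}(t)\Big)}{\sum_{k=1}^{K}k^{\frac14}}\le L(K,K,T,\delta)+(\phi(K))^{\frac14}\Bigg\}\ge 1-\delta .
\end{equation}

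Next I would verify that the right-hand bound tends to $0$ as $K\to\infty$. Reading off the explicit form of $L(K,K_1,T,\delta)$ from Lemma~\ref{lemmapseudo}, every term except the one proportional to $(C_3\sqrt{M}+C_4)^{1/2}$ carries an explicit negative power of $K$ (namely $K^{-1/2}$ or $K^{-1/6}$), while $(\phi(K))^{1/4}=o(K^{-1/6})$; hence all of these vanish. The surviving term is controlled by $M=M(T,K_1,\delta)$, whose statistical part $16\log(8K_1/\delta)\,\frac{NB_\Theta^2U_R^2\gamma^2}{(1-\gamma)^4K_1}$ is of order $(\log K)/K$ once $K_1=K$ and hence tends to $0$, while its truncation-bias part $2N(B_\Theta U_R)^2\big[(\tfrac{T+1}{1-\gamma}+\tfrac{\gamma}{(1-\gamma)^2})\gamma^{T+1}\big]^2$ is driven to $0$ by letting the horizon grow, $T=T(K)\to\infty$, since $\gamma\in(0,1)$ forces $(T+1)^2\gamma^{2(T+1)}\to0$. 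Thus $M(T(K),K,\delta)\to0$ and therefore $L(K,K,T(K),\delta)+(\phi(K))^{1/4}\to0$.

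Finally I would match this with Definition~\ref{nashprobability}. The weights $\gamma_k=k^{1/4}$ are positive and satisfy $\sum_k k^{1/4}=\infty$. Fix $\delta\in(0,1]$ and $\epsilon>0$; by the previous step there is $K_0=K_0(\epsilon,\delta)$ such that $L(K,K,T(K),\delta)+(\phi(K))^{1/4}\le\epsilon$ for all $K\ge K_0$, and then the displayed inequality says precisely that, for $K\ge K_0$, the weighted ratio occurring there is $\le\epsilon$ with probability at least $1-\delta$, uniformly over countable $\widetilde{\Theta_i}\subseteq\Theta_i$. This is exactly the assertion that $\big((\pi_{\theta_{i,k}})_{i=1}^N\big)_{k=1}^\infty$ is a $k^{1/4}$-weighted asymptotic Nash equilibrium of $\Gamma$ in probability.

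The main obstacle does not lie in this theorem itself but in the observation that $K_1=K$ alone is not enough to make the bound converge: the constant-order term $(C_3\sqrt{M}+C_4)^{1/2}$ encodes the finite-horizon truncation bias of the G(PO)MDP estimator, which is independent of the number of trajectories $K_1$, so the horizon $T$ must also be allowed to grow with $K$. Making this dependence explicit — and, if one wants a convergence rate rather than bare convergence, choosing $T(K)$ of order $\log K$ so that $(T(K))^2\gamma^{2T(K)}$ is dominated by the $K^{-1/6}$ term — is the point that needs the most care; everything else is routine bookkeeping on top of Lemma~\ref{lemmapseudo}.
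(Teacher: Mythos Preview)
Your proposal is correct and follows essentially the same route as the paper's proof: invoke Lemma~\ref{lemmapseudo} with $K_{1}=K$ so that the failure probability becomes $\delta$, argue that the resulting bound $L(K,K,T,\delta)+(\phi(K))^{1/4}$ tends to zero as both $K$ and $T$ grow (the paper frames this as choosing $T\ge\widetilde{T}$ and $K\ge\widetilde{K}$ separately rather than writing $T=T(K)$, but the content is identical), and then read off Definition~\ref{nashprobability}. Your closing paragraph correctly isolates the one point that requires care---that the truncation-bias component of $M$ is independent of $K_{1}$ and hence forces $T\to\infty$---which is precisely what the paper handles via the limit $\lim_{T\to\infty,K\to\infty}R_{1}(T,K)=0$.
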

\begin{proof}
If Assumptions \ref{assumption0}-\ref{assumption4} hold, then by Lemma \ref{lemmapseudo}, for any countable set $\widetilde{\Theta_{i}} \subseteq \Theta_{i}$, we have
\begin{equation}\label{Landphi}
\begin{aligned}
P\Bigg\{\sup _{i \in \mathcal{N}}\frac{\sum_{k=1}^{K} k^{\frac{1}{4}}\left(\sup \limits_{\theta_{i} \in \widetilde{\Theta_{i}}}J_{i}^{(\theta_{i}, \theta_{-i,k})}(t)-J_{i}^{\theta_{k}}(t)\right)}
{\sum_{k=1}^{K}k^{\frac{1}{4}}}\leqslant R_{1}(T,K)+R_{2}(K)\Bigg\} \geqslant 1-\delta, \notag \\ \ \forall \ \delta \in (0,1],
\end{aligned}
\end{equation}
where $R_{1}(T,K)=\sqrt{2}M_{1}\left(\frac{1}{\frac{1}
 {\beta}-L}+2\beta\right)^{\frac{1}{2}}\left(4LD+\frac{\sqrt{N}B_{\Theta}U_{R}}{(1-\gamma)^{2}}
+\frac{2D}{\beta}\right)
\Bigg[\frac{\sqrt{M}}{2\sqrt{L^{2}+\frac{1}{\beta^{2}}}}\Bigg(4D
\Big(2L^{2}\\+\frac{2}{\beta^{2}}\Big)^{\frac{1}{2}}+\left(\frac{2NB_{\Theta}^{2}U_{R}^{2}}{(1-\gamma)^{2}}
+\frac{8D^{2}}{\gamma^{2}}\right)^{\frac{1}{2}}\Bigg)
+4D\left(1+\sqrt{2}\right)\left(\left(2L^{2}+\frac{2}{\beta^{2}}\right) \frac{D\sqrt{M}}{\frac{1}{\beta}-L}\right)^{\frac{1}{2}}\Bigg]^{\frac{1}{2}}$, $R_{2}(K)=\frac{\sqrt{10}D}{2}\\ \Big(4LD+\frac{2D}{\beta}+\frac{\sqrt{N}B_{\Theta}U_{R}}
{(1-\gamma)^{2}}
\Big) \frac{M_{1}}{K^{\frac{1}{2}}}+
\sqrt{\frac{30}{11}}  \left(\frac{1}{\frac{1}{\beta}-L}+2\beta
\right)^{\frac{1}{2}} \left(4LD+\frac{2D}{\beta}+\frac{\sqrt{N}B_{\Theta}U_{R}}{(1-\gamma)^{2}}
\right)
\Bigg(2D \Big(2L^{2}\\+\frac{2}{\beta^{2}}\Big)^{\frac{1}{2}}
(2+\sqrt{2})
 \left(\frac{l_{1}\big(9M+\frac{D^{2}}{2}\big)}
 {\big(\frac{1}{\beta}-L\big)}\right)^{\frac{1}{2}}\Bigg)^{\frac{1}{2}}
\frac{M_{1}}{K^{\frac{1}{6}}}
(\phi(K))^{\frac{1}{4}}$.
From $M=M\big(T, K, \delta\big)=2N\big(B_{\Theta} U_{R}\big)^{2}\\ \Big[\left(\frac{T+1}{1-\gamma}+
\frac{\gamma}{(1-\gamma)^{2}}\right)  \gamma^{T+1}\Big]^{2}
+16\log\left(\frac{8K}{\delta}\right)
\frac{NB_{\Theta}^{2}U_{R}^{2}\gamma^{2}}{(1-\gamma)^{4}K}
$ and $\gamma \in(0,1)$, we have $\lim \limits_{T \rightarrow \infty,K \rightarrow \infty}R_{1}(T,K)\\=0$. Therefore,  for any $\epsilon >0$, there exist $\widetilde{T} > 0$ and $K_{2} > 0$ such that $R_{1}(T,K) \leqslant \frac{\epsilon}{2}$ if $T \geqslant \widetilde{T}$ and $K \geqslant K_{2}$. Recalling that $(\phi(K))^{\frac{1}{4}}=o\left(\frac{1}{K^{\frac{1}{6}}}\right)$, we have $\lim \limits_{K \rightarrow \infty}R_{2}(K)=0$.  Therefore,  there exists $K_{3} > 0$ such that $R_{2}(K) \leqslant \frac{\epsilon}{2}$ if $K \geqslant K_{3}$.
To sum up, for any $\delta \in (0,1]$ and $\epsilon >0$, there exist $\widetilde{T}$ and $\widetilde{K}=\max\left\{K_{2}, K_{3}\right\}$
such that $R_{1}(T,K)+R_{2}(K)
\leqslant \epsilon$ if
$T \geqslant \widetilde{T}$ and $K \geqslant \widetilde{K}$. This together with (\ref{Landphi}) gives
\begin{equation}\nonumber
P\left\{\sup _{i \in \mathcal{N}}\frac{\sum_{k=1}^{K}k^{\frac{1}{4}}\left(\sup \limits _{\theta_{i} \in \widetilde{\Theta_{i}}}J_{i}^{(\theta_{i}, \theta_{-i,k})}(t)-J_{i}^{(\theta_{i,k}, \theta_{-i,k})}(t)\right)}{\sum_{k=1}^{K}k^{\frac{1}{4}}} \leqslant \epsilon\right\}\geqslant 1-\delta.
\end{equation}
Hence, from Definition \ref{nashprobability}, we know
$\big((\pi_{\theta_{i,k}})_{i=1}^{N}\big)
_{k=1}^{\infty}$ is a $k^{\frac{1}{4}}$-weighted asymptotic Nash equilibrium of the game $\Gamma$ in probability.
\end{proof}

\begin{remark}
For the stochastic game with the finite state and action space, if we consider direct parameterization, that is, $\pi_{\theta_{i}}(a_{i}\mid s)=\theta_{ s,a_{i}}$, where $\theta_{s,a_{i}}\geqslant 0$ and $\sum_{a_{i} \in \mathcal{A}_{i}}\theta_{s,a_{i}}=1$, then the direct parameterization doesn't meet with the conditions in Assumption \ref{assumption2}. However, there exists a similar conclusion to Lemma \ref{FLIPSCHITZ} in Lemma 7 in \cite{Runyu}. To be more exact, constant $L$ changes into $\frac{2U_{R}\sum_{i=1}^{n}|\mathcal{A}_{i}|}{1-\gamma^{3}}$,
where $|\mathcal{A}_{i}|$ is the number of actions of agent $i$. If we consider $\alpha$-greedy direct parameterization, the parameterization doesn't meet with the conditions in Assumption \ref{assumption2} neither.
we can still come to the same conclusion as  Lemma \ref{FLIPSCHITZ} by means of the proof of Lemma 7 in  \cite{Runyu}.

For the case with finite state and action space, direct parameterization satisfies gradient dominant  theorem from Lemma 3 in \cite{Runyu}. Since $\alpha$-greedy direct parameterization satisfies Assumption \ref{assumption1}, then gradient domination theorem holds for $\alpha$-greedy direct parameterization by Lemma \ref{gradientd}. If Assumption \ref{assumption3} holds, the Nash equilibrium problem can be equivalent to SVI$(F,\Theta)$ under the two classes of parameterization. Furthermore, we have illustrated that the two classes of parameterization satisfies Lemma \ref{FLIPSCHITZ}. So, Algorithm \ref{alg::exactGradient} can be used in the context of the two kinds of parameterization. Thus, Theorem \ref{exactgradient} holds for the two kinds of parameterization. Similarly, Algorithm \ref{alg::pseudoGradient} can be used under $\alpha$-greedy direct parameterization.
\end{remark}

\section{Numerical Example}
Consider a two-person stochastic game, where the set of agents, the state space and the action space are given by  $\mathcal{N}=\{1,2\}$, $\mathcal{S}=\{1,2\}$, $\mathcal{A}_{1}=\mathcal{A}_{2}=\{1,2\}$.  We assume that  the immediate rewards of agents are independent of the state and are given in  Table \ref{Reward Table}.  The discount factor $\gamma$ is taken as $0.9$.  We consider $\alpha$-greedy direct parameterization, that is, $\pi_{\theta_{i}}(a_{i}=1\mid s=1)=(1-\alpha)\theta_{i}(1)+\alpha/2$, $\pi_{\theta_{i}}(a_{i}=2\mid s=1)=(1-\alpha)\theta_{i}(1)+\alpha/2$, $\pi_{\theta_{i}}(a_{i}=1\mid s=2)=(1-\alpha)\theta_{i}(2)+\alpha/2$, $\pi_{\theta_{i}}(a_{i}=2\mid s=2)=(1-\alpha)\theta_{i}(2)+\alpha/2$, $ i=1,2$, where $\theta=[\theta_{1};\theta_{2}]=[\theta_{1}(1), \theta_{1}(2);\theta_{2}(1), \theta_{2}(2)] \in \Theta$ and $\Theta=\Theta_{1}\times \Theta_{2} \subseteq [0,1]^2\times [0,1]^2 $ is parameters set.  We choose $\alpha=0.01$.
 The initial probability density function $\rho_{t}(s)$ of the environment state at time $t$ is $\frac{1}{2}\delta(s-1)+\frac{1}{2}\delta(s-2)$.
The Markov kernel  of the induced Markov chain
$\rho_{t,t+1}^{\theta}(s\mid s=1)=\frac{3}{5}\delta(s-1)+\frac{2}{5}\delta(s-2)$ and $\rho_{t,t+1}^{\theta}(s\mid s=2)=\frac{7}{10}\delta(s-1)+\frac{3}{10}\delta(s-2)$, $\forall \ \theta \in \Theta$. By (\ref{valuefunction}) and Proposition \ref{policygradient}, we have the pseudo gradient $F(\theta)=[F_{1}(\theta),F_{2}(\theta)]$, $M_{1}=1$ and  Lipschitz constant $L=5.63$,  where $F_{1}(\theta)=\Bigg[-\frac{(1.4\gamma -\gamma+1)(\alpha-1)((1-\alpha)\theta_{2}(1)
+2+\frac{\alpha}{2})}{2(\gamma+0.6\gamma -0.7\gamma +0.7\gamma^{2}-0.6\gamma^{2}-1)},\frac{(2\gamma 0.6-\gamma-1)(\alpha-1)((1-\alpha)\theta_{2}(2)
+2+\frac{\alpha}{2})}{2(\gamma+0.6\gamma -0.7\gamma +0.7\gamma^{2}-0.6\gamma^{2}-1)}\Bigg]$, $F_{2}(\theta)=\Bigg[\frac{(1.4\gamma -\gamma+1)(\alpha-1)((\alpha-1)\theta_{1}(1)
-2-\frac{\alpha}{2})}{2(\gamma+0.6\gamma -0.7\gamma +0.7\gamma^{2}-0.6\gamma^{2}-1)},-\frac{(1.4\gamma -\gamma-1)(\alpha-1)(-(1-\alpha)\theta_{2}(2)
-2-\frac{\alpha}{2})}{2(\gamma+0.6\gamma -0.7\gamma +0.7\gamma^{2}-0.6\gamma^{2}-1)}\Bigg]$. Then, it follows that  Assumption
\ref{assumption1}, Assumption \ref{assumption3} and Assumption \ref{assumption4} hold  and $(0 ,0,0,0)$ is the solution of MVI$(F,\Theta)$.
 Denote $\epsilon_{1,K}=\sup \limits_{i \in \mathcal{N}} \frac{\sum_{k=1}^{K} k^{\frac{1}{2}}\left(\sup \limits_{\theta_{i} \in \Theta_{i}} J_{i}^{\left(\theta_{i}, \theta_{-i, k}\right)}(t)-J_{i}^{\left(\theta_{i, k}, \theta_{-i, k}\right)}(t)\right)}{\sum_{k=1}^{K} k^{\frac{1}{2}}}$ and $\epsilon_{2,K}=\sup \limits_{i \in \mathcal{N}} \frac{\sum_{k=1}^{K} k^{\frac{1}{4}}\left(\sup \limits_{\theta_{i} \in \Theta_{i}} J_{i}^{\left(\theta_{i}, \theta_{-i, k}\right)}(t)-J_{i}^{\left(\theta_{i, k}, \theta_{-i, k}\right)}(t)\right)}{\sum_{k=1}^{K} k^{\frac{1}{4}}}$. Then, we perform  simulations for Algorithm \ref{alg::exactGradient} and Algorithm \ref{alg::pseudoGradient}, respectively.

 \begin{table}[H]
 \centering
 \caption{Reward Table.}
 \begin{tabular}{|c|c|c|}
    \hline
          &$a_{2}=1$ & $a_{2}=2$\\
          \hline
   $a_{1}=1$& $(3,3)$ & $(4,0)$\\
   \hline
   $a_{1}=2$& $(0,4)$ & $(2,2)$ \\
   \hline
  \end{tabular}
  \label{Reward Table}
 \end{table}

 \begin{figure}[H]
  \begin{minipage}[b]{.5\linewidth}
  \centering
  \includegraphics[height=3.5cm,width=7cm]{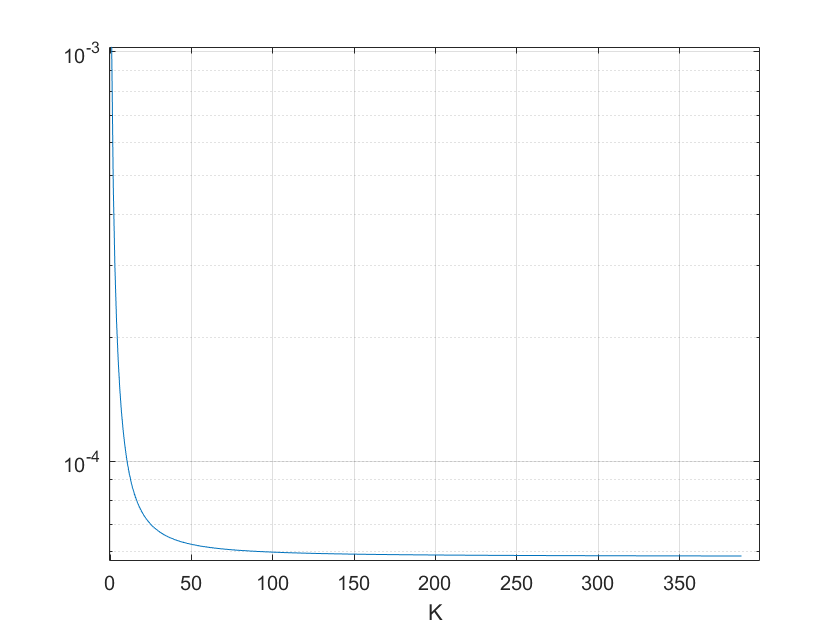}
\caption{Curve of $\epsilon_{1,K}$ with respect to $K$ \\ for
 the case with the exact pseudo gradients.}
\label{exactgradientgraph}
\end{minipage}
  \begin{minipage}[b]{.5\linewidth}
  \centering
  \includegraphics[height=3.5cm,width=7cm]{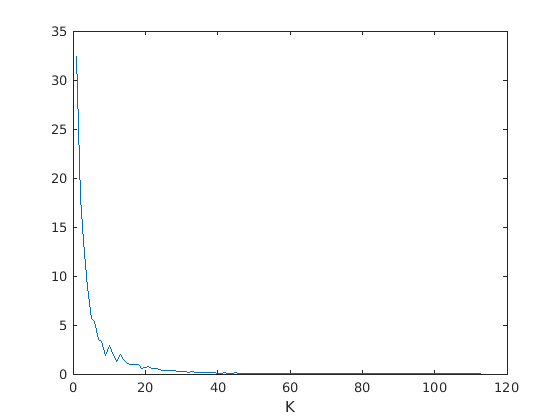}
\caption{Curve of $\epsilon_{2,K}$ with respect to $K$ \\ for  the case with the unknown \\
pseudo gradients.}
\label{unknownpseudograph}
\end{minipage}
 \end{figure}
For the case with the exact pseudo gradients (Algorithm \ref{alg::exactGradient}), Figure \ref{exactgradientgraph}  shows the curve of $\epsilon_{1,K}$ with respect to $K$. It can be seen  that $\epsilon_{1,K}$ vanishes to $0$ as $K$ increases, which implies $\big((\pi_{\theta_{i,k}})_{i=1}^{N}\big)
_{k=1}^{\infty}$ is a $k^{\frac{1}{2}}$-weighted asymptotic Nash equilibrium of the game, that is, Theorem \ref{exactgradient} follows.
  For the case with the unknown pseudo gradients (Algorithm \ref{alg::pseudoGradient}), the sampling time horizon  of the G(PO)MDP estimator  and  the number of trajectories are taken as $T=20$ and $K_{1}=K+1$. Figure \ref{unknownpseudograph}  shows  $\epsilon_{2,K}$ vanishes to $0$ as $K$ increases.
  It can be seen from Figure \ref{unknownpseudograph} that $\big((\pi_{\theta_{i,k}})_{i=1}^{N}\big)
_{k=1}^{\infty}$ is a $k^{\frac{1}{4}}$-weighted asymptotic Nash equilibrium of the game  with probability one, which is a better result comparing from Theorem \ref{pseudogradient} where $\big((\pi_{\theta_{i,k}})_{i=1}^{N}\big)
_{k=1}^{\infty}$ is a $k^{\frac{1}{4}}$-weighted asymptotic Nash equilibrium of the game in probability. Comparing Figure \ref{exactgradientgraph} with Figure \ref{unknownpseudograph}, it can be seen that Algorithm \ref{alg::pseudoGradient} have a  slower convergence rate than Algorithm \ref{alg::exactGradient}.

\section{Conclusion}
The general-sum stochastic game  with an unknown transition probability density function is investigated in this paper. Each agent only observes the environment state and its own reward and is
unknown about the transition  probability density function of the environment state and  the others' actions and rewards. We define the concepts of weighted asymptotic Nash equilibrium for a given sequence with probability $1$ and in probability  under policy parameterization and prove the equivalence between Nash equilibrium  and variational inequality problems. We have proposed two-loop algorithms for the solution of the variational inequality problem for the cases with exact and unknown pseudo gradients, respectively. In the outer loop, we sequentially update the constructed strongly monotone variational inequality and we employ a single-call extra-gradient algorithm for solving the constructed strongly monotone variational inequality in the inner loop. It is shown that the algorithm is convergent to the $k^{\frac{1}{2}}$-weighted asymptotic  Nash equilibrium in the context of exact pseudo gradients. Further, in the context of unknown pseudo gradients, a decentralized algorithm is proposed by leveraging the G(PO)MDP gradient estimator of the pseudo gradient. Also, we provide the convergence  guarantee  to the $k^{\frac{1}{4}}$-weighted asymptotic  Nash equilibrium in probability.

It is worth noting that our assumptions about parameterization do not hold for softmax parameterization. So, how to design an algorithm for learning Nash equilibria under softmax parameterization is a future direction. It is also worth considering  how to make better-verified assumptions than Assumption \ref{assumption4} and choose appropriate parameterization so that pseudo gradients have better properties. In addition, we would consider how to design an algorithm to estimate the pseudo gradient with a lower computation complexity in future.

\appendix
\section*{Appendix A Supplement Definitions and Lemmas}

\begin{definition}\label{definitionofmonotone}
~\citep{FFacchinei} Mapping $G(x): K \rightarrow \mathbb{R}^{n}$ is
\begin{itemize}
    \item[(i)]  monotone: if
\begin{equation*}
    \left\langle G(x)-G(x^{\prime}), x-x^{\prime}\right\rangle \geqslant 0, \quad \forall\ x,\ x^{\prime} \in K;
\end{equation*}
  \item[(ii)]  strongly monotone: if
\begin{equation*}
    \left\langle G(x)-G(x^{\prime}), x-x^{\prime}\right\rangle > 0, \quad \forall\ x,\ x^{\prime} \in K;
\end{equation*}
 \item[(iii)] $\mu$-strongly monotone: if there exists a constant $\mu >0$, such that
\begin{equation*}
\left\langle G(x)-G(x^{\prime}), x-x^{\prime}\right\rangle \geqslant \mu\left\|x-x^{\prime}\right\|^{2}, \quad \forall\ x,\  x^{\prime} \in K.
\end{equation*}
\end{itemize}
\end{definition}

\begin{definition}\label{definitionofgap}
~\citep{FFacchinei}
For SVI$(G,K)$, its prime gap function is
$
G_{\text {gap }}(x) \equiv \sup\limits _{y \in K} G^{T}(x)(x-y),\  x \in  K
$ and its dual gap function is
$G_{\text {dual }}(x) \equiv \sup \limits_{y \in K} G^{T}(y)(x-y),\  x \in K$.
\end{definition}

\begin{lemma}\label{stronglylipschitz}
If Assumptions $\ref{assumption0}$-\ref{assumption2} hold, then for any $\theta \in \Theta$,  $F_{k}(\theta)$ is $\sqrt{2L^{2}+\frac{2}{\beta^{2}}}$-Lipschitz continuous with respect to $\theta$, where $L$ is given by Lemma \ref{FLIPSCHITZ}.
\end{lemma}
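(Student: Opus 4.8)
\textbf{Proof proposal for Lemma \ref{stronglylipschitz}.}
The plan is to reduce the claim directly to Lemma \ref{FLIPSCHITZ} via the definition $F_{k}(\theta)=F(\theta)+\frac{1}{\beta}(\theta-\theta_{k})$. First I would fix arbitrary $\theta^{1},\theta^{2}\in\Theta$ and write
\begin{equation}\nonumber
F_{k}(\theta^{1})-F_{k}(\theta^{2})=\big(F(\theta^{1})-F(\theta^{2})\big)+\frac{1}{\beta}\big(\theta^{1}-\theta^{2}\big),
\end{equation}
noting that the proximal shift $\frac{1}{\beta}\theta_{k}$ cancels. This isolates the pseudo-gradient increment from the linear monotone perturbation.

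Next I would apply the $C_{2}$ inequality $\|a+b\|^{2}\leqslant 2\|a\|^{2}+2\|b\|^{2}$ to obtain
\begin{equation}\nonumber
\left\|F_{k}(\theta^{1})-F_{k}(\theta^{2})\right\|^{2}\leqslant 2\left\|F(\theta^{1})-F(\theta^{2})\right\|^{2}+\frac{2}{\beta^{2}}\left\|\theta^{1}-\theta^{2}\right\|^{2}.
\end{equation}
Since Assumptions \ref{assumption0}--\ref{assumption2} hold, Lemma \ref{FLIPSCHITZ} gives $\|F(\theta^{1})-F(\theta^{2})\|\leqslant L\|\theta^{1}-\theta^{2}\|$, hence $\|F(\theta^{1})-F(\theta^{2})\|^{2}\leqslant L^{2}\|\theta^{1}-\theta^{2}\|^{2}$. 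Substituting this bound yields
\begin{equation}\nonumber
\left\|F_{k}(\theta^{1})-F_{k}(\theta^{2})\right\|^{2}\leqslant \Big(2L^{2}+\frac{2}{\beta^{2}}\Big)\left\|\theta^{1}-\theta^{2}\right\|^{2},
\end{equation}
and taking square roots establishes that $F_{k}$ is $\sqrt{2L^{2}+\frac{2}{\beta^{2}}}$-Lipschitz continuous on $\Theta$.

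There is essentially no hard step here: the entire content is Lemma \ref{FLIPSCHITZ} together with the elementary quadratic inequality, and the only thing to be mildly careful about is that the constant emerging from $C_{2}$ is $\sqrt{2}$ rather than the (sharper but harder to propagate) triangle-inequality constant $L+\frac{1}{\beta}$; the stated $\sqrt{2L^{2}+2/\beta^{2}}$ is precisely what the $C_{2}$ route produces and is what later estimates in the paper (e.g.\ in the proofs of Lemmas \ref{exactlemma} and \ref{lemmapseudo}) are written against, so no refinement is needed.
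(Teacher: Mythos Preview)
Your proof is correct and follows essentially the same route as the paper: decompose $F_{k}(\theta^{1})-F_{k}(\theta^{2})$ into the $F$-increment plus the linear $\frac{1}{\beta}(\theta^{1}-\theta^{2})$ term, apply the $C_{2}$ inequality, and invoke Lemma \ref{FLIPSCHITZ}. Your remark that the $C_{2}$ constant (rather than the sharper $L+\tfrac{1}{\beta}$) is exactly what downstream estimates are written against is also accurate.
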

\begin{proof}
If Assumptions $\ref{assumption0}$-\ref{assumption2} hold, then by Lemma \ref{FLIPSCHITZ}, it follows that
$$
\begin{aligned}
\left\|F_{k}(\theta)-F_{k}(\theta^{\prime})\right\|^{2}&
=\left\|F(\theta)-F(\theta^{\prime})
+\frac{1}{\gamma}(\theta-\theta^{\prime})\right\|^{2}\\
&\leqslant 2\left\|F(\theta)-F(\theta^{\prime})\right\|^{2}+2\left\|\frac{1}{\gamma}(\theta-\theta^{\prime})\right\|^{2}\\
&\leqslant 2L^{2}\left\|\theta-\theta^{\prime}\right\|^{2}+\frac{2}{\gamma^{2}}\left\|\theta-\theta^{\prime}\right\|^{2}\\
&\leqslant \left(2L^{2}+\frac{2}{\beta^{2}}\right)\left\|\theta-
\theta^{\prime}\right\|^{2},  \forall \ \theta, \ \theta^{\prime} \in \Theta.
\end{aligned}
$$
\end{proof}

\begin{lemma}\label{stronglymonotone}
If Assumptions $\ref{assumption0}$-\ref{assumption2} hold, then $F_{k}(\theta)$ is $\left(\frac{1}{\beta}-L\right)$-strongly monotone with respect to
$\theta \in \Theta$, where $L$ is given by Lemma \ref{FLIPSCHITZ}.
\end{lemma}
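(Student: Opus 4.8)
The plan is to reduce everything to the defining decomposition $F_{k}(\theta) = F(\theta) + \frac{1}{\beta}(\theta - \theta_{k})$ and to exploit the Lipschitz continuity of $F$ established in Lemma \ref{FLIPSCHITZ}. For any $\theta, \theta' \in \Theta$, the shift $-\frac{1}{\beta}\theta_{k}$ cancels in the difference $F_{k}(\theta) - F_{k}(\theta')$, so that $F_{k}(\theta) - F_{k}(\theta') = F(\theta) - F(\theta') + \frac{1}{\beta}(\theta - \theta')$, and hence the affine term contributes exactly $\frac{1}{\beta}\|\theta - \theta'\|^{2}$ to the inner product $\langle F_{k}(\theta) - F_{k}(\theta'), \theta - \theta'\rangle$. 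It therefore remains only to lower bound $\langle F(\theta) - F(\theta'), \theta - \theta'\rangle$.

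First I would apply the Cauchy--Schwarz inequality to obtain $\langle F(\theta) - F(\theta'), \theta - \theta'\rangle \geqslant -\left\|F(\theta) - F(\theta')\right\|\,\left\|\theta - \theta'\right\|$, and then invoke Lemma \ref{FLIPSCHITZ} (which holds under Assumptions \ref{assumption0}--\ref{assumption2}) to bound $\left\|F(\theta) - F(\theta')\right\| \leqslant L\left\|\theta - \theta'\right\|$. Combining these two estimates yields $\langle F(\theta) - F(\theta'), \theta - \theta'\rangle \geqslant -L\left\|\theta - \theta'\right\|^{2}$ for all $\theta, \theta' \in \Theta$.

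Putting the two pieces together gives
\[
\left\langle F_{k}(\theta) - F_{k}(\theta'),\, \theta - \theta'\right\rangle \;\geqslant\; \left(\tfrac{1}{\beta} - L\right)\left\|\theta - \theta'\right\|^{2}, \qquad \forall\ \theta,\ \theta' \in \Theta,
\]
which is exactly the claimed $\left(\frac{1}{\beta} - L\right)$-strong monotonicity in the sense of Definition \ref{definitionofmonotone}~(iii). There is no genuine obstacle in this argument; the only point worth noting is that the conclusion is meaningful as strong monotonicity precisely when $\frac{1}{\beta} > L$, i.e.\ when $\beta \in \left(0, \frac{1}{L}\right)$ as assumed in Algorithms \ref{alg::exactGradient} and \ref{alg::pseudoGradient}, and it is this positivity of $\frac{1}{\beta}-L$ that later guarantees uniqueness of the solution of SVI$(F_{k},\Theta)$.
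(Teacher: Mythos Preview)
Your proof is correct and follows essentially the same approach as the paper: expand $F_{k}(\theta)-F_{k}(\theta')$ using the definition, then use the Lipschitz bound from Lemma~\ref{FLIPSCHITZ} (via Cauchy--Schwarz) to lower bound $\langle F(\theta)-F(\theta'),\theta-\theta'\rangle$ by $-L\|\theta-\theta'\|^{2}$. The only difference is cosmetic: you make the Cauchy--Schwarz step explicit where the paper simply writes the inequality, and you add the remark about $\beta\in(0,\tfrac{1}{L})$.
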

\begin{proof}
If Assumptions $\ref{assumption0}$-\ref{assumption2}  hold, by Lemma \ref{FLIPSCHITZ}, it follows that
\begin{align*}
\left\langle  F_{k}(\theta)-
F_{k}(\theta^{\prime}), \theta-\theta^{\prime}\right\rangle&=
\left\langle  F(\theta)+\frac{1}{\beta}(\theta-\theta_{k})-
F(\theta^{\prime})-\frac{1}{\beta}\left(\theta^{\prime}-\theta_{k} \right), \theta-\theta^{\prime}\right\rangle\\
&\geqslant \left\langle F(\theta)-F(\theta^{\prime}), \theta-\theta^{\prime}\right\rangle+\frac{1}{\beta}
\left\|\theta-\theta^{\prime}\right\|^{2}\\
&\geqslant -L\left\|\theta-\theta^{\prime}\right\|^{2}+\frac{1}{\beta}\left\|\theta-\theta^{\prime}\right\|^{2}\\
&\geqslant \left(\frac{1}{\beta}-L\right)
\left\|\theta-\theta^{\prime}\right\|^{2},  \forall \ \theta, \ \theta^{\prime} \in \Theta.
\end{align*}
By Definition \ref{definitionofmonotone} (iii), the lemma is true.
\end{proof}

\begin{lemma}\label{weaklyandstrongly}
If Assumptions $\ref{assumption0}$-\ref{assumption2} hold and denote $\hat{\theta}$ as the solution of
SVI$(F_{k},\Theta)$, then
\begin{equation}\nonumber
\sup _{\theta \in \Theta}\left\langle  F(\hat{\theta}), \hat{\theta}-\theta\right\rangle  \leqslant \frac{2D}{\beta} \left\|\theta_{k}-\hat{\theta}\right\|.
\end{equation}
In particular, if $\theta=(\theta_{i},\hat{\theta}_{-i})$, then
\begin{equation}\nonumber
\sup _{\theta_{i} \in \Theta_{i}}\left\langle  F_{i}(\hat{\theta}), \hat{\theta_{i}}-\theta_{i}\right\rangle  \leqslant \frac{2D}{\beta} \left\|\theta_{k}-\hat{\theta}\right\|.
\end{equation}
\end{lemma}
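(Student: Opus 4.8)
\emph{Proof proposal.} The plan is to combine the defining inequality of a solution of SVI$(F_k,\Theta)$ with the explicit decomposition $F_k(\theta)=F(\theta)+\frac{1}{\beta}(\theta-\theta_k)$; no boundedness property of $F$ is needed, only the geometry of $\Theta$.

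First, since $\hat{\theta}$ solves SVI$(F_k,\Theta)$ (which has a unique solution by Lemma \ref{stronglymonotone}), we have $\langle F_k(\hat{\theta}),\hat{\theta}-\theta\rangle\le 0$ for every $\theta\in\Theta$. Substituting $F_k(\hat{\theta})=F(\hat{\theta})+\frac{1}{\beta}(\hat{\theta}-\theta_k)$ and rearranging yields
\[
\langle F(\hat{\theta}),\hat{\theta}-\theta\rangle\le \frac{1}{\beta}\langle\theta_k-\hat{\theta},\hat{\theta}-\theta\rangle,\qquad\forall\,\theta\in\Theta.
\]
Next I would apply Cauchy-Schwarz on the right-hand side to obtain $\langle F(\hat{\theta}),\hat{\theta}-\theta\rangle\le\frac{1}{\beta}\|\theta_k-\hat{\theta}\|\,\|\hat{\theta}-\theta\|$, and then bound $\|\hat{\theta}-\theta\|$ uniformly by the diameter of $\Theta$: by Assumption \ref{assumption1} (each $\Theta_i$ compact with the stated norm bound and $D=\sqrt{\sum_i D_i^2}$) one has $\sup_{\theta,\theta'\in\Theta}\|\theta-\theta'\|\le 2D$, so $\|\hat{\theta}-\theta\|\le 2D$. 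Taking the supremum over $\theta\in\Theta$ then gives $\sup_{\theta\in\Theta}\langle F(\hat{\theta}),\hat{\theta}-\theta\rangle\le\frac{2D}{\beta}\|\theta_k-\hat{\theta}\|$, which is the first assertion.

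For the second assertion I would restrict the test vectors to the block form $\theta=(\theta_i,\hat{\theta}_{-i})$ with $\theta_i\in\Theta_i$. For such $\theta$ one has $\hat{\theta}-\theta=(\hat{\theta}_i-\theta_i,\mathbf{0})$, hence $\langle F(\hat{\theta}),\hat{\theta}-\theta\rangle=\langle F_i(\hat{\theta}),\hat{\theta}_i-\theta_i\rangle$ and $\|\hat{\theta}-\theta\|=\|\hat{\theta}_i-\theta_i\|\le 2D_i\le 2D$; plugging this into the chain of inequalities above and taking the supremum over $\theta_i\in\Theta_i$ gives $\sup_{\theta_i\in\Theta_i}\langle F_i(\hat{\theta}),\hat{\theta}_i-\theta_i\rangle\le\frac{2D}{\beta}\|\theta_k-\hat{\theta}\|$. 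There is no substantive obstacle in this lemma: it is essentially a one-line consequence of the variational-inequality characterization together with Cauchy-Schwarz, and the only point requiring a little care is matching the geometric constant $2D$ to the normalization of $D$ fixed in Assumption \ref{assumption1}.
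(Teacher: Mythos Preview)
Your proposal is correct and essentially identical to the paper's proof: both use the SVI defining inequality for $\hat{\theta}$, substitute $F_k=F+\frac{1}{\beta}(\cdot-\theta_k)$, rearrange, apply Cauchy--Schwarz, and bound $\|\hat{\theta}-\theta\|\le 2D$ via Assumption~\ref{assumption1}. Your treatment of the block-restricted second inequality is slightly more explicit than the paper's, which simply states it as a specialization.
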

\begin{proof}
Noting that  $\hat{\theta}$ is the solution of
SVI$(F_{k},\Theta)$, we have
$
\sup \limits_{\theta \in \Theta}\big\langle  F_{k}(\hat{\theta}), \hat{\theta}-\theta\big\rangle=\sup \limits _{\theta \in \Theta} \big\langle  F_{k}(\hat{\theta})+\frac{1}{\beta}(\hat{\theta}-\theta_{k}), \hat{\theta}-\theta\big\rangle \leqslant 0
$, which together with Assumption \ref{assumption1} gives
$
\sup \limits_{\theta \in \Theta}\big\langle  F(\hat{\theta}), \hat{\theta}-\theta\big\rangle  \leqslant \frac{1}{\beta}
\big\langle  \hat{\theta}-\theta_{k}, \hat{\theta}-\theta\big\rangle \leqslant \frac{2D}{\beta} \left\|\theta_{k}-\hat{\theta}\right\|.
$
\end{proof}

\begin{lemma}\label{performancedifference}\citep{Runyu}
If Assumption $\ref{assumption0}$ hold, then for any $\theta=\left(\theta_{i}, \theta_{-i}\right)$ and  $\hat{\theta}^{i}=\left(\theta_{i}^{\prime}, \theta_{-i}\right)
 \in \Theta$, we have
$J_{i}^{(\theta_{i}^{\prime}, \theta_{-i})}(t) -J_{i}^{(\theta_{i}, \theta_{-i})}(t)
=\frac{1}{1-\gamma} \int_{\mathcal{S}\times \mathcal{A}_{i}} d_{\rho_{t}}^{\hat{\theta}^{i}}(s^{\prime}) \pi_{\theta_{i}^{\prime}}\left(a_{i} \mid s^{\prime}\right) \\
\times \Big(\int_{\mathcal{A}_{-i}} \pi_{\theta_{-i}}\left(a_{-i} \mid s^{\prime}\right) A_{i}^{\theta}\left(s^{\prime}, a,t+1\right) \mathrm{d} a_{-i}\Big)\mathrm{d} a_{i} \mathrm{d} s^{\prime}$, $\forall \ i \in \mathcal{N}$.
\end{lemma}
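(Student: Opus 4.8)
The plan is to prove Lemma \ref{performancedifference} by a telescoping (value-shaping) argument on the state-value functions, followed by averaging over the initial distribution and rewriting the resulting geometric sum in terms of the discounted state occupancy measure. Throughout, fix the agent $i$ and write $\theta'=\hat{\theta}^{i}=(\theta_{i}^{\prime},\theta_{-i})$, noting that $\theta'$ and $\theta$ differ only in the $i$-th block, so that $\pi_{\theta'}(a\mid s')=\pi_{\theta_{i}^{\prime}}(a_{i}\mid s')\,\pi_{\theta_{-i}}(a_{-i}\mid s')$.

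First I would consider the trajectory $(s(l),a(l))_{l\geqslant t}$ generated by $\pi_{\theta'}$ started from $s(t)=s$, and use the algebraic telescoping identity $\sum_{l=t}^{\infty}\gamma^{l-t}\big(r_{i}(s(l),a(l))+\gamma V_{i}^{\theta}(s(l+1),l+1)-V_{i}^{\theta}(s(l),l)\big)=\sum_{l=t}^{\infty}\gamma^{l-t}r_{i}(s(l),a(l))-V_{i}^{\theta}(s(t),t)$, which is valid because $\gamma\in(0,1)$ and, by Assumption \ref{assumption1}, $|r_{i}|\leqslant U_{R}$ forces $|V_{i}^{\theta}|,|Q_{i}^{\theta}|\leqslant U_{R}/(1-\gamma)$, so the tail term $\gamma^{k}V_{i}^{\theta}(s(k),k)\to 0$. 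Taking the conditional expectation $\mathbb{E}[\,\cdot\mid s(t)=s,\theta']$ of both sides yields $V_{i}^{\theta'}(s,t)-V_{i}^{\theta}(s,t)=\sum_{l=t}^{\infty}\gamma^{l-t}\,\mathbb{E}\big[r_{i}(s(l),a(l))+\gamma V_{i}^{\theta}(s(l+1),l+1)-V_{i}^{\theta}(s(l),l)\mid s(t)=s,\theta'\big]$. The crucial observation is that the environment transition $\rho_{l,l+1}(\cdot\mid s(l),a(l))$ does not depend on the policy; hence conditioning further on $(s(l),a(l))=(s',a)$ and invoking (\ref{qvaluefunction})--(\ref{valuerelation1}), each summand equals, in expectation, $Q_{i}^{\theta}(s',a,l)-V_{i}^{\theta}(s',l)=A_{i}^{\theta}(s',a,t+1)$, where I use the remark after (\ref{valuerelation1}) that $V_{i}^{\theta}$ and $Q_{i}^{\theta}$ (hence $A_{i}^{\theta}$) depend on the state but not the initial time, so the time index $l$ may be replaced by $t+1$ uniformly. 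This gives $V_{i}^{\theta'}(s,t)-V_{i}^{\theta}(s,t)=\sum_{l=t}^{\infty}\gamma^{l-t}\int_{\mathcal{S}\times\mathcal{A}}\rho_{t,l}^{\theta'}(s'\mid s)\,\pi_{\theta_{i}^{\prime}}(a_{i}\mid s')\,\pi_{\theta_{-i}}(a_{-i}\mid s')\,A_{i}^{\theta}(s',a,t+1)\,\mathrm{d}a\,\mathrm{d}s'$.

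Next I would average over $s(t)\sim\rho_{t}$. By Assumption \ref{assumption0}, $\rho_{t}(s\mid\theta')=\rho_{t}(s\mid\theta)=\rho_{t}(s)$, so $J_{i}^{\theta'}(t)-J_{i}^{\theta}(t)=\int_{\mathcal{S}}\rho_{t}(s)\big(V_{i}^{\theta'}(s,t)-V_{i}^{\theta}(s,t)\big)\,\mathrm{d}s$. Substituting the previous display and interchanging the (absolutely convergent, since $|A_{i}^{\theta}|\leqslant 2U_{R}/(1-\gamma)$ and $\sum\gamma^{l-t}<\infty$) sum and integrals by Fubini--Tonelli, the $s$-integral produces exactly $(1-\gamma)^{-1}d_{\rho_{t}}^{\theta'}(s')$ according to the definition of the discounted state distribution in Proposition \ref{policygradient}, namely $d_{\rho_{t}}^{\theta'}(s')=(1-\gamma)\int_{\mathcal{S}}\sum_{l=t}^{\infty}\gamma^{l-t}\rho_{t,l}^{\theta'}(s'\mid s)\rho_{t}(s)\,\mathrm{d}s$. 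Finally, writing $\mathrm{d}a=\mathrm{d}a_{i}\,\mathrm{d}a_{-i}$ and pulling $\pi_{\theta_{i}^{\prime}}(a_{i}\mid s')$ out of the inner integral recovers the claimed formula $J_{i}^{\theta'}(t)-J_{i}^{\theta}(t)=\tfrac{1}{1-\gamma}\int_{\mathcal{S}\times\mathcal{A}_{i}}d_{\rho_{t}}^{\hat{\theta}^{i}}(s')\pi_{\theta_{i}^{\prime}}(a_{i}\mid s')\big(\int_{\mathcal{A}_{-i}}\pi_{\theta_{-i}}(a_{-i}\mid s')A_{i}^{\theta}(s',a,t+1)\,\mathrm{d}a_{-i}\big)\mathrm{d}a_{i}\,\mathrm{d}s'$.

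The main obstacle, beyond careful index bookkeeping, is justifying the two interchanges — the telescoping sum under the expectation and the sum/integral swap in the last step — together with the tail-vanishing argument; all three rest on the uniform boundedness of the immediate rewards, and hence of $V_{i}^{\theta}$, $Q_{i}^{\theta}$, $A_{i}^{\theta}$, so that dominated convergence and Tonelli apply. One must also be scrupulous about the policy-independence of $\rho_{l,l+1}$, which is what lets the expectation under $\pi_{\theta'}$ of $r_{i}+\gamma V_{i}^{\theta}(s(l+1),\cdot)$, conditioned on the current state--action pair, equal $Q_{i}^{\theta}$ rather than a $\theta'$-dependent object; this is the step that makes the whole telescoping collapse to the advantage of $\theta$.
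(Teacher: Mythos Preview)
Your proof is correct and follows the standard telescoping (Kakade--Langford) argument for the performance difference lemma. Note, however, that the paper does not actually give its own proof of this lemma: it is stated in Appendix~A with the citation \citep{Runyu} and no accompanying derivation, so there is no in-paper proof to compare against. Your argument is precisely the canonical one that underlies the cited result.

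One small remark: the lemma as stated assumes only Assumption~\ref{assumption0}, whereas you invoke Assumption~\ref{assumption1} (boundedness of $r_i$) to justify the tail-vanishing of $\gamma^{k}V_{i}^{\theta}(s(k),k)$ and the Fubini--Tonelli interchange. This is not a flaw in your reasoning---some integrability hypothesis is needed for $V_{i}^{\theta}$, $Q_{i}^{\theta}$ to be well-defined and for the telescoping sum to converge absolutely---but it is worth being explicit that you are using slightly more than the lemma literally assumes. Since the paper works under Assumptions~\ref{assumption0}--\ref{assumption2} throughout and the lemma is only invoked inside the proof of Lemma~\ref{gradientd} (where all of these are in force), this discrepancy is harmless in context.
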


\begin{lemma}\citep{Y}\label{propertyofprox}
Define proximal mapping $P_{\Theta_{i}}(x-y): \mathbb{R}^{d_{i}} \times \mathbb{R}^{d_{i}} \rightarrow \Theta_{i}$ as
$$
P_{\Theta_{i}}(x-y)=\underset{z \in \Theta_{i}}{\operatorname{argmin}}\left\{2\langle y, z\rangle+\left\|x-z\right\|^{2}\right\}, \quad \forall \ x ,\\\ y \in \mathbb{R}^{d_{i}}.
$$
\begin{itemize}

\item[(i)] For any $x$, $y$, $z \in \mathbb{R}^{d_{i}}$,
    $$
\frac{\left\|P_{\Theta_{i}}(x-y)- z\right\|^{2}}{2}
 \leqslant \frac{\left\|x- z\right\|^{2}}{2}+\left\langle y, z-P_{\Theta_{i}}(x-y)\right\rangle-\frac{\left\|x-P_{\Theta_{i}}(x-y))\right\|^{2}}{2};
    $$

\item[(ii)] For any $x_{1}$, $z \in \mathbb{R}^{d_{i}}$ and $y_{1}$, $y_{2} \in \mathbb{R}^{d_{i}}$,
 \begin{equation}\nonumber
\begin{aligned}
  & \frac{\left\|P_{\Theta_{i}}(x_{1}-y_{2})- z\right\|^{2}}{2}\\
\leqslant &  \frac{\left\|x_{1}- z\right\|^{2}}{2}-\left\langle y_{2}, P_{\Theta_{i}}(x_{1}-y_{1})-z)\right\rangle+
\left\langle y_{2}-y_{1}, P_{\Theta_{i}}(x_{1}-y_{1})-P_{\Theta_{i}}(x_{1}-y_{2})\right\rangle\\
&-\frac{\left\|P_{\Theta_{i}}(x_{1}-y_{1})-P_{\Theta_{i}}(x_{1}-y_{2}))\right\|^{2}}{2}-\frac{\left\|P_{\Theta_{i}}(x_{1}-y_{1})-x_{1}\right\|^{2}}{2}\\
\leqslant &\frac{\left\|x_{1}- z\right\|^{2}}{2}-\left\langle y_{2}, P_{\Theta_{i}}(x_{1}-y_{1})-z)\right\rangle+\frac{\left\|y_{2}-y_{1}\right\|^{2}}{2}-\frac{\left\|P_{\Theta_{i}}(x_{1}-y_{1})- x_{1}\right\|^{2}}{2};
\end{aligned}
\end{equation}
\item[(iii)] Non-expansion  property:
 \begin{equation}\nonumber
 \left\|P_{\Theta_{i}}(x_{1}-y_{1})-P_{\Theta_{i}}(x_{1}-y_{2})\right\| \leqslant \left\|y_{1}-y_{2}\right\|.
\end{equation}
\end{itemize}
\end{lemma}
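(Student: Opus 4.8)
The plan is to derive all three statements from the single variational characterization of the minimizer defining the proximal mapping, so I first record that characterization. Write $p = P_{\Theta_i}(x-y)$. Since $\Theta_i$ is convex and the objective $z \mapsto 2\langle y, z\rangle + \|x-z\|^2$ is differentiable and convex, its minimizer $p$ over $\Theta_i$ satisfies the first-order optimality condition $\langle y + p - x,\, z - p\rangle \geqslant 0$ for all $z \in \Theta_i$, equivalently $\langle x - p,\, z - p\rangle \leqslant \langle y,\, z - p\rangle$. This is the only analytic input needed; the remaining work is bookkeeping with the three-point identity $\frac{\|x-z\|^2}{2} - \frac{\|p-z\|^2}{2} - \frac{\|x-p\|^2}{2} = \langle x-p,\, p-z\rangle$, Young's inequality, and Cauchy--Schwarz.

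For (i), I would substitute the three-point identity to see that the claimed inequality is equivalent to $0 \leqslant \langle y,\, z-p\rangle - \langle x-p,\, z-p\rangle = \langle y + p - x,\, z-p\rangle$, which is exactly the optimality condition; this settles (i) immediately. For (ii), set $p_1 = P_{\Theta_i}(x_1 - y_1)$ and $p_2 = P_{\Theta_i}(x_1 - y_2)$ and apply (i) with center $x_1$, gradient $y_2$, minimizer $p_2$, obtaining $\frac{\|p_2 - z\|^2}{2} \leqslant \frac{\|x_1-z\|^2}{2} + \langle y_2,\, z - p_2\rangle - \frac{\|x_1 - p_2\|^2}{2}$. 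I would then rewrite $\langle y_2,\, z-p_2\rangle = -\langle y_2,\, p_1 - z\rangle + \langle y_2,\, p_1 - p_2\rangle$ and expand $\|x_1 - p_2\|^2 = \|x_1 - p_1\|^2 + 2\langle x_1 - p_1,\, p_1 - p_2\rangle + \|p_1 - p_2\|^2$. After collecting terms, the only discrepancy between the result and the target is the difference $\langle y_2,\, p_1 - p_2\rangle - \langle x_1 - p_1,\, p_1 - p_2\rangle$ versus $\langle y_2 - y_1,\, p_1 - p_2\rangle$, and closing this gap requires precisely $\langle y_1 + p_1 - x_1,\, p_2 - p_1\rangle \geqslant 0$, which is the optimality condition for $p_1$ tested at the feasible point $z = p_2 \in \Theta_i$. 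This yields the first inequality of (ii); the second follows from Young's inequality $\langle y_2 - y_1,\, p_1 - p_2\rangle - \frac{\|p_1 - p_2\|^2}{2} \leqslant \frac{\|y_2 - y_1\|^2}{2}$, i.e. from $\|(y_2 - y_1) - (p_1 - p_2)\|^2 \geqslant 0$.

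For the non-expansion property (iii), I would write the optimality conditions for both $p_1$ and $p_2$, test the first at $z = p_2$ and the second at $z = p_1$, and add them. The $x_1$ terms cancel, leaving $\|p_1 - p_2\|^2 \leqslant \langle y_2 - y_1,\, p_1 - p_2\rangle$, after which Cauchy--Schwarz gives $\|p_1 - p_2\| \leqslant \|y_1 - y_2\|$, with the case $p_1 = p_2$ being trivial. I expect the main obstacle to be purely the bookkeeping in (ii): tracking the several inner products and squared terms so that the surviving cross term collapses exactly onto the $p_1$ optimality condition. Parts (i) and (iii) are short, and the whole lemma rests on correctly identifying and applying that one variational inequality, so no genuinely new estimate is required.
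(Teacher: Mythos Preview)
Your argument is correct and complete. The paper does not actually prove this lemma; it simply cites it from \citep{Y} and states the three properties without proof, so there is no ``paper's own proof'' to compare against. Your derivation from the first-order optimality condition $\langle y + p - x,\, z - p\rangle \geqslant 0$ is the standard one and exactly what lies behind the cited result.

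One small remark: as you implicitly recognize, the optimality inequality you rely on holds only for $z \in \Theta_i$, not for arbitrary $z \in \mathbb{R}^{d_i}$ as the lemma's statement literally reads. This is a harmless looseness in the paper's phrasing rather than a gap in your argument, since every application of the lemma in the paper takes $z$ (or $\theta$, $\overline{\theta_k}$, etc.) inside the feasible set.
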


\begin{lemma}\label{contorolstrong}
\citep{Lin} If $G$ is a $\sqrt{2L^{2}+\frac{2}{\beta^{2}}}$-Lipschitz continous mapping and strong-\\ly monotone and let $w_{*}$ be the solution of SVI$(G, K)$, then for any $\widehat{z} \in K$, by constructing  $\overline{z}=\operatorname{P}_{K}(\widehat{z}-\widetilde{\eta} G(\widehat{z}))$, where $\widetilde{\eta}=\frac{1}{\sqrt{2} \sqrt{2L^{2}+\frac{2}{\beta^{2}}}}$, we have $\sup \limits_{z \in K} G(\overline{z})^{\top}(\overline{z}-z) \leqslant 2D \sqrt{2L^{2}+\frac{2}{\beta^{2}}}(2+\sqrt{2})
\left\|\widehat{z}-w_{*}\right\|$, where $\sup \limits_{z, z^{\prime} \in K}\left\|z-z^{\prime}\right\| \leqslant 2D$ and $L$ and $\beta$ are constants.
\end{lemma}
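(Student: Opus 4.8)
The statement is the standard ``one projected extra-gradient step from an approximate solution'' estimate, so the plan is to bound the primal gap $\sup_{z\in K}\langle G(\overline{z}),\overline{z}-z\rangle$ using three ingredients already available in the excerpt: the fixed-point characterization $w_{*}=\operatorname{P}_{K}(w_{*}-\widetilde{\eta} G(w_{*}))$ of the SVI$(G,K)$ solution, the non-expansiveness of the proximal map together with the three-point inequality of Lemma \ref{propertyofprox}(i), and the Lipschitz and diameter bounds. Throughout write $L'=\sqrt{2L^{2}+\tfrac{2}{\beta^{2}}}$, so that $\widetilde{\eta}L'=\tfrac{1}{\sqrt{2}}$ and $\widetilde{\eta}^{2}(L')^{2}=\tfrac{1}{2}$.

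First I would establish a stability estimate for the step $\widehat{z}\mapsto\overline{z}$. Since $\overline{z}=\operatorname{P}_{K}(\widehat{z}-\widetilde{\eta} G(\widehat{z}))$ and $w_{*}=\operatorname{P}_{K}(w_{*}-\widetilde{\eta} G(w_{*}))$, non-expansiveness and expansion of the square give
\begin{equation}\nonumber
\|\overline{z}-w_{*}\|^{2}\leqslant\|(\widehat{z}-w_{*})-\widetilde{\eta}(G(\widehat{z})-G(w_{*}))\|^{2}=\|\widehat{z}-w_{*}\|^{2}-2\widetilde{\eta}\langle G(\widehat{z})-G(w_{*}),\widehat{z}-w_{*}\rangle+\widetilde{\eta}^{2}\|G(\widehat{z})-G(w_{*})\|^{2}.
\end{equation}
By monotonicity of $G$ the cross term is nonnegative, and by $L'$-Lipschitz continuity $\widetilde{\eta}^{2}\|G(\widehat{z})-G(w_{*})\|^{2}\leqslant\tfrac{1}{2}\|\widehat{z}-w_{*}\|^{2}$, so $\|\overline{z}-w_{*}\|\leqslant\sqrt{3/2}\,\|\widehat{z}-w_{*}\|$ and hence $\|\overline{z}-\widehat{z}\|$ is bounded by a fixed constant times $\|\widehat{z}-w_{*}\|$.

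Next I would decompose the gap. Fix $z\in K$ and write $\langle G(\overline{z}),\overline{z}-z\rangle=\langle G(\overline{z})-G(\widehat{z}),\overline{z}-z\rangle+\langle G(\widehat{z}),\overline{z}-z\rangle$. The first summand is at most $\|G(\overline{z})-G(\widehat{z})\|\,\|\overline{z}-z\|\leqslant 2DL'\|\overline{z}-\widehat{z}\|$ by Lipschitz continuity of $G$ and $\sup_{z,z'\in K}\|z-z'\|\leqslant 2D$. For the second summand I would apply Lemma \ref{propertyofprox}(i) with $x=\widehat{z}$, $y=\widetilde{\eta} G(\widehat{z})$ (so that the proximal point is $\overline{z}$) and free point $z$; rearranging and using the identity $\|\widehat{z}-z\|^{2}-\|\overline{z}-z\|^{2}-\|\widehat{z}-\overline{z}\|^{2}=2\langle\widehat{z}-\overline{z},\overline{z}-z\rangle$ gives $\langle\widetilde{\eta} G(\widehat{z}),\overline{z}-z\rangle\leqslant\langle\widehat{z}-\overline{z},\overline{z}-z\rangle\leqslant 2D\|\widehat{z}-\overline{z}\|$, i.e.\ $\langle G(\widehat{z}),\overline{z}-z\rangle\leqslant 2\sqrt{2}\,DL'\|\widehat{z}-\overline{z}\|$. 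Summing the two bounds, taking the supremum over $z\in K$, and substituting the stability estimate yields a bound of the form $2DL'\cdot(\text{const})\cdot\|\widehat{z}-w_{*}\|$.

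The main obstacle is producing the precise constant $(2+\sqrt{2})$ rather than the larger value the naive chaining above gives (combining $\|\overline{z}-\widehat{z}\|\leqslant(1+\sqrt{3/2})\|\widehat{z}-w_{*}\|$ with the factor $1+\sqrt{2}$ from the two summands overshoots). To get the sharp constant I would not discard the negative quadratic terms $-\tfrac{1}{2}\|\widehat{z}-\overline{z}\|^{2}$ from the three-point inequality and $-2\widetilde{\eta}\langle G(\widehat{z})-G(w_{*}),\widehat{z}-w_{*}\rangle\leqslant 0$ from the stability step, and I would apply the three-point inequality simultaneously at the free point $z$ and at $w_{*}$, combining the two instances before bounding; this is the bookkeeping of \cite{Lin}, and it delivers the stated bound $\sup_{z\in K}\langle G(\overline{z}),\overline{z}-z\rangle\leqslant 2D\sqrt{2L^{2}+\tfrac{2}{\beta^{2}}}\,(2+\sqrt{2})\,\|\widehat{z}-w_{*}\|$.
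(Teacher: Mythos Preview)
The paper does not provide its own proof of this lemma; it is stated in Appendix~A as a supplementary result cited from \cite{Lin}, with no argument given. Consequently there is nothing in the paper to compare your proposal against.

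On the substance of your proposal: the skeleton is correct and is the standard way to prove this estimate. The fixed-point characterization $w_{*}=\operatorname{P}_{K}(w_{*}-\widetilde{\eta}G(w_{*}))$, the non-expansiveness bound giving $\|\overline{z}-w_{*}\|\leqslant\sqrt{3/2}\,\|\widehat{z}-w_{*}\|$, the decomposition $\langle G(\overline{z}),\overline{z}-z\rangle=\langle G(\overline{z})-G(\widehat{z}),\overline{z}-z\rangle+\langle G(\widehat{z}),\overline{z}-z\rangle$, and the use of Lemma~\ref{propertyofprox}(i) to control the second summand are all valid. Your naive chaining produces the constant $(1+\sqrt{2})(1+\sqrt{3/2})\approx 5.37$ in place of $(2+\sqrt{2})\approx 3.41$; you correctly identify this overshoot.

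The only genuine gap is that you do not actually carry out the sharper bookkeeping you promise in the last paragraph --- you simply assert that not discarding the negative quadratics and applying the three-point inequality at both $z$ and $w_{*}$ will yield $(2+\sqrt{2})$, and then cite \cite{Lin} for the details. That is not a proof of the stated constant; it is a pointer to one. If the goal is only to establish a bound of the form $C\cdot 2DL'\|\widehat{z}-w_{*}\|$ for \emph{some} absolute constant $C$ (which is all that the downstream convergence results in the paper actually need), your argument already delivers this with $C=(1+\sqrt{2})(1+\sqrt{3/2})$. If the goal is to reproduce the exact constant $(2+\sqrt{2})$, you would need to execute the sharper combination explicitly rather than defer to the reference.
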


\begin{lemma}\label{concentration inequality}
\citep{Iosif}  (Concentration inequality) If $X_{1}, X_{2}, ..., X_{N} \in \mathbb{R}^{d}$
denote a vector-valued martingale difference sequence satisfying $\big\|X_{n}\big\| \leqslant V$ and $\mathbb{E}\big[X_{n} \big| X_{1}, ..., X_{n-1}\big]\\=\mathbf{0}$, $\forall \ n \in \{1,\ldots,N\}$, then for any $\delta \in(0,1]$, we have
$$
P\left\{\left\|\sum_{n=1}^{N} X_{n}\right\|^{2}>2 \log (2 / \delta) V^{2} N\right\} \leqslant \delta.
$$
\end{lemma}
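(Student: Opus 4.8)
The plan is to prove this concentration inequality as the dimension-free, Hilbert-space martingale analogue of the Azuma--Hoeffding inequality, via a Chernoff-type argument applied to an exponential supermartingale built from $\cosh$. Write $S_{n}=\sum_{m=1}^{n}X_{m}$ and let $\mathcal{F}_{n}=\sigma(X_{1},\dots,X_{n})$, so that $(S_{n})$ is an $(\mathcal{F}_{n})$-martingale with increments bounded by $V$. The first step is to establish, for each fixed $\lambda>0$, the conditional bound
$$
\mathbb{E}\big[\cosh(\lambda\|S_{n}\|)\,\big|\,\mathcal{F}_{n-1}\big]\;\leqslant\;\cosh(\lambda\|S_{n-1}\|)\,\exp\!\Big(\tfrac{\lambda^{2}V^{2}}{2}\Big),
$$
equivalently that $Z_{n}:=\cosh(\lambda\|S_{n}\|)\exp(-\tfrac{\lambda^{2}V^{2}}{2}n)$ is a supermartingale. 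The key analytic input, and the only place the Hilbert-space structure is used so that no factor of $d$ enters, is a pointwise inequality upper bounding $\mathbb{E}_{x}[\cosh(\lambda\|s+x\|)]$ by $\cosh(\lambda\|s\|)\exp(\tfrac{\lambda^{2}V^{2}}{2})$ for any fixed $s$ and any mean-zero random vector $x$ with $\|x\|\leqslant V$; one obtains it by writing $\|s+x\|^{2}=\|s\|^{2}+2\langle s,x\rangle+\|x\|^{2}$, using that $v\mapsto\cosh(\lambda\sqrt{v})$ is convex and increasing on $[0,\infty)$ (its power series in $v$ has nonnegative coefficients) together with $\mathbb{E}[x]=\mathbf{0}$ to control the cross term $2\langle s,x\rangle$, and finally the scalar estimate $\cosh(u)\leqslant e^{u^{2}/2}$.

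Granting the supermartingale property, iterating from $n=N$ down to $n=0$ gives $\mathbb{E}[\cosh(\lambda\|S_{N}\|)]\leqslant e^{\lambda^{2}V^{2}N/2}$. A Markov/Chernoff step then yields, for any $r>0$ and using $\cosh(\lambda r)\geqslant\tfrac12 e^{\lambda r}$,
$$
P\{\|S_{N}\|\geqslant r\}=P\{\cosh(\lambda\|S_{N}\|)\geqslant\cosh(\lambda r)\}\leqslant\frac{\mathbb{E}[\cosh(\lambda\|S_{N}\|)]}{\cosh(\lambda r)}\leqslant 2\exp\!\Big(\tfrac{\lambda^{2}V^{2}N}{2}-\lambda r\Big).
$$
Minimizing the exponent over $\lambda$ (the choice $\lambda=r/(V^{2}N)$) gives $P\{\|S_{N}\|\geqslant r\}\leqslant 2\exp(-r^{2}/(2V^{2}N))$, and taking $r$ such that the right-hand side equals $\delta$, that is $r^{2}=2V^{2}N\log(2/\delta)$, delivers $P\{\|\sum_{n=1}^{N}X_{n}\|^{2}>2\log(2/\delta)V^{2}N\}\leqslant\delta$, as claimed.

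I expect the main obstacle to be the pointwise conditional inequality for $\cosh(\lambda\|s+x\|)$: the cross term $2\langle s,x\rangle$ must be handled through a convexity/symmetrization argument rather than a crude bound, and one must check that the exponent is exactly $\lambda^{2}V^{2}/2$ with no residual dependence on the dimension $d$ (a coordinatewise application of scalar Azuma would instead produce an unwanted $\sqrt{d}$). A shorter route, if one is content to cite it, is to invoke Pinelis' martingale inequality for $(2,1)$-smooth Banach spaces, Hilbert spaces being such \citep{Iosif}: it gives the two-sided tail bound $2\exp(-r^{2}/(2\sum_{m}\|X_{m}\|_{\infty}^{2}))$ directly, and since $\sum_{m=1}^{N}\|X_{m}\|_{\infty}^{2}\leqslant NV^{2}$ the stated inequality follows immediately by the same choice of $r$.
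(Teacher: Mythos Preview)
The paper does not prove this lemma at all: it is stated in Appendix~A with a citation to \citep{Iosif} (Pinelis) and used as a black box. Your proposal therefore goes beyond the paper by actually sketching the proof, and the route you outline---the $\cosh$-supermartingale argument leading to the dimension-free Azuma--Hoeffding bound in Hilbert space---is exactly Pinelis' original method, so there is nothing to compare.

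Your sketch is structurally correct, and you rightly flag that the only nontrivial step is the conditional inequality
\[
\mathbb{E}\big[\cosh(\lambda\|s+x\|)\,\big|\,\mathcal{F}_{n-1}\big]\leqslant\cosh(\lambda\|s\|)\exp\!\big(\tfrac{\lambda^{2}V^{2}}{2}\big).
\]
Your indication of how to obtain it (convexity and monotonicity of $v\mapsto\cosh(\lambda\sqrt{v})$, the parallelogram identity, and $\cosh(u)\leqslant e^{u^{2}/2}$) is the right toolkit but is not yet a proof: one still needs a symmetrization step or the $(2,1)$-smoothness inequality $\|s+x\|^{2}+\|s-x\|^{2}\leqslant 2\|s\|^{2}+2\|x\|^{2}$ to remove the cross term cleanly without picking up a dimensional factor. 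Since you explicitly offer the alternative of citing Pinelis' theorem directly, and that is precisely what the paper does, the proposal is acceptable as written.
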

\begin{lemma} \label{lastlemma}
\citep{Chung} If $\{u_{k}, k\geqslant 1\}$ is a real sequence satisfying
$$
u_{k+1} \leqslant \left(1-\frac{c}{k^{s}}\right) u_{k}+\frac{d}{k^{t}}, \quad 0<s<1, \ s<t, \ c>0, \ d>0,
$$
then
$$
u_{k} \leqslant \frac{d}{c} \frac{1}{k^{t-s}}+\phi(k),
$$
where $\phi(k)=o\left(\frac{1}{k^{t-s}}\right)$.
\end{lemma}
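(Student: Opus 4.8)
This is Chung's lemma \citep{Chung}, so the plan is to reprove it by unrolling the recursion. Since $s\in(0,1)$, the factor $1-c/k^{s}$ lies in $(0,1)$ for every $k$ beyond some $k_{0}$; fixing such a $k_{0}$ and iterating the hypothesis (the partial products $\prod_{i=j+1}^{k}(1-c/i^{s})$ being nonnegative) gives
\begin{equation}\nonumber
u_{k+1}\ \le\ u_{k_{0}}\prod_{j=k_{0}}^{k}\Bigl(1-\frac{c}{j^{s}}\Bigr)\ +\ d\sum_{j=k_{0}}^{k}\frac{1}{j^{t}}\prod_{i=j+1}^{k}\Bigl(1-\frac{c}{i^{s}}\Bigr),
\end{equation}
and I would estimate the two pieces separately. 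Put $\Psi(x):=x^{1-s}/(1-s)$, so $\Psi'(x)=x^{-s}$ and $\Psi(k)\to\infty$. Using $1-x\le e^{-x}$ together with $\sum_{i=j+1}^{k}i^{-s}\ge\Psi(k+1)-\Psi(j+1)$ gives $\prod_{i=j+1}^{k}(1-c/i^{s})\le e^{-c(\Psi(k+1)-\Psi(j+1))}$; in particular the homogeneous term is at most $|u_{k_{0}}|\,e^{-c(\Psi(k+1)-\Psi(k_{0}))}$, which decays at a super-polynomial rate and is therefore $o(k^{-(t-s)})$ irrespective of the sign and size of $u_{k_{0}}$.

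The substance is the particular term $d\,e^{-c\Psi(k+1)}\sum_{j=k_{0}}^{k}j^{-t}e^{c\Psi(j+1)}$. The summand $x\mapsto x^{-t}e^{c\Psi(x)}$ is eventually increasing (its logarithmic derivative $cx^{-s}-t/x$ is positive once $x>(t/c)^{1/(1-s)}$) and grows faster than any power, so the sum is concentrated near its upper end and may be compared to $\int_{k_{0}}^{k+2}x^{-t}e^{c\Psi(x)}\,dx$ up to a $(1+o(1))$ factor and a super-polynomially small remainder --- the negligibility of the small-index terms is precisely what lets one replace $e^{c\Psi(j+1)}$ by $e^{c\Psi(j)}$ over the range that matters. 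Since $x^{-t}e^{c\Psi(x)}=\tfrac{1}{c}x^{s-t}\tfrac{d}{dx}e^{c\Psi(x)}$, integration by parts extracts the boundary value $\tfrac{1}{c}(k+2)^{s-t}e^{c\Psi(k+2)}$ and leaves a remainder that, after one more such step, is smaller by a factor $O(x^{s-1})=o(1)$ --- one borrows a power $x^{-s}$ from the exponential but loses only $x^{-1}$ from differentiating the polynomial, a net gain of $x^{s-1}$ because $s<1$ --- while the lower-endpoint contributions are fixed constants that the prefactor $e^{-c\Psi(k+1)}$ annihilates. Cancelling the exponentials via $\Psi(k+2)-\Psi(k+1)\to0$ leaves the particular term equal to $\tfrac{d}{c}k^{-(t-s)}(1+o(1))$. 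Adding the two estimates and re-indexing $k\mapsto k-1$ (so that $(k-1)^{-(t-s)}=k^{-(t-s)}(1+o(1))$) yields $u_{k}\le\frac{d}{c}k^{-(t-s)}+\phi(k)$ with $\phi(k)=o(k^{-(t-s)})$, which is the claim.

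I expect the main obstacle to be the error bookkeeping in the particular term: for the constant $d/c$ to come out sharp one must verify that the sum-to-integral passage, the repeated integration-by-parts remainders, and the endpoint and re-indexing terms are all $o(k^{-(t-s)})$ and not merely $O(k^{-(t-s)})$. A convenient consistency check is the boundary case $t=s$ allowed by the hypotheses, where $k^{-(t-s)}\equiv1$ and the claim reduces to $\limsup_{k}u_{k}\le d/c$: there the key identity collapses to $\int x^{-s}e^{c\Psi(x)}\,dx=\tfrac{1}{c}e^{c\Psi(x)}+\mathrm{const}$, which reproduces $d/c$ exactly. An entirely elementary alternative is a two-stage induction --- first establish a crude bound $u_{k}\le A\,k^{-(t-s)}$ for a large constant $A$, exploiting that the index past which the step ``$u_{k}\le Ak^{-(t-s)}\Rightarrow u_{k+1}\le A(k+1)^{-(t-s)}$'' is valid remains bounded as $A\to\infty$, then drive $A$ down toward $d/c$ --- but its second stage is delicate when $t>1$, so I would prefer the explicit-iteration argument above.
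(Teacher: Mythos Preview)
The paper does not prove this lemma at all: it is stated in Appendix~A as a cited auxiliary result from \cite{Chung}, with no argument given. So there is no ``paper's own proof'' to compare against; your proposal is a self-contained reproof of Chung's classical recursion lemma, and the unrolling-plus-integration-by-parts route you sketch is the standard one and is correct in outline.

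One small slip worth flagging: you refer to ``the boundary case $t=s$ allowed by the hypotheses,'' but the statement here has $s<t$ strictly, so $t=s$ is not actually covered. This does not affect your main argument --- it was only a sanity check --- but drop or rephrase that remark. Otherwise the bookkeeping you identify (sum-to-integral, endpoint terms, the $e^{c(\Psi(k+2)-\Psi(k+1))}\to 1$ cancellation) is exactly where the care is needed, and your description of why each piece is $o(k^{-(t-s)})$ is accurate.
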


\section*{Appendix B Proofs of Theorem \ref{connectionofne}, Theorem \ref{existencenash} and Lemma \ref{FLIPSCHITZ}}
\textbf{Proof of Theorem \ref{connectionofne}:}
For any countable set $\widetilde{\Theta_{i}} \subseteq \Theta_{i}$, by the property of conditional expectation, we have
\begin{align}
&\mathbb{E}\left[\sup _{\theta_{i} \in \widetilde{\Theta_{i}}} J_{i}^{(\theta_{i}, \theta_{-i,\tau_K})}(t)-J_{i}^{(\theta_{i,\tau_K}, \theta_{-i,\tau_K})}(t)\biggm|\theta_{k},k=1,...,K\right] \notag\\
=&\mathbb{E}\left\{\mathbb{E}\left[\sup _{\theta_{i} \in \widetilde{\Theta_{i}}}J_{i}^{(\theta_{i}, \theta_{-i,\tau_K})}(t)-J_{i}^{(\theta_{i,\tau_K}, \theta_{-i,\tau_K})}(t)\biggm|\tau_K,\theta_{k},k=1,...,K\right]\biggm|\theta_{k},k=1,...,K\right\}\notag\\
=&\sum_{k=1}^{K}\mathbb{E}\left[\sup _{\theta_{i} \in \widetilde{\Theta_{i}}}J_{i}^{(\theta_{i}, \theta_{-i,\tau_K})}(t)-J_{i}^{(\theta_{i,\tau_K}, \theta_{-i,\tau_K})}(t)\biggm|\tau_K=k, \theta_{k},k=1,...,K\right] \notag\\
&P\bigg\{\tau_K=k\biggm|\theta_{k},k=1,...,K\bigg\}\notag\\
=&\sum_{k=1}^{K}\mathbb{E}\left[\sup _{\theta_{i} \in \widetilde{\Theta_{i}}}J_{i}^{(\theta_{i}, \theta_{-i,k})}(t)-J_{i}^{(\theta_{i,k}, \theta_{-i,k})}(t)\biggm|\tau_K=k, \theta_{k},k=1,...,K\right]\notag\\
&P\bigg\{\tau_K=k\biggm|\theta_{k},
k=1,2,...,K\bigg\}.
\label{conditionalexpectation}
\end{align}
Noting that $\tau_{K}$ is independent of $\theta_{k}$, it follows that $P\bigg\{\tau_K=k\biggm|\theta_{k},k=1,...,K\bigg\}=P\bigg\{\tau_K=k\bigg\}$.
It's easy to see that
\begin{align}
&\mathbb{E}\left[\sup _{\theta_{i} \in \widetilde{\Theta_{i}}}J_{i}^{(\theta_{i}, \theta_{-i,k})}(t)-J_{i}^{(\theta_{i,k}, \theta_{-i,k})}(t)\biggm|\tau_K=k, \theta_{k},k=1,...,K\right]\notag\\
=&\mathbb{E}\left[\sup _{\theta_{i} \in \widetilde{\Theta_{i}}}J_{i}^{(\theta_{i}, \theta_{-i,k})}(t)-J_{i}^{(\theta_{i,k}, \theta_{-i,k})}(t)\biggm|\theta_{k},k=1,...,K\right].\label{independence}
\end{align}
From the definition of total reward function, we have $$\mathbb{E}\left[\sup _{\theta_{i} \in \widetilde{\Theta_{i}}}J_{i}^{(\theta_{i}, \theta_{-i,k})}(t)-J_{i}^{(\theta_{i,k}, \theta_{-i,k})}(t)\biggm|\theta_{k},k=1,...,K\right]
=\sup _{\theta_{i} \in \widetilde{\Theta_{i}}}J_{i}^{(\theta_{i}, \theta_{-i,k})}(t)-J_{i}^{(\theta_{i,k}, \theta_{-i,k})}(t)
.$$
This together with (\ref{independence}) gives
\begin{equation}\nonumber
\begin{aligned}
&\mathbb{E}\left[\sup _{\theta_{i} \in \widetilde{\Theta_{i}}}J_{i}^{(\theta_{i}, \theta_{-i,k})}(t)-J_{i}^{(\theta_{i,k}, \theta_{-i,k})}(t)\biggm|\tau_K=k, \theta_{k},k=1,...,K\right]\\
=&\sup _{\theta_{i} \in \widetilde{\Theta_{i}}}J_{i}^{(\theta_{i}, \theta_{-i,k})}(t)-J_{i}^{(\theta_{i,k}, \theta_{-i,k})}(t).
\end{aligned}
\end{equation}
Combining the above inequality with (\ref{conditionalexpectation}) gives
\begin{align}
&\mathbb{E}\left[\sup_{\theta_{i}\in \widetilde{\Theta_{i}}}J_{i}^{(\theta_{i}, \theta_{-i,\tau_K})}(t)-J_{i}^{(\theta_{i,\tau_K}, \theta_{-i,\tau_K})}(t)\biggm|\theta_{k},k=1,...,K\right] \notag\\
=&\sum_{k=1}^{K}\left(\sup_{\theta_{i} \in \widetilde{\Theta_{i}}}J_{i}^{(\theta_{i}, \theta_{-i,k})}(t)-J_{i}^{(\theta_{i,k}, \theta_{-i,k})}(t)\right)P\bigg\{\tau_K=k\bigg\} \notag\\
=&\frac{\sum_{k=1}^{K}\gamma_{k}\left(\sup \limits_{\theta_{i} \in \widetilde{\Theta_{i}}}J_{i}^{(\theta_{i}, \theta_{-i,k})}(t)-J_{i}^{(\theta_{i,k}, \theta_{-i,k})}(t)\right)}{\sum_{k=1}^{K}\gamma_{k}}.\notag
\end{align}
Noting that $\left(\left(\pi_{\theta_{i,k}}\right)_{i=1}^{N}\right)_{k=1}^{\infty}$ is a $\gamma_{k}$-weighted $\epsilon_{k}$-Nash equilibrium of the game $\Gamma$, it follows that
\begin{equation}\nonumber
\begin{aligned}
&\sup_{i \in \mathcal{N}}
\mathbb{E}\left[\sup_{\theta_{i}\in \widetilde{\Theta_{i}}}J_{i}^{(\theta_{i}, \theta_{-i,\tau_K})}(t)-J_{i}^{(\theta_{i,\tau_K}, \theta_{-i,\tau_K})}(t)\biggm|\theta_{k},k=1,...,K\right]\\
=& \sup_{i \in \mathcal{N}}\frac{\sum_{k=1}^{K}\gamma_{k}\left(\sup \limits _{\theta_{i} \in \widetilde{\Theta_{i}}}J_{i}^{(\theta_{i}, \theta_{-i,k})}(t)-J_{i}^{(\theta_{i,k}, \theta_{-i,k})}(t)\right)}{\sum_{k=1}^{K}\gamma_{k}}
\leqslant \epsilon_{K},
\end{aligned}
\end{equation}
then the theorem holds.
$\hfill\blacksquare$
\\ \hspace*{\fill} \\

\textbf{Proof of Theorem \ref{existencenash}:}
By Assumption \ref{assumption1} and  Lemma 3.1 in \cite{Philip}, we know that SVI$(F,\Theta)$ has a solution. By Assumption \ref{assumption0} and Assumption \ref{assumption3}, Lemma \ref{equivalenceofnefne} implies the equivalence between the Nash equilibrium problem and SVI$(F,\Theta)$. Therefore, the game has a Nash equilibrium.
$\hfill\blacksquare$
\\ \hspace*{\fill} \\

\textbf{Proof of Lemma \ref{FLIPSCHITZ}:}
For agent $i$, by Assumptions
\ref{assumption0}-\ref{assumption1}, Lemma $3.2$ in \cite{Zhang} and Proposition \ref{policygradient}, we have
\begin{equation}\nonumber
\begin{aligned}
\nabla_{\theta_{i}} V_{i}^{\theta}(s,t)=\sum_{l=t}^{\infty} \sum_{\tau=0}^{\infty} \gamma^{l+\tau-t} \int & r_{i}\left(s(l+\tau), a(l+\tau)\right)  \nabla_{\theta_{i}} \log \pi_{\theta_{i}} \left(a_{i}(l) \mid s(l)\right)\\
& \rho_{\theta, t: l+\tau}  \mathrm{d} s_{t+1} ... \mathrm{d} s_{l+\tau} \mathrm{d} a_{t+1}  ... \mathrm{d} a_{l+\tau},
\end{aligned}
\end{equation}
where $
\rho_{\theta, t:l+\tau}=\left[\prod_{u=t}^{l+\tau-1} \rho\left(s(u+1) \mid s(u), a(u)\right)\right] \left[\prod_{u=t}^{l+\tau} \pi_{\theta}\left(a(u) \mid s(u)\right)\right]
$
is the probability density function of the trajectory $\big(s(t), a(t),..., s(l+\tau), a(l+\tau)
\big)$ and the integral is  over all trajectories. Then, it follows that
\begin{align}\label{differencenabla}
&\left\|\nabla_{{\theta}_{i}} V_{i}^{\theta^{1}}(s,t)-\nabla_{{\theta}_{i}} V_{i}^{\theta^{2}}(s,t)\right\| \notag\\
=&\Bigg\| \sum_{l=t}^{\infty} \sum_{\tau=0}^{\infty} \gamma^{l+\tau-t} \Bigg(\int \left\{\nabla_{\theta_{i}} \log \pi_{\theta^{1}_{i}}\left(a_{i}(l) \mid s(l)\right)-\nabla_{\theta_{i}} \log \pi_{\theta^{2}_{i}}\left(a_{i}(l)  \mid s(l)\right)\right\}\notag\\
&\times r_{i}\left(s(l+\tau), a(l+\tau)\right)\rho_{\theta^{1}, t: l+\tau} +\int r_{i}\left(s(l+\tau), a(l+\tau)\right)  \nabla_{\theta_{i}} \log \pi_{\theta^{2}_{i}}\left(a_{i}(l) \mid s(l)\right)\notag\\
& \times \left(\rho_{\theta^{1}, t: l+\tau}-\rho_{\theta^{2}, t: l+\tau}\right) \Bigg) \mathrm{d} s_{t+1} ... \mathrm{d} s_{l+\tau} \mathrm{d} a_{t+1}  ... \mathrm{d} a_{l+\tau} \Bigg\| \notag\\
\leqslant &\sum_{l=t}^{\infty} \sum_{\tau=t}^{\infty} \gamma^{l+\tau-t} \Big(\int \left\|\nabla_{\theta_{i}} \log \pi_{\theta^{1}_{i}}\left(a_{i}(l)  \mid s(l)\right)-\nabla_{\theta_{i}} \log \pi_{\theta^{2}_{i}}\left(a_{i}(l) \mid s(l)\right)\right\| \notag\\
&\times \left|r_{i}\left(s(l+\tau), a(l+\tau)\right)\right| \rho_{\theta^{1}, t: l+\tau} \mathrm{d} s_{t+1} ... \mathrm{d} s_{l+\tau} \mathrm{d} a_{t+1}  ... \mathrm{d} a_{l+\tau}
+\int\left|r_{i}\left(s(l+\tau), a(l+\tau)\right)\right| \notag\\
&\times \left\|\nabla_{\theta_{i}} \log \pi_{\theta^{2}_{i}}\left(a_{i}(l) \mid s(l)\right)\right\|
\left|\rho_{\theta^{1}, t: l+\tau}-\rho_{\theta^{2}, t: l+\tau}\right|\Big) \mathrm{d} s_{t+1} ... \mathrm{d} s_{l+\tau} \mathrm{d} a_{t+1}  ... \mathrm{d} a_{l+\tau}.
\end{align}
Denote
\begin{align}
I_{1}=&\int\left|r_{i}\left(s(l+\tau), a(l+\tau)\right)\right| \left\|\nabla_{\theta_{i}} \log \pi_{\theta^{1}_{i}}\left(a_{i}(l)  \mid s(l)\right)-\nabla_{\theta_{i}} \log \pi_{\theta^{2}_{i}}\left(a_{i}(l) \mid s(l)\right)\right\| \notag\\
&\times \rho_{\theta^{1}, t: l+\tau} \mathrm{d} s_{t+1} ... \mathrm{d} s_{l+\tau} \mathrm{d} a_{t+1}  ... \mathrm{d} a_{l+\tau},\label{I11}\\
I_{2}=&\int\left|r_{i}\left(s(l+\tau), a(l+\tau)\right)\right| \left\|\nabla_{\theta_{i}} \log \pi_{\theta^{2}_{i}}\left(a_{i}(l) \mid s(l)\right)\right\|\notag\\
& \times
\left|\rho_{\theta^{1}, t: l+\tau}-\rho_{\theta^{2}, t: l+\tau}\right| \mathrm{d} s_{t+1} ... \mathrm{d} s_{l+\tau} \mathrm{d} a_{t+1}  ... \mathrm{d} a_{l+\tau}.\label{I22}
\end{align}
For the term $I_{1}$, by  Assumption \ref{assumption1}, we have
\begin{equation}\label{I1}
    I_{1} \leqslant U_{R}  L_{\Theta}  \left\|\theta^{1}_{i}-\theta^{2}_{i}\right\|.
\end{equation}
For the term $I_{2}$, denote $\mathcal{U}_{l+\tau}=\{u: u=t, ..., l+\tau\}$. From the definitions of  $\rho_{\theta^{1}, t: l+\tau}$ and $\rho_{\theta^{2}, t:l+\tau}$, we have
\begin{align}\label{density}
\rho_{\theta^{1}, t: l+\tau}-\rho_{\theta^{2}, t: l+\tau}=&\left[\prod_{u=t}^{l+\tau-1} \rho\left(s(u+1) \mid s(u), a(u)\right)\right] \notag\\
& \times\Bigg[\prod_{u \in \mathcal{U}_{l+\tau}} \pi_{\theta^{1}}\left(a(u) \mid s(u)\right)-
\prod_{u \in \mathcal{U}_{l+\tau}} \pi_{\theta^{2}}\left(a(u) \mid s(u)\right)\Bigg].
\end{align}
From Assumption \ref{assumption2} and Taylor expansion of $\prod_{u \in \mathcal{U}_{l+\tau}} \pi_{\theta}\left(a(u) \mid s(u)\right)$ near $\theta=\theta^{1}$, there exists some $\widetilde{\theta}=\lambda \theta^{1}+(1-\lambda) \theta^{2} $ and $\lambda \in[0,1]$, such that
\begin{align}\nonumber
&\left|\prod_{u \in \mathcal{U}_{l+\tau}} \pi_{\theta^{1}}\left(a(u) \mid s(u)\right)-\prod_{u \in \mathcal{U}_{l+\tau}} \pi_{\theta^{2}}\left(a(u) \mid s(u)\right)\right| \notag\\
=&\left|\left(\theta^{1}-\theta^{2}\right)^{\top}\left[\sum_{m \in \mathcal{U}_{l+\tau}} \nabla \pi_{\widetilde{\theta}}\left(a(m) \mid s(m)\right) \prod_{u \in \mathcal{U}_{l+\tau}, u \neq m} \pi_{\widetilde{\theta}}\left(a(u) \mid s(u)\right)\right]\right| \notag\\
= &\left|\left(\theta^{1}-\theta^{2}\right)^{\top}\left[\sum_{m \in \mathcal{U}_{l+\tau}} \nabla \log \pi_{\widetilde{\theta}}\left(a(m) \mid s(m)\right) \prod_{u \in \mathcal{U}_{l+\tau}} \pi_{\widetilde{\theta}}\left(a(u) \mid s(u)\right)\right]\right| \notag\\
\leqslant &\left\|\theta^{1}-\theta^{2}\right\| \sum_{m \in \mathcal{U}_{l+\tau}}\left\|\nabla \log \pi_{\widetilde{\theta}}\left(a(m) \mid s(m)\right)\right\| \prod_{u \in \mathcal{U}_{l+\tau}} \pi_{\tilde{\theta}}\left(a(u) \mid s(u)\right) \notag\\
\leqslant &\left\|\theta^{1}-\theta^{2}\right\| (l+\tau-t+1)  B_{\Theta} \sqrt{N}  \prod_{u \in \mathcal{U}_{l+\tau}} \pi_{\widetilde{\theta}}\left(a(u) \mid s(u)\right)\notag ,
\end{align}
 where $\nabla \log \pi_{\widetilde{\theta}}\left(a(u) \mid s(u)\right)=(\nabla_{\theta_{i}} \log \pi_{\theta_{i}}(a_{i}(u) \mid s(u))_{i=1}^{N}$.
 From (\ref{I22}), (\ref{density}), the above inequality and Assumption \ref{assumption2}, it follows that
\begin{align*}
I_{2}  \leqslant &\left\|\theta^{1}-\theta^{2}\right\|  U_{R} \sqrt{N}  B_{\Theta}^{2} \int\left[\prod_{u=t}^{l+\tau-1} \rho\left(s(u+1) \mid s(u), a(u)\right)\right] (l+\tau-t+1)\notag\\
 &\times \prod_{u \in \mathcal{U}_{l+\tau}} \pi_{\widetilde{\theta}}\left(a(u) \mid s(u)\right)  \mathrm{d} s_{t+1} ... \mathrm{d} s_{l+\tau} \mathrm{d} a_{t+1}  ... \mathrm{d} a_{l+\tau} \\
=&\left\|\theta^{1}-\theta^{2}\right\|  U_{R} \sqrt{N} B_{\Theta}^{2} (l+\tau-t+1).
\end{align*}
By (\ref{differencenabla}), (\ref{I1}) and the above inequality, we have
\begin{align*}
\left\|\nabla_{{\theta}_{i}} V_{i}^{\theta^{1}}(s,t)-\nabla_{{\theta}_{i}} V_{i}^{\theta^{2}}(s,t)\right\| &\leqslant \sum_{l=t}^{\infty} \sum_{\tau=0}^{\infty}\gamma^{l+\tau-t}\left(I_{1}+I_{2}\right)\\
 \leqslant & \sum_{l=t}^{\infty} \sum_{\tau=0}^{\infty} \gamma^{l+\tau-t}  \Bigg(U_{R} L_{\Theta}  \left\|\theta^{1}_{i}-\theta^{2}_{i}\right\|\\
&+(l+\tau-t+1) U_{R}  \sqrt{N}  B_{\Theta}^{2} \left\|\theta^{1}-\theta^{2}\right\|\Bigg)\\
\leqslant & \frac{1}{(1-\gamma)^{2}}  U_{R}  L_{\Theta} \left\|\theta^{1}_{i}-\theta^{2}_{i}\right\|+\frac{1+\gamma}{(1-\gamma)^{3}} U_{R}  \sqrt{N} B_{\Theta}^{2} \left\|\theta^{1}-\theta^{2}\right\| \ .
\end{align*}
This together with Assumption \ref{assumption1} and Proposition \ref{policygradient} gives
\begin{align*}
\left\|\nabla_{{\theta}_{i}} J_{i}^{\theta^{1}}(t)-\nabla_{{\theta}_{i}} J_{i}^{\theta^{2}}(t)\right\|
=&\left\|\int_{s \in \mathcal{S}}\left(\nabla_{{\theta}_{i}} V_{i}^{\theta^{1}}(s,t)-\nabla_{{\theta}_{i}} V_{i}^{\theta^{2}}(s,t)\right)\rho_{t}(s) \mathrm{d}s\right\|\\
\leqslant & \int_{s \in \mathcal{S}}
\left\|\nabla_{{\theta}_{i}} V_{i}^{\theta^{1}}(s,t)-\nabla_{{\theta}_{i}} V_{i}^{\theta^{2}}(s,t)\right\| \rho_{t}(s) \mathrm{d}s\\
\leqslant &\frac{1}{(1-\gamma)^{2}}  U_{R}  L_{\Theta} \left\|\theta^{1}_{i}-\theta^{2}_{i}\right\|+\frac{1+\gamma}{(1-\gamma)^{3}} U_{R}  \sqrt{N} B_{\Theta}^{2} \left\|\theta^{1}-\theta^{2}\right\| \ .
\end{align*}
Hence, we have
\begin{align*}
&\sum_{i=1}^{N} \left\|\nabla_{{\theta}_{i}} J_{i}^{\theta^{1}}(t)-\nabla_{{\theta}_{i}} J_{i}^{\theta^{2}}(t)\right\|^{2}\\
\leqslant & 2\sum_{i=1}^{N}\left(\frac{ U_{R} \cdot L_{\Theta}}{(1-\gamma)^{2}} \right)^{2} \left\|\theta^{1}_{i}-\theta^{2}_{i}\right\|^{2} +2 \left(\frac{1+\gamma}{(1-\gamma)^{3}}  U_{R} \sqrt{N}  B_{\Theta}^{2}\right)^{2} \left\|\theta^{1}-\theta^{2}\right\|^{2}\\
\leqslant &2\left(\frac{ U_{R}  L_{\Theta}}{(1-\gamma)^{2}} \right)^{2} \left\|\theta^{1}-\theta^{2}\right\|^{2} +2 \left(\frac{1+\gamma}{(1-\gamma)^{3}}  U_{R} \sqrt{N} B_{\Theta}^{2}\right)^{2} \left\|\theta^{1}-\theta^{2}\right\|^{2}\\
=&\left[2\left(\frac{ U_{R}  L_{\Theta}}{(1-\gamma)^{2}} \right)^{2}+2 \left(\frac{1+\gamma}{(1-\gamma)^{3}}  U_{R}  \sqrt{N} B_{\Theta}^{2}\right)^{2}\right]
\left\|\theta^{1}-\theta^{2}\right\|^{2},
\end{align*}
that is,
$$\left\|F(\theta^{1})-F(\theta^{2})\right\| \leqslant  L\left\|\theta^{1}-\theta^{2}\right\|.$$
$\hfill\blacksquare$

\vskip 0.2in
\bibliography{policygradientcomplete}

\end{document}